\author[1]{René van Bevern}
\author[2]{Rodney G. Downey}
\author[3]{Michael R. Fellows}
\author[4]{Serge~Gaspers}
\author[3]{Frances~A.~Rosamond}
\date{}
\affil[1]{Institut f\"ur Softwaretechnik und Theoretische Informatik,
  TU Berlin, Germany, \texttt{rene.vanbevern@tu-berlin.de}}
\affil[2]{Victoria University of Wellington, New Zealand,
  \texttt{rod.downey@vuw.ac.nz}}
\affil[3]{School of Engineering and IT, Charles Darwin University,
  Darwin, Australia,
  \texttt{\{michael.fellows,frances.rosamond\}@cdu.edu.au}}
\affil[4]{The University of New~South~Wales and NICTA, Sydney,
  Australia, \texttt{sergeg@cse.unsw.edu.au}}
\newcommand{\mcF}{\mathcal F}
\newcommand{\CNFSAT}{\textsc{CNF-Sat}}
\newcommand{\joint}{joint}
\newcommand{\coloneqq}{:=}
\newcommand{\mytheta}[3]{\mathop{\mathrm{Jcut}}\nolimits^{#1}_{#2}(#3)}
\newcommand{\normtheta}[3]{\mathop{\mathrm{Cut}}\nolimits^{#1}_{#2}(#3)}
\newcommand{\full}{ge\-ne\-ra\-tor-to\-tal}
\newcounter{mycounter}
\title{Myhill-Nerode methods for hypergraphs\thanks{A preliminary
    version of this article appeared in the proceedings of ISAAC
    2013~\cite{BevernFGR13}. This extended and revised version
    contains the full proof details, more figures, and corollaries to
    make the application of the Myhill-Nerode theorem for hypergraphs
    easier in an algorithmic setting.  Moreover, it provides a fix to
    the proof of the Myhill-Nerode theorem for graphs in the books of
    \citet{DF99,DF13}}}
\newcommand{\hgcw}{\textsc{Hy\-per\-graph Cut\-width}}
\newcommand{\ghtw}{\textsc{Generalized Hypertree Width}}
\newcommand{\htw}{\textsc{Hypertree Width}}
\newcommand{\fhtw}{\textsc{Fractional Hypertree Width}}
\newcommand{\cw}{\textsc{Cutwidth}}
\newcommand{\brc}[1]{\textsc{$#1$-HCW}}
\newcommand{\kghtw}[1]{\textsc{$#1$-GHTW}}
\newcommand{\khtw}[1]{\textsc{$#1$-HTW}}
\newcommand{\kfhtw}[1]{\textsc{$#1$-FHTW}}
\newcommand{\iw}{incidence treewidth}
\newcommand{\br}{incidence graph}
\newcommand{\crc}{canonical right congruence}
\newcommand{\bndg}[1]{$#1$-bound\-a\-ried graph}
\newcommand{\bndh}[1]{$#1$-bound\-a\-ried hypergraph}
\newcommand{\simtest}{\ensuremath{\sim_{\mathcal T}}}
\newcommand{\cmax}{\ensuremath{c_{\mathrm{max}}}}
\newcommand{\ularge}[1]{\ensuremath{\mathcal U^{\smash{\text{large}}}_{t,#1}}}
\newcommand{\ulargep}{\ensuremath{\mathcal U^{\smash{\text{large}}}_{t}}}
\newcommand{\ularget}[1]{\ensuremath{\mathcal U^{\smash{\text{large}}}_{2,#1}}}
\newcommand{\usmall}[1]{\ensuremath{\mathcal U^{\smash{\text{small}}}_{t,#1}}}
\newcommand{\usmallt}[1]{\ensuremath{\mathcal U^{\smash{\text{small}}}_{2,#1}}}
\newcommand{\usmallp}{\ensuremath{\mathcal U^{\smash{\text{small}}}_{t}}}
\newcommand{\hsmall}{\ensuremath{\mathcal H^{\smash{\text{small}}}_{t}}}
\newcommand{\hlarge}{\ensuremath{\mathcal H^{\smash{\text{large}}}_{t}}}
\newcommand{\hyperg}[1]{\ensuremath{\mathcal H(#1)}}
\newcommand{\wcg}[1]{$t$-boundaried hypergraph generator}
\newcommand{\passes}[1]{$k$-passes}
\DeclareMathOperator\Pos{Pos}
\DeclareMathOperator\lab{col}
\DeclareMathOperator\inc{inc}
\DeclareMathOperator\adj{adj}
\DeclareMathOperator\oplusc{\oplus_c}
\DeclareMathOperator\oplush{\oplus_h}
\DeclareMathOperator\congr{\sim_{\brc{k}}}
\newtheorem{theorem}{Theorem}[section]
\newaliascnt{conjcnt}{theorem}
\newaliascnt{excnt}{theorem}
\newaliascnt{obscnt}{theorem}
\newaliascnt{constrcnt}{theorem}
\newaliascnt{clacnt}{theorem}
\newaliascnt{lemcnt}{theorem}
\newaliascnt{defcnt}{theorem}
\newaliascnt{colcnt}{theorem}
\newtheorem*{dftheorem}{Myhill-Nerode Theorem for Graphs}
\newtheorem*{mntheorem}{Myhill-Nerode Theorem}
\newtheorem{corollary}[colcnt]{Corollary}
\newtheorem{conjecture}[conjcnt]{Conjecture}
\theoremstyle{definition}
\newtheorem{observation}[obscnt]{Observation}
\newtheorem{definition}[defcnt]{Definition}
\newtheorem{construction}[constrcnt]{Construction}
\newtheorem{cla}[clacnt]{Claim}
\newtheorem{lemma}[lemcnt]{Lemma}
\theoremstyle{remark}
\newtheorem{example}[excnt]{Example}
 \newcommand{\decprob}[4]{%
 {\def\descriptionlabel##1{\hspace\labelsep\quad{}\it{}##1}%
 \par\vspace{\topsep}\noindent%
 \begin{compactdesc}
 \item[\textsc{#1}]
 \item[Input:] #2
 \item[Question:] #3
\end{compactdesc}
}\vspace{\topsep}}
\tikzstyle{red}=[circle,draw=black,fill=gray!20,minimum size=5mm, inner sep=1pt]
\tikzstyle{blue}=[circle,draw=black,fill=white,minimum size=5mm, inner sep=1pt]
\def\NAT@spacechar{~}%
\begin{document}
\maketitle
\vspace{-2em}
\paragraph{Abstract.} We give an analog of the Myhill-Nerode methods
from formal language theory for hypergraphs and use it to derive the
following results for two NP-hard hypergraph problems.
  \begin{itemize}
  \item We provide an algorithm for testing whether a hypergraph has
    cutwidth at most~$k$ that runs in linear time for constant~$k$. In
    terms of parameterized complexity theory, the problem is
    \emph{fixed-parameter linear} parameterized by~$k$.

  \item We show that it is not expressible in monadic second-order
    logic whether a hypergraph has bounded (fractional, generalized)
    hypertree width.  The proof leads us to conjecture that, in terms
    of parameterized complexity theory, these problems are W[1]-hard
    parameterized by the \emph{incidence treewidth} (the treewidth of
    the incidence graph).
  \end{itemize}
  \noindent Thus, in the form of the Myhill-Nerode theorem for
  hypergraphs, we obtain a method to derive linear-time algorithms and
  to obtain indicators for intractability for hypergraph problems
  parameterized by incidence treewidth.

  In an appendix, we point out an error and a fix to the proof of the
  Myhill-Nerode theorem for graphs in Downey and Fellow's book on
  parameterized complexity.


\section{Introduction}\label{sec:intro}
There are two prevalent algorithmic techniques for solving NP-hard
problems in linear time on graphs of bounded treewidth---a measure
for the ``tree-likeness'' of a graph:
{\begin{description}
\item[Technique~1.] Compute a \emph{tree decomposition}---a tree-like
  representation---of the input graph in linear time~\citep{Bod96} and
  use dynamic programming from the leaves to the root of the tree
  decomposition.
\item[Technique~2.] Express the graph property to be decided in
  monadic second-order logic of graphs; the expression can be turned
  into a linear-time algorithm deciding the graph
  property~\citep[Theorem~6.4(1)]{CE12}.
\end{description}
For a primer on these algorithmic techniques, we refer to
\citet[Chapter~10]{Nie06}.

In some cases, graph problems do not easily give in to these standard
techniques. A third technique helps finding linear-time algorithms on
graphs of bounded treewidth or to prove the inapplicability of the
above standard techniques~\citep{AbrahamsonF91, BFW92, GNNW13}:
similarly to how regular languages can be recognized by finite
automata, some graph problems on graphs of bounded treewidth can be
solved in linear time by tree automata~\citep[Section~12.7]{DF13}. In
fact, many of the dynamic programming algorithms on tree
decompositions used in Technique~1 are based on a standard approach
that mimics tree automata~\citep{BFW92}. Moreover, Technique~2 is
based on the fact that an expression in monadic second-order logic can
be turned into a tree automaton~\citep[Chapter~6]{CE12}. Disproving
the existence of a tree automaton for a problem therefore shows that
it is presumably not straightforward to solve the problem on graphs of
bounded treewidth using Technique~1 and even impossible using
Technique~2.

A sufficient and necessary condition for the existence of a tree
automaton deciding some graph problem can be given by an adaption of the
Myhill-Nerode theorem from formal language theory to
graphs~\citep[Section~12.7]{DF13}, which helped gain insight into
the following graph problems:

\begin{description}
\item[\textsc{Cutwidth}.] Testing a graph for bounded cutwidth can be done
  in linear time~\citep{AbrahamsonF91}. \citet*{ThilikosSB05} later
  gave a dynamic programming algorithm that is significantly more
  technical, but has the advantage of constructing a solution instead
  of only answering whether a solution exists.

\item[\textsc{Bandwidth}.] The graph property of having bounded
  bandwidth is not recognizable by a tree
  automaton~\citep{AbrahamsonF91}.  Note that this unconditional result
  significantly strengthens the previously known NP-hardness of the
  problem on trees \citep{GGJK78} (which of course also excludes
  finite-state solvability of the problem for trees, but only under the
  assumption P${}\ne{}$NP).

\item[\textsc{Triangulating Colored Graphs}.] A tree automaton cannot
  decide whether a colored graph can be triangulated in such a way
  that adjacent vertices have distinct colors~\citep{BFW92}. The
  problem is known as \textsc{Perfect Phylogeny} in the context of
  molecular biology and later turned out to be
  W[1]-hard~\citep{BFH94,BFHW00} parameterized by the treewidth~$t$, that
  is, not solvable in $O(n^c)$~time for any constant~$c$ independent
  of~$t$ under the widely accepted parameterized complexity assumption
  FPT${}\neq{}$W[1].
\end{description}

\noindent Our work extends the graph-theoretic analog of the
Myhill-Nerode characterization of regular languages to hypergraphs. In
this way, we provide a method to derive linear-time algorithms (or to
obtain an indication for intractability) for hypergraph problems on
hypergraphs with bounded \emph{\iw} (treewidth of the incidence
graph).  Thus, our work is tightly connected to the existence of
\emph{fixed-parameter algorithms}---a~rising technique that allows for
solving NP-hard problems exactly \emph{and} efficiently when certain
parameters of the input data are small~\cite{DF13, Nie06, FG06}. From
this point of view, \iw{} is an interesting hypergraph parameter,
since the \iw{} of a hypergraph is not greater than the treewidth of
its primal or dual graph (two commonly used treewidth generalizations
for hypergraphs) but can be arbitrarily
smaller~\cite{KolaitisV00,SamerS10}.

Applying Myhill-Nerode methods to hypergraphs, we obtain results for
the problems \hgcw{} and \textsc{(Generalized, Fractional) Hypertree
  Width}, which will be formally defined in \autoref{sec5} and
\autoref{sec6}, respectively. 

\subsection{Related work}

\paragraph{Generalizations of the Myhill-Nerode theorem.}
The Myhill-Nerode theorem as sufficient and necessary condition for a
formal language being regular is due to \citet{Myh57} and
\citet{Ner58}. Since then, analogs of the Myhill-Nerode theorem were
provided for graphs of bounded treewidth~\citep{AbrahamsonF91},
matroids of bounded branchwidth~\citep{Hli05}, graphs of bounded
rankwidth~\citep{GH10}, and edge- and vertex-colored graphs of bounded
treewidth and cliquewidth~\citep[Sections~4.2.2 and~4.4.2]{CE12}.

\paragraph{Finite-state approaches to solving graph problems for graphs
  of bounded pathwidth and treewidth.}

The earliest, and seminal work, applying ideas from finite automata
theory to graph problems was the \citeyear{BernLW85} paper of
\citet*{BernLW85} based on $k$-terminal recursively defined families
of graphs.  These ideas were quickly taken up and extended in many
directions by many different groups of researchers with overlapping as
well as independent results obtained in a flurry of activity.  This
includes the influential \citeyear{WimerHL85} work of
\citet*{WimerHL85} (see also \citeauthor{Wimer87}'s \citeyear{Wimer87}
Ph.D. Thesis~\citep{Wimer87}), and \citet{MahajanP94}.  Finite-state
and Myhill-Nerode-related methods were explored in regards of
computing minor order obstruction sets by \citet{FLfocs89} and by
\citet{LagergrenA91}.  The regularity (in the sense of
\citet*{BernLW85}---finite-state dynamic programming
multiplication tables) of bounded treewidth and pathwidth was shown by
\citet{BodlaenderK96} and, independently, by \citet{LagergrenA91}.

\looseness=-1 In many cases, this early work circulated in some form (e.g.,
Technical Reports) years in advance of its eventual publication,
making the historical record murky---but it was an exciting time for
bounded treewidth and pathwidth algorithmics.  The period is ably
surveyed by \citet*{BoriePT08}, where many more references to early
work in the area can be found.

\paragraph{Communication complexity and generalizations of the
  Myhill-Ne\-rode theorem.}  Communication complexity was introduced by
\citet{Yao79} and measures the amount of information needed to be
transferred between two processors for computing a function~$f(x,y)$
when one processor receives~$x$ and the other processor receives~$y$.  A
pioneering and somewhat overlooked \citeyear{LW86} paper by \citet{LW86}
investigated the communication complexity of graph problems by studying
the following question: assume that $G$~is a graph that can be obtained
by ``gluing'' together two graphs~$G_1$ and~$G_2$ along a ``boundary''
of $t$~vertices,  what is the minimum amount~$f(t)$ of information
needed to be exchanged between two processors for deciding whether~$G$
has a certain property (for example, being Hamiltonian) when one
processor receives~$G_1$ and the other processor receives~$G_2$ as
input?

\looseness=-1 Although the work of \citet{LW86} is completely
unrelated to graphs of bounded treewidth, their notion is exactly what
in our article is called the \emph{large universe} of $t$-boundaried
graphs that can be glued together using an operator~$\oplusc$ (for the
precise definitions we refer to \autoref{colorglue} and
\autoref{def:ulargesmall} in \autoref{sec:mnhg}).  Therefore, %
the Myhill-Nerode approach yields insights into the
communication complexity of graph problems not only in the world of
graphs of bounded treewidth.  More specifically, it allows
for proving or disproving that the minimum amount of information
required to be transferred can be bounded by a function that only
depends on the boundary size~$t$.
However, the Myhill-Nerode approach is not limited to the
investigation of graph problems: in full generality,
assume that one has
\begin{enumerate}[1)]
\item a universe~$\mathcal{U}$ of mathematical objects of whatever sort,
  on which there is a partially defined operation $\mu\colon \mathcal{U}
  \times \mathcal{U} \rightarrow \mathcal{U}$ (sometimes called a
  \emph{partial groupoid}), and
\item a property~$\mathcal{P} \subseteq \mathcal{U}$ of interest of
  these objects.
\end{enumerate}
Then, one can define the \emph{canonical Myhill-Nerode equivalence
  relation $\sim_{\mathcal{P}}$} induced by~$\mathcal{P}$ on
$\mathcal{U}$ mimicking the formal language setting (there, $\mathcal{U}
= \Sigma^{*}$ and $\mu$ is string concatenation): $x \sim_{\mathcal{P}}
y$ if and only if, for all $z \in \mathcal{U}$, $\mu(x,z) \in
\mathcal{P}$ if and only if $\mu(y,z) \in \mathcal{P}$ (assuming $\mu$
is defined in both cases).  The analogy to the formal language setting
naturally leads to following interesting question: for which properties
(or \emph{classes of properties}) $\mathcal{P}$ does the canonical
equivalence relation~$\sim_{\mathcal P}$ have a finite number of
equivalence classes?

This abstract perspective often turns out to have powerful and elegant
algorithmic connections, \emph{as well as being of intrinsic interest in
  itself}.  For example, it is intrinsically interesting that if
$\mathcal{U}$ is the (large) universe of arbitrary $t$-boundaried graphs
(of unbounded treewidth) and $\mu$ is the $\oplusc$ gluing operation
defined in \autoref{colorglue}, then for any fixed~$t$ and any graph
property~$\mathcal{P}$ describable in monadic second-order logic, the
canonical equivalence relation~$\sim_{\mathcal P}$ has a finite number
of equivalence classes.  This statement is stronger than that of
Courcelle's theorem~\citep[Theorem~6.3(2)]{CE12}, which proves the
statement for the universe of graphs of treewidth at most~$t$, and was
first proved in the \citeyear{FellowsA89} manuscript of
\citet{FellowsA89}.  

The proof is exposed in later monographs of~\citet{DF99,DF13} and
exploits induction on the formula structure as well as the \emph{method
  of test sets}, which we will in the following apply also to hypergraph
problems.  Note that it is not obvious how to prove the above statement
in full generality using other techniques that are frequently applied to
solve graph problems on graphs of bounded treewidth, like dynamic
programming or careful bookkeeping about partial solutions, since
\emph{there is nothing to dynamically program on} in the universe of
\emph{unbounded} treewidth graphs.  Hence, the method of test sets seems
to be essential if one is interested in results related to communication
complexity (as we are, secondarily).  Some further discussion of these
different approaches and their virtues and weaknesses can be found in
the concluding section of this article.

\paragraph{Hypergraph Cutwidth.} \hgcw{} is a natural generalization of
the NP-complete~\cite{Gavril77} \cw{} problem and asks whether a
hypergraph has cutwidth at most~$k$.  For a formal definition, we refer
to \autoref{sec5}. In the context of VLSI design, \hgcw{} is known as
\textsc{Board Permutation}~\citep{MillerS91}.  Moreover, \hgcw{}
naturally arises in solving \CNFSAT{} in the context of automatically
testing digital hardware~\citep{PCK99,WCYK01}.

For the special case of \cw{} on graphs, several fixed-parameter
algorithms are known~\cite{AbrahamsonF91, FellowsL92, FellowsL94,
  ThilikosSB05, BodlaenderFT09}.  \citet{CahoonS83} showed algorithms
for \hgcw{} with~${k\le 2}$ running in $O(n)$~time for~$k=1$ and running
in $O(n^3)$~time for~$k=2$ on $n$-vertex hypergraphs.  For
arbitrary~$k$, \citet{MillerS91} designed an algorithm running in
$O(n^{k^2+3k+3})$~time.  Moreover, \citet{Nagamochi12} presented a
framework for solving cutwidth-related graph problems in
$n^{O(k)}$~time.

\paragraph{Hypertree Width.} \looseness=-1 \htw{}, \ghtw{}, and \fhtw{}
are the problems of checking whether a hypergraph has (generalized,
fractional) hypertree width~$k$. All three measures are generalizations
of treewidth to hypergraphs and formally defined in
\autoref{sec6}. %
It is known that \htw{} is W[2]-hard parameterized
by~$k$~\citep{GottlobGMSS05} and that \ghtw{} remains NP-hard even
for~$k=3$~\citep{GottlobMS09}. \citet{Marx10} expects \fhtw{} also to
be NP-hard for constant~$k$.  Hence, the computation of these width
parameters is presumably not fixed-parameter tractable parameterized
by~$k$ (that is, presumably not solvable in $n^c$~time for any
constant~$c$ independent of~$k$). Hence, it makes sense to investigate
whether the problems are fixed-parameter tractable with respect to
larger parameters~\citep{Nie10, KN12, FJR13}, like \iw{}.

\subsection{Our results and organization of this paper}
In \autoref{sec:prelims}, we
introduce the necessary graph and hypergraph notation, formally define
treewidth, \iw{}, and tree automata.

\looseness=-1 In \autoref{sec:mnhg}, we prove a Myhill-Nerode theorem
for hypergraphs of bounded \iw{}. Moreover, the section discusses how
the Myhill-Nerode theorem for hypergraphs yields linear-time
algorithms and excludes the possibility for monadic second-order logic
expressions for hypergraph problems.

In \autoref{sec5}, we exploit the Myhill-Nerode theorem for
hypergraphs to show that \hgcw{} can be solved in $O(n+m)$~time for
constant~$k$, thus showing \hgcw{} to be \emph{fixed-parameter linear}
parameterized by~$k$.

\looseness=-1 In \autoref{sec6}, we exploit the Myhill-Nerode theorem
to show that \htw{}, \ghtw{}, and \fhtw{} are not decidable by a
finite tree automaton and, hence, not expressible in monadic
second-order logic. Moreover, we obtain an indication that they are
not fixed-parameter tractable parameterized by \iw{}.

\section{Preliminaries}\label{sec:prelims}

\paragraph{Graphs and hypergraphs.}
A \emph{hypergraph}~$H$ is a pair~$(V,E)$, where $V(H):=V$~is a set of
\emph{vertices} and $E(H):=E$~is a set of \emph{hyperedges} such that
$e\subseteq V$ for each~$e\in E$. In this work, we allow $E$~to be a
multiset and there may be singleton and empty hyperedges. If not
stated otherwise, we use $n\coloneqq{}|V|$ and~$m\coloneqq{}|E|$. Two
hypergraphs $G$~and~$H$ are isomorphic and we write $G\cong H$ if
there is a bijection~$f\colon V(G)\to V(H)$ such that $e$~is an edge
with multiplicity~$i$ of~$G$ if and only if~$\{f(v)\mid v\in e\}$ is
an edge with multiplicity~$i$ of~$H$. The bijection~$f$ is called
\emph{(hypergraph) isomorphism}.

A \emph{graph} is a hypergraph in which every edge has cardinality two
and is present at most once.  Two vertices~$v,w\in V$ are
\emph{adjacent} or \emph{neighbors} if~$\{v,w\}\in E$.  The
\emph{(open) neighborhood}~$N_G(v)$ of a vertex~$v\in V$ in a
graph~$G$ is the set of vertices that are adjacent to~$v$%
. If the graph~$G$ is clear from the context, we drop the subscript~$G$.
A subset~$S\subseteq V$ is an \emph{independent set} if
no two vertices in~$S$ are adjacent in~$G$.

The \emph{primal graph} of a hypergraph~$H$, denoted $\mathcal{G}(H)$,
is the graph with vertex set~$V$ that has an edge $\{u, v\}$ if
$u$~and~$v$ are together in some hyperedge in~$H$.  It is sometimes
called the Gaifman graph of~$H$. The \emph{incidence graph} of a
hypergraph~$H$, denoted $\mathcal{I}(H)$, is the bipartite
graph~$(V',E')$ with vertex set $V'=V \cup E$ and such that, for
each~$v\in V$ and~$e\in E$, there is an edge~$\{v, e\} \in E'$ if and
only if~$v\in e$.

\paragraph{Graph decompositions.}
A \emph{tree decomposition}~$(T,\beta)$ for a graph~$G=(V,E)$ consists
of a rooted tree~$T$ and a mapping~$\beta\colon T\to 2^V$ of each
\emph{node}~$x$ of the tree~$T$ to a subset~$V_x:=\beta(x)\subseteq
V$, called \emph{bag}, such that
\begin{enumerate}[i)]
\item\label{treedec1} for each vertex~$v\in V$, there is a node~$x$
  of~$T$ with~$v\in V_x$,
\item\label{treedec2} for each edge~$\{u,w\}\in E$, there is a node~$x$
  of~$T$ with $\{u,w\}\subseteq V_x$,
\item\label{treedec3} for each vertex~$v\in V$, the nodes~$x$ of~$T$
  for which~$v\in V_x$ induce a subtree in~$T$.
\end{enumerate}
The \emph{width} of a tree decomposition is the size of
its largest bag minus one. The \emph{treewidth} of~$G$ is the minimum
width over all tree decompositions of~$G$.  The notions of \emph{path
  decomposition} and \emph{pathwidth} of~$G$ are defined in the same
way, except that $T$~is restricted to be a path.  The \emph{\iw{}} of
a hypergraph is the treewidth of its incidence graph.

\paragraph{Graph and hypergraph representations.}
When speaking about linear-time solvability, it is crucial to agree on
the graph and hypergraph representations we expect as input. We assume
that graphs are represented as \emph{adjacency lists}, that is, as a
list of vertices, each being associated with a list of its neighbors.

We assume hypergraphs to be given as \emph{hyperedge lists}, that is,
as a list of hyperedges, each being a list of the vertices it
contains. Note that a hypergraph given as hyperedge list is
linear-time transformable into an adjacency list of its incidence
graph and vice versa. Moreover, a hyperedge list is computable in
linear time from a hypergraph given as incidence matrix.

\paragraph{Tree automata.} A \emph{(deterministic leaf-to-root
  finite-state) tree automaton} is a
quintuple~$(Q,\Sigma,\delta,q_0,F)$, where $Q$~is a finite set of
\emph{states}, $\Sigma$~is a finite \emph{alphabet}, $q_0\in Q$ is the
\emph{start state} and $F\subseteq Q$~is the set of \emph{final
  states}, and, finally $\delta\colon(\Sigma\times Q)\cup(\Sigma\times
Q\times Q)\to Q$~is the \emph{transition function}. The set of all
rooted binary trees with vertices labeled using letters from~$\Sigma$
is denoted by~$\Sigma^{**}$.  We assume that each child of a node of a
rooted tree~$T\in\Sigma^{**}$ is fixed to be either a ``left'' or a
``right'' child.

A tree automaton processes a tree~$T\in\Sigma^{**}$ starting at its
leaves to determine the state at the root node of~$T$ as follows: the
state at a leaf node~$x$ of~$T$ with label~$a\in\Sigma$ is determined
by~$\delta(a,q_0)$. The state at a node~$x$ of~$T$ with label~$a$ and
a single child node~$y$ is determined by~$\delta(a,q_y)$, where
$q_y\in Q$~is the state at~$y$.  The state at a node~$x$ of~$T$ with
label~$a$, a left child~$y$, and a right child~$z$ is determined
by~$\delta(a,q_y,q_z)$, where $q_y,q_z\in Q$ are the states of~$y$
and~$z$, respectively.

A tree automaton \emph{accepts} a tree~$T\in\Sigma^{**}$ if its
state at the root node of~$T$ is in~$F$. A tree automaton~$A$
\emph{recognizes} a tree language~$L\subseteq\Sigma^{**}$ if, for
every tree~$T\in\Sigma^{**}$, the automaton~$A$ accepts~$T$ if and
only if~$T\in L$.

\looseness=-1 Note that an ordinary finite automaton for
\emph{words}~$w\in\Sigma^*$ over the alphabet~$\Sigma$ can be
understood as a tree automaton on rooted unary trees (paths).

\section{Myhill-Nerode for hypergraphs}\label{sec:mnhg}

\noindent The aim of this section is to generalize the Myhill-Nerode
theorem from formal languages to hypergraphs. To this end, we first
briefly recall the Myhill-Nerode theorem for formal languages in
\autoref{sec:languages}.

\autoref{sec:hygra} will prove the Myhill-Nerode theorem for
hypergraphs. Before, \autoref{sec:golgra} will generalize the
Myhill-Nerode theorem for graphs~\citep[Section~12.7]{DF13} to
vertex-colored graphs, since the Myhill-Nerode theorem for hypergraphs
will exploit that every hypergraph can be represented as its incidence
graph with two vertex types (or ``colors''): one type representing
hyperedges and one type representing the vertices of a hypergraph.

In \autoref{sec:fpt-mso}, we finally describe how our Myhill-Nerode
theorem yields linear-time algorithms for hypergraph problems and its
relation to the expressibility of hypergraph properties in monadic
second-order logic.

\subsection{Formal languages}\label{sec:languages}

The Myhill-Nerode theorem is a tool for proving or disproving that a
formal language is \emph{regular}, that is, decidable by a finite
automaton.  The theorem states that a language is regular if and only
if its so-called \emph{\crc} has a finite number of equivalence
classes.

\begin{definition}\label{def:crclang}
  Let $L\subseteq\Sigma^*$ be a language. The \emph{\crc}~$\sim_L$ is
  defined as follows: for~$v,w\in\Sigma^*,
  v\sim_L w:\iff\forall x\in\Sigma^*:vx\in L\iff wx\in L$,
  where $vx$~is the concatenation of~$v$ and~$x$. 
\end{definition}

\begin{example}
  Consider the language $L:=\{a^ib^j\mid i,j\in\mathbb
  N\}\subseteq\{a,b\}^*$ consisting of words starting with an
  arbitrary number of~$a$'s and ending in an arbitrary number
  of~$b$'s. Then, $a\sim_Laa$. However, $a\not\sim_Lab$, since, for
  example, $aa\in L$ but $aba\notin L$.
\end{example}

\noindent Obviously, for a language~$L\in\Sigma^*$, the
\crc{}~$\sim_L$ is an equivalence relation, that is, it is reflexive,
symmetric, and transitive. The \emph{index} of an equivalence relation
is the number of its equivalence classes.

\begin{mntheorem}
  A language~$L\subseteq\Sigma^*$ is recognizable by a finite
  automaton if and only if the \crc{}~$\sim_L$ has finite index.
\end{mntheorem}

\noindent Thus, the Myhill-Nerode theorem gives a necessary and
sufficient condition for a language being recognizable by a
finite automaton.

\subsection{Colored graphs}\label{sec:golgra}
\looseness=-1 In order to show the Myhill-Nerode theorem for hypergraphs, we first
present it for vertex-colored graphs. %
\citet[Section~12.7]{DF13} already proved the Myhill-Nerode
theorem for graphs without colors; \autoref{fig:mnexplain} gives a
rough overview of the technique. We will see that lifting it to
vertex-colored graphs is straightforward. Indeed,
\citet[Section~4.2.2]{CE12} provide an even more general Myhill-Nerode
theorem for graphs with vertex colors as well as edge colors. For our
purposes, however, a vertex-colored variant is sufficient and we show
it here as an introduction into the necessary concepts towards a
Myhill-Nerode theorem for hypergraphs.

\begin{figure}
  \centering\footnotesize
  \def\myshift#1{\raisebox{1ex}}
  \def\myoshift#1{\raisebox{0.5ex}}
  \begin{tikzpicture}[auto, shorten >=1pt, shorten <=1pt]
    \tikzstyle{parse}=[circle, inner sep=0pt, minimum size=0mm]
    \tikzstyle{every node}=[inner sep=0pt]
    \node (graph) at (0,1.5) {graph~$G$};
    \node[red,minimum size=2cm] (v1) at (0,0) {$G=(V,E)$};

    \node (treedec) at (3,1.5) {tree decomposition};
    \draw [->] (graph) -> (treedec);

    \begin{scope}[shift={(3,1)}]
      \node[red] (v1) at (0,0) {$V_1$}; 
      \node[red,below right=5mm of v1] (v2) {$V_3$};
      \node[red,below left=5mm of v1] (v3) {$V_2$};
      \node[red,below=5mm of v3] (v4) {$V_4$};
      \draw (v1)--(v2); \draw (v1)--(v3); \draw (v3)--(v4);
    \end{scope}

    \node (parsetree) at (6.25,1.5) {parse tree~$T_{G'}$}; 
    \draw [->] (treedec) -> (parsetree);
    \begin{scope}[shift={(6.25,1)}]
      \node[parse] (v1) at (0,0) {$\oplus$};
      \node[parse,below      right=2mm of v1] (v2) {$e$};
      \node[parse,below right=2mm of      v2] (v3) {$\oplus$};
      \node[parse,below left=2mm of v3] (v4)      {$\dots$};
      \node[parse,below right=2mm of v3] (v8) {$\dots$};
      \node[parse,below left=2mm of v1] (v5) {$\gamma$};
      \node[parse,below left=2mm of v5] (v6) {$\oplus$};
      \node[parse,      below left=2mm of v6] (v9) {$e$};
      \node[parse, below left=2mm of      v9] (v11) {$\dots$};
      \node[parse, below right=2mm of v6] (v12)  {$\dots$};

      \draw (v1) -- (v2);
      \draw (v2) -- (v3);
      \draw (v3) -- (v4);

      \draw (v1) -- (v5);
      \draw (v5) -- (v6);

      \draw (v3) -- (v8);
      \draw (v6) -- (v9);
      \draw (v6) -- (v12);

      \draw (v9) -- (v11);
    \end{scope}
    \draw [<-,postaction={decorate,decoration={text along
        path,text align=center,text={|\myshift|generates
          graph~{\it $G'$} isomorphic to}}}] (graph) to[out=10,in=170]
    (parsetree);

    \begin{scope}[shift={(9,-0.5)}]
      \node[rectangle, inner sep=1mm, draw=black,fill=gray!20] (a) at (0,0.5)
      {$\mathcal A_L$}; 
      \draw[->,postaction={decorate,decoration={text along
          path,text align=center,text={|\myoshift|input to}}}] (parsetree) --(-1,2) to[out=0,in=90] (0,1) -- (a);
      \node[below=1mm of a, align=center] {tree automaton for\\graph problem~$L$};
    \end{scope}
  \end{tikzpicture}
  \caption{Solving a graph problem~$L$ using a tree automaton: from a
    graph~$G$ with bounded treewidth, a minimum width tree decomposition can be
    computed in linear time~\cite{Bod96}. The tree decomposition can
    be turned into a size-$O(n)$ expression over a
    set~$\{\emptyset,e,u,\gamma,i,\oplus\}$ of operators in linear
    time such that the value of the expression is a graph~$G'$
    isomorphic to~$G$~\cite[Theorem~12.7.1]{DF13}. The parse tree or
    expression tree~$T_{G'}$ of the expression is fed to a tree
    automaton~$\mathcal A_L$ that accepts~$T_{G'}$ in $O(n)$~time if
    and only if $G'\in L$. The existence of~$\mathcal A_L$ for the
    problem~$L$ can be proven or disproven by the Myhill-Nerode
    theorem for graphs.}
  \label{fig:mnexplain}
\end{figure}

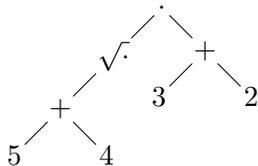
\begin{figure}
 \centering
 \begin{tikzpicture}[shorten >=1pt, shorten <=1pt,node distance=0.5cm]
   \tikzstyle{parse}=[circle, inner sep=0pt, minimum size=0mm]

   \node[parse] (times) at (0,0) {$\cdot$};
   \node[parse,below right=of times] (plus1) {$+$};
   \node[parse,below right=of plus1] (2) {$2$};
   \node[parse,below left=of plus1] (3) {$3$};
   \node[parse,below left=of times] (sqrt) {$\sqrt{\cdot}$};
   \node[parse,below left=of sqrt] (plus2) {$+$};
   \node[parse,below right=of plus2] (5) {$4$};
   \node[parse,below left=of plus2] (4) {$5$};

   \draw (times)--(sqrt);
   \draw (sqrt)--(plus2);
   \draw (plus2)--(5);
   \draw (plus2)--(4);
   \draw (plus1)--(3);
   \draw (plus1)--(2);
   \draw (times)--(plus1);
 \end{tikzpicture}
 \caption{The expression tree or parse tree of the arithmetic expression~$\sqrt{5+4}\cdot(3+2)$.}
 \label{fig:parsetreeexplain}
\end{figure}

\bigskip\noindent In order to apply tree automata to graphs, we first
show how every graph of bounded treewidth can be represented by an
expression over a constant-size set of operators, and, consequently,
as the parse tree or expression tree of that expression
(\autoref{fig:parsetreeexplain} gives an example for a parse
tree). Herein, the crucial operator corresponds to the concatenation
of words in the language setting of the Myhill-Nerode theorem: like
every word with more than one letter is the concatenation of shorter
words, we will see that every graph of treewidth~$t-1$ with more than
$t$~vertices is isomorphic to the result of \emph{gluing} smaller
graphs together at a \emph{boundary} consisting of
$t$~vertices.\footnote{For keeping the presentation simpler, we leave
  graphs with less than $t$~vertices out of consideration: since we
  only consider constant values of~$t$ throughout this work, any
  problem restricted to graphs with less than $t$~vertices is a finite
  problem and can therefore be trivially solved by a finite
  automaton.}  This is formalized by the definition below and
illustrated in \autoref{bpjoinfig}.

\begin{definition}\label{colorglue}
  A \emph{\bndg{t}} $G$ is a graph with $t$~distinguished vertices
  that are labeled from $1$ to $t$. These labeled vertices are called
  \emph{boundary vertices}. The \emph{boundary}~$\partial(G)$ is the
  set of boundary vertices of~$G$.

  Two colored $t$-boundaried graphs $G_1$ and $G_2$ are
  \emph{isomorphic} and we write $G_1 \cong G_2$ if there is an
  isomorphism for the underlying (uncolored and unlabeled) graphs
  mapping each vertex to a vertex with the same color (but ignoring
  labels).

  Let~$G_1$ and~$G_2$ be \bndg{t}s whose vertices are colored with
  colors in~$\{1,\dots,\cmax\}$. We say that $G_1$~and~$G_2$ are
  \emph{color-compatible} if the vertices with the same labels
  in~$\partial(G_1)$ and~$\partial(G_2)$ have the same color.

  For two color-compatible \bndg{t}s, we denote by $G_1\oplusc G_2$
  the colored graph obtained by \emph{gluing}~$G_1$ and~$G_2$, that
  is, by taking the disjoint union of $G_1$~and~$G_2$ and identifying
  vertices of~$\partial(G_1)$ and~$\partial(G_2)$ having the same
  label; the vertex colors of~$G_1\oplusc G_2$ are inherited
  from~$G_1$ and~$G_2$.
  For two color-incompatible graphs~$G_1$ and~$G_2$, we leave
  $\oplusc$ undefined.
\end{definition}

\begin{figure}[t]
 \centering
 \subfigure[A $2$-colored 3-boundaried graph~$G$.]{
   \begin{tikzpicture}[x=0.7cm,y=0.7cm,on grid, auto, shorten >=1pt, shorten <=1pt]
   
     \foreach \l in {1,...,8} {
       \node [blue] (v\l) at (\l,0) {\ifnum\l=7 {1}\else\ifnum\l=8 {2}\else{}\fi\fi};
     }
     
     \node [red] (v9) at (2.5,-1) {3};
     \node [red] (v10) at (6.5,1) {};
     \node [red] (v11) at (2.5,1) {};
     
     \foreach \s/\e in {9/2,9/3,10/5,10/6,10/7,10/8,11/1,11/2,11/3,11/4} {
       \draw [thick] (v\s) -- (v\e);
     }
   \end{tikzpicture}
 }\hfill{} \subfigure[A $2$-colored 3-boundaried graph~$H$.]{
   \hspace{1cm}\begin{tikzpicture}[x=0.7cm,y=0.7cm,on grid, auto, shorten >=1pt, shorten <=1pt]
     \foreach \l in {1,...,5} {
       \node [blue] (v\l) at (\l,0) {\ifnum\l=1 {1}\fi};
     }
     \node [blue] (v8) at (6,0) {2};
     
     \node [red] (v6) at (4.5,-1) {3};
     \node [red] (v7) at (2,-1) {};
     \node [red] (v9) at (3.5,1) {};
     
     \foreach \s/\e in {6/4,6/5,7/1,7/2,7/3,9/2,9/3,9/4,9/5} {
       \draw [thick] (v\s) -- (v\e);
     }
   \end{tikzpicture}
 }

  \subfigure[The glued graph $G\oplusc H$.]{
   \begin{tikzpicture}[x=0.7cm,y=0.7cm,on grid, auto, shorten >=1pt, shorten <=1pt]
   
     \foreach \l in {1,...,8} {
       \node [blue] (v\l) at (\l,0) {\ifnum\l=7 {1}\else\ifnum\l=8 {2}\else{}\fi\fi};
     }

     \foreach \l in {9,...,12} {
       \pgfmathtruncatemacro{\j}{\l-6}
       \node [blue] (v\l) at (\j,-2) {};
     }
     
     \node [red] (w1) at (6,-1) {};
     \node [red] (w2) at (6.5,1) {};
     \node [red] (w3) at (2.5,1) {};
     \node [red] (w4) at (3,-1) {3};
     \node [red] (w5) at (4.5,-1) {};
     
     \foreach \s/\e in {%
       2/5,2/6,2/7,2/8,
       1/7,1/11,1/12,
       3/1,3/2,3/3,3/4,
       4/2,4/3,4/9,4/10,
       5/9,5/10,5/11,5/12} {
       \draw [thick] (w\s) -- (v\e);
     }
   \end{tikzpicture}
 }
 \caption{Two color-compatible
   3-boundaried graphs~$G$ and~$H$ and their glued graph, where the boundary vertices are
   marked by their label.}
\label{bpjoinfig}
\end{figure}
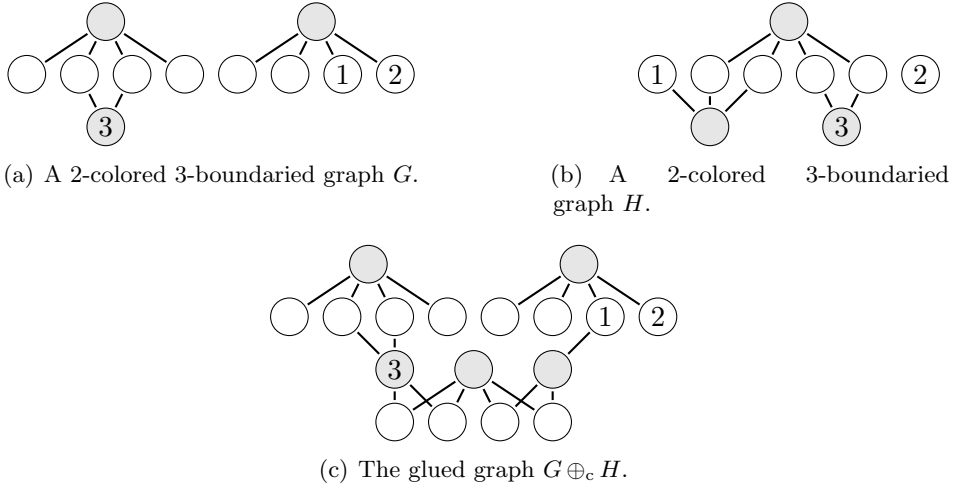

\noindent Together with~$\oplusc$, we use the following set of
operators to create primitive graphs that can be glued together to
larger graphs using~$\oplusc$, and to arbitrarily permute the labels
on the boundary vertices.

\begin{definition}\label{ourops}The \emph{size-$t$ parsing operators}
  for $\{1, \dots$, $\cmax\}$-colored $t$-boundaried graphs are
  defined as follows:
  \begin{enumerate}[i)]
  \item $\{\emptyset_{n_1,\dots,n_{\cmax}} \mid
    \sum_{i=1}^{\cmax}n_i=t\}$ is a family of nullary operators that
    creates a graph consisting of isolated boundary
    vertices~$1,\dots,t$, of which the first $n_1$~vertices get
    color~$1$, the next $n_2$~vertices get color~$2$, and so on.
  \item $e$ is a unary operator that adds an edge between the boundary
    vertices labeled~1 and~2.
  \item $\{u_\ell : 1\leq\ell\leq\cmax\}$ is a family of unary
    operators that add a new boundary vertex of color~$\ell$ and
    labels it~1, unlabeling the vertex previously labeled~1.
  \item $\gamma$ is a unary operator that cyclically shifts the
    boundary. That is, $\gamma$~moves label~$j$ to the vertex with
    label~$j+1\pmod{t}$.
  \item $i$ is a unary operator that assigns the label~1 to the vertex
    currently labeled 2 and label~2 to the vertex with label~1.
  \item $\oplusc$ is our gluing operator from \autoref{colorglue}.
  \end{enumerate}
\end{definition}

\noindent For a constant number of colors~$\cmax$, the set of size-$t$
parsing operators is finite. Moreover, for~$\cmax=1$, the given
operators coincide with those given by \citet[Section~12.7]{DF13}
for uncolored graphs, which allows us to show the following theorem:

\begin{theorem}\label{thm:parsethm}
  Let $G$~be a \{1,\dots,$\cmax$\}-colored graph with constant
  treewidth $t-1$ and at least $t$~vertices.

  Then, in linear time, $G$~can be transformed into an expression over
  the size-$t$ operators in \autoref{ourops} whose value is a
  graph~$H$ that is isomorphic to~$G$, that is, when ignoring the labels of~$H$.
\end{theorem}

\begin{proof}
  \looseness=-1 For $\cmax=1$, \citet[Theorem~12.7.1]{DF13} provide a
  linear-time procedure for converting a tree decomposition of
  width~$t-1$ of a graph~$G$ into an expression over the size-$t$
  operators in \autoref{ourops} such that the value of the expression
  is a graph isomorphic to~$G$. This procedure is easily adapted for
  larger~$\cmax$: whenever \citet{DF13} introduce vertices
  using~$\emptyset_{1}$ or~$u_1$ in the case~$\cmax=1$, we introduce
  them using~$\emptyset_{n_1,\dots,\cmax}$ and~$u_\ell$ with the
  colors they have in~$G$.
\end{proof}

\noindent We are now at a point where we can get each graph of
bounded treewidth into a representation that we can feed into a tree
automaton: we use the parse tree (or expression tree) of an expression
over the operators in \autoref{ourops}. A central question remains:
which graph problems can be decided by a tree automaton operating on
such a parse tree? The Myhill-Nerode theorem for colored graphs will
give a sufficient and necessary condition. To state the theorem, we
first lift the concept of a \crc{} from the language setting
(\autoref{def:crclang}) to graphs.

\begin{definition}\label{def:ulargesmall}
  Let $\ularge{\cmax}$ be the \emph{large universe} of all
  $\{1,\dots,\cmax\}$-colored \bndg{t}s and
  $\usmall{\cmax}\subseteq\ularge{\cmax}$ be the \emph{small universe} of
  $\{1,\dots, \allowbreak c_{\max}\}$-colored \bndg{t}s that can be
  generated by the size-$t$ operators in \autoref{ourops}.

  For $U\in\{\usmall{\cmax},\ularge{\cmax}\}$, we say that $F\subseteq
  U$ is a \emph{graph problem} if, for all~$G\in F$ and $H\in U$ with
  $G\cong H$, we also have $H\in F$.  That is, we assume graph
  problems to be closed under isomorphism and, in particular, that
  changing vertex labels does not influence membership in~$F$.
  
  Finally, for a graph problem~$F\subseteq U$, where
  $U\in\{\usmall{\cmax},\ularge{\cmax}\}$, we define the
  \emph{\crc{}~$\sim_F$ over~$U$} for~$F$ as follows:
  for~$G_1,G_2\in U$, $G_1\sim_F G_2$ if and only if
  $G_1$~and~$G_2$ are color-compatible and if for all
  color-compatible~$H\in U$, we have $G_1\oplusc H\in
  F\iff G_2\oplusc H\in F$.
\end{definition}

\looseness=-1 \noindent \autoref{thm:parsethm} might mislead to the
impression that $\usmall{\cmax}$~contains all
treewidth-$(t-1)$ graphs of~$\ularge{\cmax}$.  However, this is not
the case: for example, a path of length more than one whose first
vertex has label~1 and whose last vertex has label~2 has treewidth~1
and is contained in $\ularget{1}$.  However, it cannot be generated by
the size-2 operators in \autoref{ourops} and, hence, is not in
$\usmallt{1}$.  We will see this detail to be important when showing
that a graph problem is \emph{not} recognizable by a finite tree
automaton.

\begin{theorem}\label{afclemma}
  Let $F\subseteq \usmall{\cmax}$~be a graph problem. The following
  statements are equivalent:
  \begin{enumerate}[i)] %
  \item The collection of parse trees generating the graphs
    in~$F$ is recognizable by a finite tree
    automaton.
  \item The \crc{} $\sim_{F}$ has finite index over~$\usmall{\cmax}$.
  \end{enumerate} %
\end{theorem}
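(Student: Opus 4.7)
The plan is to lift the Myhill-Nerode theorem for uncolored graphs from~\citet[Section~12.7]{DF13} to the vertex-colored setting, taking advantage of the fact that the parsing operators in \autoref{ourops} differ from theirs only by carrying vertex colors that are inherited unchanged through every operation. The two directions follow the usual template: a tree automaton yields a finite quotient via a subset-of-states construction, and a finite quotient yields a tree automaton whose states are the equivalence classes.

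For (i)$\Rightarrow$(ii), let $\mathcal{A}=(Q,\Sigma_t,\delta,q_0,F_\mathcal{A})$ be a tree automaton recognizing the parse trees of graphs in $F$. For each $G\in\usmall{\cmax}$ I would set $\sigma(G) := \{\,q\in Q : \text{some parse tree of }G\text{ reaches root state }q\,\}$, so $\sigma$ takes at most $2^{|Q|}$ values. The key claim is $\sigma(G_1)=\sigma(G_2)\Rightarrow G_1\sim_F G_2$: for any color-compatible $H$ with a parse tree $T_H$ reaching some root state $q_H$, pick $q\in\sigma(G_1)=\sigma(G_2)$ and parse trees $T_{G_1},T_{G_2}$ each reaching $q$; the combined parse trees $T_{G_i}\oplusc T_H$ reach the common root state $\delta(\oplusc,q,q_H)$, and since $\mathcal{A}$ accepts a parse tree exactly when its value lies in $F$, this gives $G_1\oplusc H\in F\iff G_2\oplusc H\in F$. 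Hence $\sim_F$ has at most $2^{|Q|}$ classes.

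For (ii)$\Rightarrow$(i), assume $\sim_F$ has finite index and build a tree automaton whose states are the $\sim_F$-equivalence classes on $\usmall{\cmax}$, tagged by their boundary color signature so that operator applicability is readable from the state. Its transitions act on representatives: $\delta(\oplusc,[G_1],[G_2]):=[G_1\oplusc G_2]$ and $\delta(u,[G]):=[u(G)]$ for each unary operator $u$, with nullary operators mapped to their own classes. The accepting states are the classes contained in $F$. Correctness reduces to proving that $\sim_F$ is a congruence for each operator in \autoref{ourops}: for $\oplusc$ this is the standard two-sided congruence argument that swaps contexts via associativity and commutativity of gluing; for $e$ one uses $e(G)\oplusc H\cong G\oplusc e(H)$; for $\gamma$ and $i$ one moves the label permutation into the opposite operand of the gluing using its inverse; and for $u_\ell$ one compares the labeled colored graphs $u_\ell(G)\oplusc H$ and $G\oplusc \tilde H$, where $\tilde H$ is obtained from $H$ by internalizing $H$'s boundary-1 vertex and introducing a fresh isolated vertex that carries the boundary-1 label on $G$'s side.

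The primary technical obstacle is the congruence of $\sim_F$ under $u_\ell$: this operator simultaneously creates a new boundary vertex of color $\ell$ and internalizes the previous boundary-1 vertex, which no pure $\oplusc$-step achieves, so the reduction to a gluing-only test requires careful tracking of which vertex carries which label and color on each side of the rewriting. Once $u_\ell$ is handled, the remaining congruence checks are variations of the same theme and the automaton construction proceeds essentially verbatim as in the uncolored case of~\citet[Section~12.7]{DF13}; the forward direction is an entirely mechanical pigeonhole on $2^{Q}$.
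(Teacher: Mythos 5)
Your proposal reconstructs in full the argument that the paper delegates to Downey and Fellows' Theorem 12.7.2, so the approaches coincide: a subset-of-states quotient for (i)$\Rightarrow$(ii), and an automaton on $\sim_F$-classes for (ii)$\Rightarrow$(i), with the crux in both cases being that every parsing-operator application can be absorbed into a $\oplusc$-gluing. The paper isolates precisely this crux as the Parsing Replacement Property and verifies it in \autoref{ppp}; the identity $u_\ell H\cong H\oplusc \emptyset^\ell_H$ there is the same ingredient you flag as the technical obstacle, stated as a one-step replacement of $u_\ell$ by a gluing rather than as your contextual rewrite $u_\ell(G)\oplusc H\cong G\oplusc \tilde H$.
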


\noindent\looseness=-1 \citet[Theorem~12.7.2]{DF13} proved \autoref{afclemma} for uncolored
graphs, that is, for~$\cmax=1$.  The case~$\cmax>1$ can be proven
analogously.  However, the proof of (ii)${}\rightarrow{}$(i) given by
\citet{DF13} is flawed and we correct it in \autoref{apxfix}.

\subsection{Hypergraphs}
\label{sec:hypergraphs}\label{sec:hygra}

\noindent In this section, we show how tree automata can be used to
recognize hypergraph properties and, in the form of a Myhill-Nerode
theorem for hypergraphs, a necessary and sufficient characterization
for the hypergraph properties that a tree automaton can decide. To
this end, we first define the notion of gluing for hypergraphs.

\begin{definition}\label{hgglue}
  A \bndh{t}~$H$ has $t$~distinguished vertices and hyperedges labeled
  from~$1$ to~$t$ called \emph{boundary objects}. The
  \emph{boundary}~$\partial(H)$ is the set of all boundary
    objects.

  Two \bndh{t}s are \emph{gluable} if no vertex of one hypergraph has
  the label of a hyperedge of the other hypergraph.

  Let $H_1$~and~$H_2$ be two gluable \bndh{t}s. We denote
  by~$H_1\oplush H_2$ the \bndh{t} obtained by taking the disjoint
  union of~$H_1$ and~$H_2$, identifying each labeled vertex of~$H_1$
  with the vertex of~$H_2$ with the same label, and replacing the
  hyperedges with the same label~$\ell$ by their union.
\end{definition}

\noindent In order to apply tree automata to hypergraphs, in contrast to
\autoref{sec:golgra} for colored graphs, we will not define a set of
additional operators for generating hypergraphs. Instead, we will
generate hypergraphs from two-colored incidence graphs: vertices of
one color will represent the vertices of the hypergraph, vertices of
the other color will represent the hyperedges. That is, instead of
solving a hypergraph problem, we will in fact solve a graph problem on
colored incidence graphs. The goal of the next definition is to give a
representation of a hypergraph problem as a graph problem. It is
illustrated in \autoref{hggluefig}.

\begin{figure}[t]\centering
  \subfigure[The 3-boundaried hypergraph~$\mathcal H(G)$.]{%
    \begin{tikzpicture}[x=0.8cm, y=0.55cm, on grid, auto, shorten >=1pt, shorten <=1pt]
      \tikzstyle{edge} = [color=black,opacity=.2,line cap=round, line join=round, line width=19pt]    

      \foreach \l in {1,...,8} {
        \node [blue] (v\l) at (\l,0) {\ifnum\l=7 {1}\else\ifnum\l=8 {2}\else{}\fi\fi};
      }

      \node (l) at (2.5,1) {3};
      \begin{pgfonlayer}{background}
        \draw [edge] (v1.center) -- (v4.center);
        \draw [edge] (v5.center) -- (v8.center);
        \draw [edge, line width=23pt] (v2.center) -- (v3.center) -- (2.5,1) -- cycle;
      \end{pgfonlayer}
    \end{tikzpicture}
  }\hfill{}
  \subfigure[The 3-boundaried hypergraph~$\mathcal H(H)$.]{
    \begin{tikzpicture}[x=0.8cm, y=0.55cm,on grid, auto, shorten >=1pt, shorten <=1pt]
      \tikzstyle{edge} = [color=black,opacity=.2,line cap=round, line join=round, line width=19pt]    

      \foreach \l in {1,...,6} {
        \node [blue] (v\l) at (\l,0) {\ifnum\l=1 {1}\else\ifnum\l=6 {2}\else{}\fi\fi};
      }

      \node (l) at (4.5,1) {3};
      \begin{pgfonlayer}{background}
        \draw [edge] (v2.center) -- (v5.center);
        \draw [edge, line width=22pt] (v1.center) -- (v3.center);
        \draw [edge, line width=22pt] (v4.center) -- (v5.center) -- (4.5,1) -- cycle;
      \end{pgfonlayer}
    \end{tikzpicture}
  }
  \subfigure[The glued hypergraph~$\mathcal H(G)\oplush\mathcal H(H)=\mathcal H(G\oplusc H)$.]{%
    \begin{tikzpicture}[shorten >=1pt, shorten <=1pt]
      \tikzstyle{edge} = [color=black,opacity=.2,line cap=round, line join=round, line width=19pt]    

      \foreach \l in {1,...,12} {
        \node [blue] (v\l) at (\l,0) {\ifnum\l=7 {1}\else\ifnum\l=8 {2}\else{}\fi\fi};
      }

      \node (l) at (6,1) {3};
      \begin{pgfonlayer}{background}
        \draw [edge] (v1.center) -- (v4.center);
        \draw [edge] (v5.center) -- (v8.center);
        \draw [edge] (v9.center) -- (v12.center);
        \draw [edge, line width=25pt] (3,1) -- (v3.center) -- (v2.center) -- (3,1) -- (11,1) -- (v11.center) -- (v12.center) -- (11,1);

        \draw [edge, line width=22pt] (v7.center) -- (7,1) -- (9,1) -- (v9.center) -- (v10.center) -- (9,1);
      \end{pgfonlayer}
    \end{tikzpicture}
  }
  \caption{The two hypergraphs represented by the \wcg{3}s~$G$ and~$H$
    in \autoref{bpjoinfig} and the glued hypergraph
    $\hyperg{G}\oplush\hyperg{H}=\hyperg{G\oplusc H}$.}
  \label{hggluefig}
\end{figure}

\begin{definition}\label{liftbip}
  \looseness=-1 A \emph{\wcg{t}} is a $\{1,2\}$-colored \bndg{t}
  $G=(U\uplus W,E)$ such that all vertices in $U$~have color~1 and all
  vertices in $W$~have color 2, and each of $U$~and~$W$ form an
  independent set.

  For a \wcg{t}~$G=(U\uplus W,E)$, we denote by $\hyperg{G}$ the \bndh
  t with the vertex set~$U$ and the hyperedge set~$\{N(w) \mid w\in
  W\}$. Moreover, each vertex of~$\hyperg{G}$ inherits its label
  from~$G$ and each hyperedge~$e$ in~$\hyperg{G}$ inherits its label
  from the vertex~$w\in W$ of~$G$ that induced~$e$.

  For a set~$F\subseteq\usmall{2}$ of \wcg{t}s, we denote
  $\hyperg{F}:=\bigcup_{G\in F}\hyperg{G}$ and we call
  $F$~\emph{\full{}} if, for all \wcg{t}s~$G\in\usmall{2}$,
  $\hyperg{G}\in\hyperg{F}\implies G\in F$.

  We use~$\hlarge$ to denote the \emph{large universe} of all \bndh ts
  and by~$\hsmall$ we denote the \emph{small
    universe}~$\hyperg{\usmall{2}}$, that is, the \bndh{t}s that can
  be generated from \wcg{t}s created by the operators in
  \autoref{ourops}.

  We say that $\mathcal F\subseteq U$ for $U\in\{\hlarge,\hsmall\}$ is
  a \emph{hypergraph problem} if, for all~$G\in \mathcal F$ and $H\in
  U$ with $G\cong H$, we also have $H\in \mathcal F$.  That is, we
  assume hypergraph problems to be closed under isomorphism and, in
  particular, that changing boundary labels does not influence
  membership in~$\mathcal F$.
\end{definition}

\noindent The following observation allows us, where helpful, to
denote hypergraphs~$H$ using~$\hyperg{G}$ for some graph~$G$
with~$\hyperg{G}=H$, and to denote hypergraph problems~$\mathcal F$
using~$\hyperg{F}$ for some \full{}
graph problem~$F$.

\begin{observation}\label{obspscomplete}\leavevmode  
  \begin{enumerate}[i)]
  \item A graph~$G\in\ularge{2}$ is isomorphic to the incidence graph
    of~$\hyperg{G}\in\hlarge$. Therefore, the treewidth of~$G$ equals
    the incidence treewidth of~$\hyperg{G}$.

  \item For two graphs~$G,H\in\ularge{2}$, we have
    $\hyperg{G}\oplush\hyperg{H}=\hyperg{G\oplusc H}$.
    
  \item For a \full{} $F\subseteq\usmall{2}$ and each \wcg{t}~$G\in
    \usmall{2}$, we have $G\in F$ if and only
    if $\hyperg{G}\in\hyperg{F}$.\footnote{\looseness=-1If $F$~is not \full{}, it
      might be that~$G\notin F$ but $\hyperg{G}\in\hyperg{F}$
      because~$H\in F$ for some \wcg{t}~$H\neq G$ with
      $\hyperg{G}=\hyperg{H}$: the graphs~$G$ and~$H$ might represent
      the hyperedges of~$\hyperg{G}=\hyperg{H}$ using different
      mathematical objects.}
    
  \item For every \bndh t~$H\in\hsmall$, by definition of~$\hsmall$,
    there is a \bndg t~$G\in\usmall{2}$ such that~$\hyperg{G}=H$.
    Consequently, for every hypergraph
    problem~$\mcF{}\subseteq\hsmall$, there is a \full{}
    $F\subseteq\usmall{2}$ with $\hyperg{F}=\mcF{}$. Moreover, in
    terms of \autoref{def:ulargesmall}, $F$~is a graph problem.
  \end{enumerate}
\end{observation}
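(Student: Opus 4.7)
The plan is to verify the four items in sequence by direct unfolding of the definitions of \wcg{t}s, incidence graphs, and the gluing operators; none of them requires a substantially new idea, but item (ii) is the one that deserves the most care because it is the statement that will bridge the colored-graph machinery of \autoref{sec:golgra} with hypergraphs in the later sections.

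For (i), I would exhibit the natural bijection $\varphi\colon V(G)\to V(\mathcal I(\hyperg G))$ for $G=(U\uplus W,E)$ that acts as the identity on $U$ and sends each $w\in W$ to the hyperedge $N(w)$ of $\hyperg G$. Unfolding the definition of the incidence graph, an edge $\{v,N(w)\}$ lies in $\mathcal I(\hyperg G)$ iff $v\in N(w)$ iff $\{v,w\}\in E$, so $\varphi$ is a graph isomorphism. Since treewidth is an isomorphism invariant and the \iw{} of $\hyperg G$ is by definition the treewidth of $\mathcal I(\hyperg G)$, the equality of widths follows.

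For (ii), I would first check that $G$ and $H$ being color-compatible \wcg{t}s is equivalent to $\hyperg G$ and $\hyperg H$ being gluable: a shared boundary label must come from either a color-1 pair (hypergraph vertices on both sides) or a color-2 pair (hyperedges on both sides), and by color-compatibility the two cases never mix. Then, unfolding $\oplusc$: identifying two color-1 boundary vertices with the same label produces a single hypergraph vertex, while identifying two color-2 boundary vertices $w_1,w_2$ with the same label yields a single vertex whose neighborhood in $G\oplusc H$ is $N_G(w_1)\cup N_H(w_2)$, inducing a hyperedge in $\hyperg{G\oplusc H}$ that is precisely the union of the two hyperedges identified by $\oplush$. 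Combining the two cases gives $\hyperg{G\oplusc H}=\hyperg G\oplush\hyperg H$. I expect this label-and-color tracking to be the main (and essentially only) technical obstacle in the proof.

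Items (iii) and (iv) are set-theoretic bookkeeping on top of (ii). For the forward direction of (iii), $G\in F$ implies $\hyperg G\in\hyperg F$ directly from the definition $\hyperg F=\bigcup_{G'\in F}\hyperg{G'}$; for the backward direction, $\hyperg G\in\hyperg F$ produces some $G'\in F$ with $\hyperg{G'}=\hyperg G$, and the assumption that $F$ is \full{} then forces $G\in F$. The second half of (iii) I would justify by a minimal example of two non-isomorphic \wcg{t}s (for instance, differing by a permutation of color-2 boundary labels attached to vertices with identical neighborhoods) that induce the same hypergraph. For (iv), the first sentence is immediate from the definition $\hsmall=\hyperg{\usmall{2}}$; for the second, given any $\mathcal F\subseteq\hsmall$, the preimage $F:=\{G\in\usmall{2}\mid\hyperg G\in\mathcal F\}$ is \full{} by construction and satisfies $\hyperg F=\mathcal F$.
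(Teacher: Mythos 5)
Your proposal is correct; the paper states this as an observation without proof, and your argument is the straightforward unfolding of the definitions that the authors evidently had in mind. A couple of small points worth recording explicitly if you were to write this out in full: in item (i) the map $\varphi$ must respect multiplicities (the paper allows hyperedge multisets, and $\hyperg G$ may contain repeated hyperedges coming from color-2 vertices with equal neighborhoods — the incidence graph of such a hypergraph has a separate vertex per multiset copy, so $\varphi$ is still a bijection); and in item (ii), before invoking $\hyperg{G\oplusc H}$ you should observe that gluing two color-compatible \wcg{t}s again yields a \wcg{t}, since every edge of $G\oplusc H$ comes from $G$ or from $H$ and hence joins a color-1 vertex to a color-2 vertex, and the identified boundary vertices retain a single well-defined color by compatibility. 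These are exactly the kinds of checks your outline already anticipates but does not spell out.
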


\noindent In order to state the Myhill-Nerode theorem for hypergraphs,
we define the \crc{} for hypergraphs.

\begin{definition}
  Let $\mcF{}\subseteq U$ for $U\in\{\hlarge,\hsmall\}$ be a hypergraph problem. We define the
  \emph{\crc{}~$\sim_{\mcF{}}$ over~$U$} for~$\mcF{}$ as follows:
  for $G_1,G_2\in U$, $G_1\sim_{\mathcal F} G_2$ if and only if
  $G_1$ and $G_2$ are gluable and for all $H\in U$ that are
  gluable to~$G_1$ and~$G_2$, $G_1\oplush H\in \mathcal F\iff
  G_2\oplush H\in \mathcal F$.
\end{definition}

\noindent We now state our Myhill-Nerode theorem for hypergraphs. As
\autoref{afclemma}, the following theorem only makes a statement
about when a tree automaton can decide hypergraph problems~$\mathcal
F\subseteq \hsmall$. However, in \autoref{sec:fpt-mso}, we will see that this
restriction is not important in most cases.

\begin{theorem}\label{afhlemma}
  Let $\mcF{}\subseteq\hsmall$ be a hypergraph problem, that is,
  $\mcF{}=\hyperg{F}$ for some \full{} $F\subseteq\usmall{2}$.  The
  following statements are equivalent:
  \begin{enumerate}[i)]
  \item The collection of parse trees generating the graphs in~$F$
    is recognizable by a tree automaton.
  \item The \crc{} $\sim_{\mcF{}}$ has finite index
    over~$\hsmall$.
  \item The \crc{} $\sim_{F}$ has finite index over~$\usmall{2}$.
  \end{enumerate}
  Moreover, if the index~$p$ of~$\sim_\mcF{}$ and the index~$q$
  of~$\sim_F$ are finite, they bound each other as $2^tq\geq p\geq
  q/2^t-1$.
\end{theorem}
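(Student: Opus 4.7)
The strategy is to reduce \autoref{afhlemma} to \autoref{afclemma} applied to the graph problem $F\subseteq\usmall 2$. Condition~(i) of \autoref{afhlemma} is literally condition~(i) of \autoref{afclemma} for $F$, and condition~(iii) is literally its condition~(ii), so the equivalence of (i) and (iii) follows immediately from \autoref{afclemma}. The substantive task is therefore to prove the equivalence of (ii) and (iii), and the plan is to do this by showing that the surjection $G\mapsto\hyperg{G}$ from $\usmall 2$ onto $\hsmall$ (\autoref{obspscomplete}(iv)) descends to a bijection between the $\sim_F$-classes of $\usmall 2$ and the $\sim_{\mathcal F}$-classes of $\hsmall$. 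Concretely, the aim is to establish
\[
  G_1\sim_F G_2\iff\hyperg{G_1}\sim_{\mathcal F}\hyperg{G_2}
\]
for all $G_1,G_2\in\usmall 2$, from which both implications of $(\text{ii})\Leftrightarrow(\text{iii})$ drop out.

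For the ``$\Leftarrow$'' direction, I would argue by contraposition. Assuming $G_1\not\sim_F G_2$, either $G_1$ and $G_2$ are not color-compatible---in which case the correspondence between the two colors and the two types of boundary objects (vertex versus hyperedge) forces $\hyperg{G_1}$ and $\hyperg{G_2}$ not to be gluable, and hence not $\sim_{\mathcal F}$-equivalent---or there is a color-compatible test $J\in\usmall 2$ with, say, $G_1\oplusc J\in F$ but $G_2\oplusc J\notin F$. In the latter case, \autoref{obspscomplete}(ii) gives $\hyperg{G_i}\oplush\hyperg{J}=\hyperg{G_i\oplusc J}$ for $i\in\{1,2\}$, and generator-totality of $F$ combined with $\mathcal F=\hyperg{F}$ then yields $\hyperg{G_1}\oplush\hyperg{J}\in\mathcal F$ while $\hyperg{G_2}\oplush\hyperg{J}\notin\mathcal F$, so $\hyperg{J}\in\hsmall$ distinguishes $\hyperg{G_1}$ and $\hyperg{G_2}$ under $\sim_{\mathcal F}$.

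The ``$\Rightarrow$'' direction is more delicate. Given $G_1\sim_F G_2$, one must verify, for every $K\in\hlarge$ gluable to both $\hyperg{G_1}$ and $\hyperg{G_2}$, that $\hyperg{G_1}\oplush K\in\mathcal F$ if and only if $\hyperg{G_2}\oplush K\in\mathcal F$. When $K\in\hsmall$, writing $K=\hyperg{J}$ for some $J\in\usmall 2$ translates gluability into color-compatibility, and the preceding argument runs in reverse using $G_1\sim_F G_2$ on the test~$J$ together with \autoref{obspscomplete}(ii) and generator-totality. The main obstacle is the case $K\in\hlarge\setminus\hsmall$. Here I plan to exploit the hypothesis $\mathcal F\subseteq\hsmall$: whenever $\hyperg{G_i}\oplush K\notin\hsmall$ the corresponding side of the biconditional is vacuously false, so only the subcase in which at least one glued hypergraph lies in $\hsmall$ can create a distinction. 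In that subcase, the plan is to take a parse tree witnessing membership in $\hsmall$ and cut it at the boundary so as to isolate a sub-expression encoding the ``$K$-side'' of the gluing, producing a $J\in\usmall 2$ whose associated hypergraph $\hyperg{J}$ substitutes for $K$ uniformly against both $\hyperg{G_1}$ and $\hyperg{G_2}$, thereby reducing back to the already-handled case. Verifying that such a boundary-respecting decomposition of the parse tree exists---and that the substitution genuinely agrees with $K$ when glued to either $\hyperg{G_1}$ or $\hyperg{G_2}$, which is non-trivial because boundary hyperedges are unioned rather than merely identified during gluing---is the technical crux that I expect will require the most care.
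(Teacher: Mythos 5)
Your overall strategy — reducing (i)$\Leftrightarrow$(iii) to \autoref{afclemma}, and relating $\sim_F$ and $\sim_{\mathcal F}$ via the surjection $G\mapsto\hyperg{G}$ — is the right one, and the target correspondence $G_1\sim_F G_2\iff\hyperg{G_1}\sim_{\mathcal F}\hyperg{G_2}$ (for \wcg{t}s $G_1,G_2$) is both true and a slightly cleaner organizing principle than the paper's own proof, which argues both directions of (ii)$\Leftrightarrow$(iii) by contraposition (infinite index on one side gives infinite index on the other). Your treatment of the ``$\Leftarrow$'' direction is correct, and your invocations of \autoref{obspscomplete}(ii) and generator-totality are exactly what is needed.

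The gap is in the ``$\Rightarrow$'' direction and stems from a misreading of statement~(ii). The paper's convention (made explicit in the discussion just before \autoref{hisoclosed2}) is that ``$\sim_{\mathcal F}$ has finite index over $\hsmall$'' restricts both the domain \emph{and the tests} to $\hsmall$: two hypergraphs in $\hsmall$ count as inequivalent here only if some distinguishing $H_{ij}$ can be taken from $\hsmall$. Under that reading, your ``main obstacle'' — the case $K\in\hlarge\setminus\hsmall$ — never arises: every test $K$ is $\hyperg{J}$ for some $J\in\usmall 2$ by \autoref{obspscomplete}(iv), and reversing your ``$\Leftarrow$'' argument closes the direction immediately. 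If you instead insist on the strictly stronger claim with tests ranging over all of $\hlarge$, the proposed fix does not work: when $\hyperg{G_1}\oplush K\in\hsmall$, a parse tree witnessing this need not respect the boundary between the $\hyperg{G_1}$-part and the $K$-part (the operators of \autoref{ourops} freely interleave the two sides), so there is no guarantee one can ``cut'' the parse tree to obtain a single $J\in\usmall 2$ that substitutes for $K$ uniformly against both $\hyperg{G_1}$ and $\hyperg{G_2}$. You do not need the stronger claim. A further minor point: $\hyperg{\cdot}$ is defined only on \wcg{t}s, so the bijection is between $\sim_F$-classes of \wcg{t}s and $\sim_{\mathcal F}$-classes of $\hsmall$; as the paper notes, the remaining non-\wcg{t} elements of $\usmall 2$ can never be glued into $F$ and hence contribute at most one extra class, which does not affect finiteness.
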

\begin{proof}%
  Since $F\subseteq\usmall{2}$, we can apply \autoref{afclemma}, which
  states that (i) and (iii) are equivalent. It remains to show that
  (iii) and (ii) are equivalent. That is, we show that $\sim_{\mcF{}}$
  has finite index over~$\hsmall$ if and only if $\sim_{F}$ has finite
  index over~$\usmall{2}$.

  First, assume that $\sim_{F}$ has infinite index over~$\usmall{2}$.
  We show that $\sim_{\mcF{}}$ has infinite index
  over~$\hsmall$. Since $\sim_{F}$ has infinite index
  over~$\usmall{2}$, there is an infinite set~$\{G_1,G_2,$
  $G_3,\dots\}\subseteq\usmall{2}$ of graphs that are pairwise
  nonequivalent under~$\sim_{F}$. Since there are only
  $2^t$~possibilities to assign two colors to $t$~boundary vertices,
  there is an infinite number of color-compatible graphs
  among~$\{G_1,G_2,\dots\}$. Moreover, notice that all
  graphs~$G_i\in\usmall{2}$ that are not \wcg{t}s are equivalent
  under~$\sim_{F}$: since $F$~contains only \wcg{t}s, $G_i$~cannot be
  completed into graphs in~$F$ by gluing any graph onto~$G_i$.
  Therefore, without loss of generality, we assume that
  $\{G_1,G_2,\dots\}$ are pairwise color-compatible \wcg{t}s.  Now,
  for each pair~$G_i,G_j$, there is a graph~$H_{ij}\in\usmall{2}$ such
  that, without loss of generality, $G_i\oplusc H_{ij}\in F$ but
  $G_j\oplusc H_{ij}\notin F$.  From $G_i\oplusc H_{ij}\in F$, it
  follows that~$H_{ij}$ is a \wcg{t} that is color-compatible
  with~$G_i$. Hence, $\hyperg{H_{ij}}\in\hsmall$. Now, from
  $G_i\oplusc H_{ij}\in F$, we get
  $\hyperg{G_i}\oplush\hyperg{H_{ij}}=\hyperg{G_i\oplusc
    H_{ij}}\in\hyperg{F}=\mcF{}$.  Moreover, since $F$ is \full{},
  from $G_j\oplusc H_{ij}\notin F$ it follows that
  $\hyperg{G_j}\oplush \hyperg{H_{ij}}=\hyperg{G_j\oplusc
    H_{ij}}\notin \hyperg{F}=\mcF{}$. That is,
  $\hyperg{G_i}\nsim_{\mcF{}}\hyperg{G_j}$ over~$\hsmall$ and,
  therefore, $\sim_{\mcF{}}$ has infinite index.

  Now, assume that $\sim_{\mcF{}}$ has infinite index
  over~$\hsmall$. We show that $\sim_{F}$ has infinite index
  over~$\usmall{2}$. Since $\sim_{\mcF{}}$ has infinite index
  over~$\hsmall$, there is a
  set~$\{\hyperg{G_1},\allowbreak\hyperg{G_2},$
  $\hyperg{G_3},\dots\}\subseteq\hsmall$ of hypergraphs that are
  pairwise nonequivalent under~$\sim_{\mcF{}}$. Since there are only
  $2^t$~partitions of $t$~labels into hyperedge labels and vertex
  labels, there is an infinite number of pairwise gluable hypergraphs
  among~$\{\hyperg{G_1},\allowbreak\hyperg{G_2},\dots\}$. Therefore,
  without loss of generality, assume that all these hypergraphs are
  pairwise gluable. Now, for each pair $\hyperg{G_i}$, $\hyperg{G_j}$,
  there is a hypergraph~$\hyperg{H_{ij}}\in\hsmall$ such that, without
  loss of generality, we have $\hyperg{G_i}\oplush
  \hyperg{H_{ij}}\in\hyperg{F}=\mcF{}$ but
  $\hyperg{G_j}\oplush\allowbreak
  \hyperg{H_{ij}}\notin\hyperg{F}=\mcF{}$. Since
  $\hyperg{G_i}\oplush\allowbreak{}\hyperg{H_{ij}}=\allowbreak{}\hyperg{G_i\oplusc
    H_{ij}}$ and $F$~is \full{}, we have $G_i\oplusc H_{ij}\in
  F$. Moreover, $G_j\oplusc H_{ij}\notin F$. Since
  $H_{ij}\in\usmall{2}$, it follows that $\sim_{F}$ has infinite index
  over~$\usmall{2}$.

  Finally, observe that our proof yields even more
  information in the case that~$\sim_F$, and equivalently
  $\sim_\mcF{}$, have finite index: we have first shown that, for any
  set of~$q$ graphs that are pairwise nonequivalent under~$\sim_F$,
  there are at least $p\geq \lceil q/2^t\rceil-1$~hypergraphs
  nonequivalent under~$\sim_\mcF{}$.  We have then shown that, for any
  set of $p$~hypergraphs that are pairwise nonequivalent
  under~$\sim_\mcF{}$, there are at least $q\geq \lceil
  p/2^t\rceil$~graphs nonequivalent under~$\sim_F$.  Hence, for the
  index~$p$ of~$\sim_\mcF{}$ and the index~$q$ of~$\sim_F$, we
  have~$2^tq\geq p\geq q/2^t-1$.
\end{proof}

\subsection{Fixed-parameter algorithms and monadic second-order
  logic}\label{sec:fpt-mso}

In \autoref{sec:hygra}, we have seen a tool allowing us to show when a
hypergraph problem~$\mathcal F\in\hsmall$ can be recognized by a
finite tree automaton. The provided \autoref{afhlemma}, however, is
strongly tied to the representation of hypergraphs as incidence graphs
in~$\usmall{2}$. This section shows three corollaries to ease the
application of \autoref{afhlemma} for classifying hypergraph problems.

\paragraph{Showing tractability and constructing tree automata.} The
following corollary will make it easier to show that a hypergraph
problem is fixed-parameter linear parameterized by \iw{}. Essentially,
we do not have to care about whether the hypergraphs we consider are
contained in~$\hsmall$.

\begin{corollary}\label{hisoclosed}
  Let $\mcF{}\subseteq\hlarge$ be a decidable hypergraph problem and
  that is restricted to hypergraphs of constant \iw{}~$t-1$.

\nopagebreak
  Given a hypergraph~$H\in\hlarge{}$ and a constant upper bound on the
  index of $\sim_{\mcF{}}$ over~$\hlarge{}$, we can compute in
  constant time a tree automaton~$\mathcal A$ and in linear time a
  tree~$T$ such that $\mathcal A$ processes~$T$ in linear time and
  accepts~$T$ if and only if~$H\in\mcF{}$.
\end{corollary}

\begin{proof}
  Let $F\subseteq\usmall{2}$ be \full{} such
  that~$\mcF{}\cap\hsmall=\hyperg{F}$ and let $p$~be the constant
  given upper bound on $\sim_\mcF{}$.  The proof relies on two~claims:
  \begin{enumerate}[i)]
  \item $H\in \mcF{}$ holds if and only if all
    graphs~$G\in\usmall{2}$ with~$\hyperg{G}\cong H$ are in~$F$.
  \item $\sim_{F}$ has index~$q\leq2^t(p+1)$ over~$\usmall{2}$.
  \end{enumerate}
  From~(i) then immediately follows that a tree automaton~$\mathcal
  A$ deciding~$F$ decides $H\in \mcF{}$~correctly when fed the parse
  tree~$T_G$ of any~$G\in\usmall{2}$ with~$\hyperg{G}\cong H$. By
  \autoref{thm:parsethm}, this $G$~exists and we obtain the parse
  tree~$T_G$ in linear time from the incidence graph of~$H$.

  \looseness=-1 From~(ii) and \autoref{afhlemma}, it follows that the
  tree automaton $\mathcal A$ indeed exists. It can be constructed in
  constant time given that we know a constant upper bound~$s$ on the
  number of states of~$\mathcal A$: the
  crucial observation is that $\mathcal A$ reaches at least one state
  twice when processing a parse tree of height greater than~$s$.
  Thus, for any parse tree~$T$ of height greater than~$s$, there is a
  parse tree~$T'$ of height at most~$s$ such that $\mathcal A$ accepts
  $T$~if and only if it accepts~$T'$.  It follows that we only have to
  construct~$\mathcal A$ so that it works correctly on all parse trees
  of height at most~$s$.  Since our operators in \autoref{ourops} are
  all nullary, unary, or binary, the parse trees of expressions over
  them are binary trees.  Thus, there are only a constant number of
  parse trees of height at most~$s$.  Moreover, for any parse tree
  $T$~of height at most~$s$, we can decide in constant time whether
  the hypergraph it generates is in~$\mcF{}$, since $\mcF{}$~is
  decidable.  Hence, we can construct in constant time by brute force
  a tree automaton~$\mathcal A$ that correctly answers for all parse
  trees of height at most~$s$ and, consequently, recognizes~$F$.
  Moreover, since $\mathcal A$ has a constant number of states, it
  takes only linear time for $\mathcal A$ to process any parse tree.

  The constant upper bound on the number of states of~$\mathcal
  A$ we obtain as follows: the index of~$\sim_F$ is bounded by~(ii),
  which, in turn bounds the number of states of~$\mathcal
  A$~\citep[Theorem~12.7.2]{DF13}.  Thus, it only remains to prove~(i)
  and~(ii).

  \looseness=-1 (i) First, assume that there is some graph~$G\in\usmall{2}$
  with~$\hyperg{G}\cong H$ in~$F$.  Then, since~$G$ is a \wcg{t}
  in~$\usmall{2}$, $\hyperg{G}\in\hyperg{F}$. Since
  $\hyperg{F}\subseteq\mcF{}$ and~$\mcF{}$~is closed under
  isomorphism, we conclude $H\in\mcF{}$.  If, for the opposite
  direction, $H\in\mcF{}$, then let %
  $G\in\usmall{2}$ be any graph such that $\hyperg{G}\cong H$. Since
  $G\in\usmall{2}$, we have~$\hyperg{G}\in\hsmall$. Moreover, since
  $\mcF{}$~is closed under isomorphism, $\hyperg{G}\in \mcF{}$ and,
  hence, $\hyperg{G}\in\hyperg{F}$. Finally, since $F$~is \full{}, we
  have~$G\in F$.

  (ii) We show that the index~$p'$ of~$\sim_{\hyperg{F}}$
  over~$\hsmall{}$ is at most~$p$.  Then, from \autoref{afhlemma}, it
  follows that~$p\geq p'\geq q/2^tp-1$ and, hence, $q\leq2^t(p+1)$.
  Thus, we only have to show, for any
  $\hyperg{G_1},\hyperg{G_2}\in\hsmall{}$ equivalent
  under~$\sim_\mcF{}$, that they are also equivalent
  under~$\sim_{\hyperg{F}}$.  This is trivial, since, for
  $i\in\{1,2\}$ and any~$\hyperg{H}\in\hsmall{}$, we
  have~$\hyperg{G_i}\oplush\hyperg{H}\in\hyperg{F}\iff\hyperg{G_i}\oplush\hyperg{H}\in\mcF{}$,
  since $\hyperg{G_i}\oplush\hyperg{H}=\hyperg{G_i\oplusc
    H}\in\hsmall$ and $\hyperg{F}=\mcF\cap\hsmall{}$.
\end{proof}

\noindent From \autoref{hisoclosed}, it follows that, to obtain a
fixed-parameter linear algorithm for some hypergraph
problem~$\mathcal F$ parameterized by \iw{}, we just have to show
that~$\sim_{\mathcal F}$ has finite index over~$\hlarge$. 

\paragraph{Showing intractability.}

We have seen that it was enough to show that $\sim_{\mathcal F}$~has
finite index over~$\hlarge$ to show that a hypergraph
problem~$\mathcal F\subseteq\hlarge{}$ of hypergraphs with incidence
treewidth~$t-1$ is decidable by a tree automaton. To show the
opposite, it is not sufficient to show that~$\sim_{\mathcal F}$ has
infinite index over the hypergraphs with treewidth~$t-1$ in~$\hlarge$:
assume, for example, that there are two
hypergraphs~$H_i,H_j\in\hsmall$ that are non-equivalent
under~$\sim_{\mathcal F}$. Then, there is some~$H_{ij}\in\hlarge$
satisfying~$\hyperg{G_i}\oplush H_{ij}\in\mathcal F$
but~$\hyperg{G_j}\oplush H_{ij}\notin\mathcal F$. If
$H_{ij}\notin\hsmall{}$, then this does not necessarily mean
that~$H_i$ and~$H_j$ are nonequivalent under~$\sim_{\mathcal F}$
over~$\hsmall{}$. Thus, we cannot conclude that~$\sim_{\mathcal F}$
has infinite index over~$\hsmall{}$ and \autoref{afhlemma} is
inapplicable. However, the following corollary gives a simple
criterion for intractability.

\begin{corollary}\label{hisoclosed2}
  Let $\mathcal F\subseteq\hlarge$ be a hypergraph problem %
  and $\mathcal F_t\subseteq\mathcal F$~be an arbitrary
  subset of hypergraphs~$H$ whose incidence
  graphs have tree decompositions of width~$t-1$ such
  that~$\partial(H)$~is a bag.

  If $\sim_{\mathcal F_t}$ has infinite index over~$\hlarge$, then $\sim_F$
  has infinite index over~$\usmall{2}$ for $F\subseteq\usmall{2}$ being the
  \full{} graph problem such that~$\hyperg{F}=\mathcal F\cap\hsmall$.

  Consequently, there is no tree automaton that decides~$H\in\mathcal
  F$ correctly when fed the parse tree of the incidence graph of~$H$.
\end{corollary}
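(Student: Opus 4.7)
The plan is to convert an infinite $\sim_{\mathcal F_t}$-antichain in $\hlarge$ into an infinite $\sim_F$-antichain in $\usmall{2}$, at which point \autoref{afhlemma} rules out a tree automaton for $F$; the concluding ``consequently'' sentence then follows by the parse-tree and isomorphism-closure argument from \autoref{hisoclosed}, since for any $G\in\usmall{2}$ we have $G\in F\iff\hyperg{G}\in\hyperg{F}\iff\hyperg{G}\in\mathcal F$.

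First I would take a pairwise $\sim_{\mathcal F_t}$-nonequivalent family $\{H_1,H_2,\dots\}\subseteq\hlarge$ and pigeonhole over the at most $2^t$ ways of splitting the boundary labels between vertex and hyperedge labels to pass to an infinite \emph{pairwise gluable} subfamily. Among these, every $H_i$ admitting no gluing into $\mathcal F_t$ falls into a single $\sim_{\mathcal F_t}$-class, so I may discard such hypergraphs and assume that each $H_i$ admits some $H_i^\star$ with $H_i\oplush H_i^\star\in\mathcal F_t$. The composite then inherits the defining property of $\mathcal F_t$, namely an incidence tree decomposition of width at most $t-1$ with $\partial$ as a bag, and restricting this decomposition to the incidence subgraph induced by the vertices and hyperedges of $H_i$ (whose boundary is contained in the $\partial$-bag) shows that $H_i$ itself has a decomposition of the same form. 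Hence $H_i\cong\hyperg{G_i}$ for some $G_i\in\usmall{2}$.

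Second, for each pair $i\neq j$, let $H_{ij}\in\hlarge$ be the $\sim_{\mathcal F_t}$-witness, WLOG with $H_i\oplush H_{ij}\in\mathcal F_t$ but $H_j\oplush H_{ij}\notin\mathcal F_t$. The same restriction argument applied to $H_i\oplush H_{ij}$ produces $G_{ij}\in\usmall{2}$ with $\hyperg{G_{ij}}\cong H_{ij}$. By \autoref{obspscomplete}, $\hyperg{G_i\oplusc G_{ij}}\cong H_i\oplush H_{ij}\in\mathcal F$; combined with $\hyperg{G_i\oplusc G_{ij}}\in\hsmall$ and $F$ being \full{}, this yields $G_i\oplusc G_{ij}\in F$. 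For the opposite direction, stitching the two width-$(t-1)$, $\partial$-bagged tree decompositions of $H_j$ and $H_{ij}$ together at their shared $\partial$-bag produces a decomposition of the same form for $H_j\oplush H_{ij}$, so from $H_j\oplush H_{ij}\notin\mathcal F_t$ one deduces $H_j\oplush H_{ij}\notin\mathcal F$ and hence $G_j\oplusc G_{ij}\notin F$. This gives $G_i\not\sim_F G_j$, so $\sim_F$ has infinite index over $\usmall{2}$.

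The hardest step is the implication $H_j\oplush H_{ij}\notin\mathcal F_t\Rightarrow H_j\oplush H_{ij}\notin\mathcal F$, which relies on two ingredients: gluing the tree decompositions of $H_j$ and $H_{ij}$ at their common $\partial$-bag to certify that $H_j\oplush H_{ij}$ still has the $\partial$-bagged, width-$(t-1)$ form; and the interpretation of $\mathcal F_t$ as collecting every $\mathcal F$-hypergraph of this form, so that the $\mathcal F_t$-exclusion transfers to an $\mathcal F$-exclusion. A second, minor technicality is working up to isomorphism throughout, exploiting isomorphism-closure of $\mathcal F$ so that $\hsmall$-representatives may freely replace hypergraphs in $\hlarge$ on both sides of membership statements.
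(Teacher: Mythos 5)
Your argument follows the paper's own proof closely: pigeonhole to pairwise gluable hypergraphs, take the witnesses $H_{ij}$, convert to graphs $G_i,G_{ij}\in\usmall{2}$ via the restriction argument and the label-preserving isomorphism $\cong_t$, and then push (non-)membership from $\mathcal F$ through to $F$ using isomorphism-closure and the \full{} property. So the route is essentially the same.

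You do, however, tighten two places that the paper leaves implicit, and one of them is a genuine subtlety worth flagging. First, the paper jumps straight to ``let $\{H_1,H_2,\dots\}\subseteq\mathcal F_t$ be pairwise non-equivalent,'' which does not obviously follow from ``$\sim_{\mathcal F_t}$ has infinite index over $\hlarge$''; your version (discard the single hopeless class, glue each survivor into $\mathcal F_t$, restrict the decomposition back to $H_i$) is the honest derivation of that fact. Second, and more importantly, you explicitly observe that the negative step $H_j\oplush H_{ij}\notin\mathcal F_t\Rightarrow H_j\oplush H_{ij}\notin\mathcal F$ needs $\mathcal F_t$ to be the \emph{full} collection of $\mathcal F$-hypergraphs whose incidence graphs have a width-$(t-1)$ tree decomposition with $\partial$ as a bag. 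The paper compresses this into ``with the same argumentation'' and the statement of \autoref{hisoclosed2} says ``arbitrary subset,'' which is strictly too weak: for instance, one can take $\mathcal F=\hlarge$ and a sparse $\mathcal F_t$ that makes $\sim_{\mathcal F_t}$ infinite-index while $F=\usmall{2}$ trivially has finite index. Your reading (maximality of $\mathcal F_t$) is what actually makes the corollary correct, and it is how the corollary is used in the proof of \autoref{THM:HTW}, so nothing downstream is affected.
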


\begin{proof}
  Any hypergraph $H\in\mathcal F_t$ allows for a tree
  decomposition~$T$ of width~$t-1$ of its incidence graph that has a
  bag~$\partial(H)$. The procedure by \citet[Theorem~12.7.1]{DF13}
  produces a parse tree for a graph~$G\in\usmall{2}$
  with~$\hyperg{G}\cong H$. The crucial observation is that, when
  choosing the bag~$\partial(H)$ as the root of the tree
  decomposition~$T$, the procedure generates a parse tree for a
  graph~$G\in\usmall{2}$ with~$\hyperg{G}\cong_t H$, where we
  use~$\cong_t$ to denote that there is an isomorphism
  between~$\hyperg{G}$ and~$H$ that maps the $t$~boundary objects
  of~$\hyperg{G}$ to boundary objects in~$H$ with the same label.

  Now, let $\{H_1,H_2,\dots\}\subseteq\mathcal F_t$ be a set of
  hypergraphs that are pairwise non-equivalent with respect
  to~$\sim_{\mathcal F_t}$. As before, we may assume that they are
  pairwise gluable. Hence, for each pair~$H_i,H_j$, there is a
  hypergraph~$H_{ij}\in\hlarge$ such that~$H_i\oplush
  H_{ij}\in\mathcal F_t$ but~$H_j\oplush H_{ij}\notin\mathcal
  F_t$. Since $H_i\oplush H_{ij}\in\mathcal F_t$, the
  hypergraph~$H_i\oplush H_{ij}$ has a tree decomposition of
  width~$t-1$ with a bag~$\partial(H_i\oplush H_{ij})$ and, hence,
  $H_{ij}$ has such a tree decomposition as well.
  It follows that there are graphs~$G_i,G_j,G_{ij}\in\usmall{2}$ such
  that~$\hyperg{G_i}\cong_t H_i$, $\hyperg{G_j}\cong_t H_j$, and
  $\hyperg{G_{ij}}\cong_t H_{ij}$. Then, $\hyperg{G_i\oplusc G_{ij}}=
  \hyperg{G_i}\oplush\hyperg{G_{ij}}\cong_t H_i\oplush
  H_{ij}\in\mathcal F$. Since $\mathcal F$~is closed under
  isomorphism, it follows that~$\hyperg{G_i\oplusc G_{ij}}\in\mathcal
  F\cap\hsmall$. Since $F$~is \full{}, $G_i\oplusc G_{ij}\in F$. With
  the same argumentation, it follows that~$G_j\oplusc G_{ij}\notin
  F$. 

  It follows that $\sim_F$ has infinite index over~$\usmall{2}$ and by
  \autoref{afhlemma}, there is no tree automaton recognizing the parse
  trees in~$F$.
\end{proof}

\paragraph{Excluding expressibility in monadic second-order logic.}
A standard way of showing linear-time solvability of a graph
problem~$F$ on graphs of bounded treewidth is expressing the property
of being a yes-instance of~$F$ in monadic second-order logic of
graphs~\citep[Section~10.6]{Nie06}.

Previously, we have seen how to show that some
hypergraph problem~$\mathcal F$ cannot be solved by a finite tree
automaton. An immediate consequence is that the property of being a
yes-instance for~$\mathcal F$ is not expressible in monadic
second-order logic for hypergraphs; we now give a little detail about
this connection.

\begin{definition}[Monadic second-order logic for graphs and
  hypergraphs] A formula in the \emph{MS$_1$-logic for
    $\{1,\dots,\cmax\}$-colored graphs} may consist of the logic
  operators~$\vee,\wedge,\neg$, vertex variables, set variables,
  quantifiers~$\exists$ and~$\forall$ over vertices and vertex sets,
  and the predicates
  \begin{enumerate}[i)]
  \item $x\in X$ for a vertex variable~$x$ and a set~$X$,
  \item $\adj(v,w)$, being true if $v$~and~$w$ are adjacent vertices,
  \item $\lab_i(v)$ for $1\leq i\leq\cmax$, being true if $v$~is a
    vertex with color~$i$,
  \item equality of vertex variables and set
    variables.
  \end{enumerate}
  A formula of the \emph{MS$_2$-logic for hypergraphs} may
  consist of the logic operators~$\vee,\wedge,\neg$, vertex variables,
  hyperedge variables, set variables, quantifiers~$\exists$ and~$\forall$
  over vertices, hyperedges, and sets, and the predicates
  \begin{enumerate}[i)]
  \item$x\in X$ for a vertex or hyperedge variable~$x$ and a set~$X$,
  \item$\inc(e,v)$, being true if $e$~is a hyperedge containing~$v$,
  \item$\adj(v,w)$, being true if $v$~and~$w$ occur in a common
    hyperedge, and
  \item equality of vertex variables, edge variables, and set
    variables.
  \end{enumerate}
  We use upper-case letters for set variables and lower-case letters
  for vertex and hyperedge variables.
\end{definition}

\newcommand{\mst}{MS$_2$}
\newcommand{\mso}{MS$_1$}

\begin{corollary}\label{nomso}
  Let~$\mcF{}\subseteq \hsmall{}$ be a hypergraph problem such that
  $\sim_{\mcF{}}$ has infinite index over~$\hsmall$.
  Then, there is no MS$_2$-formula that a hypergraph~$H\in\hsmall$
  satisfies if and only if~$H\in\mcF{}$.
\end{corollary}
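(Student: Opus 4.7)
The plan is to derive a contradiction by showing that an MSO formula for $\mathcal{F}$ on hypergraphs would yield a finite tree automaton recognizing $F$, and hence, by \autoref{afhlemma}, a finite index for $\sim_{\mathcal{F}}$ over $\hsmall$.

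First, I would translate MSO formulas for hypergraphs into MSO formulas for $\{1,2\}$-colored graphs. Suppose for contradiction that there is an MSO formula $\phi$ such that, for every $H\in\hsmall$, we have $H\models\phi$ if and only if $H\in\mathcal{F}$. Let $F\subseteq\usmall{2}$ be the \full{} graph problem with $\hyperg{F}=\mathcal{F}$. I would construct an MSO formula $\phi'$ over $\{1,2\}$-colored graphs such that for every $G\in\usmall{2}$ that is a \wcg{t} we have $G\models\phi'$ if and only if $\hyperg{G}\models\phi$; for graphs $G\in\usmall{2}$ that are not \wcg{t}s (i.e., violate the bipartiteness/coloring constraints), $\phi'$ will simply be forced to false. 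The translation is purely syntactic: replace every vertex quantifier $\exists v$ of $\phi$ by a color-1 vertex quantifier $\exists v\,(\lab_1(v)\wedge\ldots)$, every hyperedge quantifier $\exists e$ by a color-2 vertex quantifier $\exists e\,(\lab_2(e)\wedge\ldots)$, and every atomic predicate $\inc(e,v)$ by $\adj(e,v)$; set quantifiers split analogously into two quantifiers, one over subsets of color-1 vertices (hypergraph vertices) and one over subsets of color-2 vertices (hyperedges). The well-formedness of this translation rests on \autoref{obspscomplete}(i), which gives the identification between $\hyperg{G}$ and the bipartite graph $G$.

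Next, I would invoke the classical result that a graph property expressible in MSO on $\{1,\dots,\cmax\}$-colored graphs of treewidth at most $t-1$ is recognizable by a finite tree automaton operating on the parse trees over the size-$t$ operators from \autoref{ourops}; this is the colored analog of the result cited as \citep[Theorem~6.4(1)]{CE12}, and in our setting it can be obtained directly by applying that theorem to the two-colored incidence graphs in $\usmall{2}$. Applied to $\phi'$, this yields a finite tree automaton that accepts exactly the parse trees generating graphs in $F$.

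Finally, by \autoref{afhlemma}, the existence of such a tree automaton is equivalent to $\sim_{\mathcal{F}}$ having finite index over $\hsmall$, contradicting the hypothesis. The main obstacle is really only the translation step: one must be careful that a \wcg{t} $G\in\usmall{2}$ is essentially the incidence graph of $\hyperg{G}$, so that vertex/hyperedge quantification in the hypergraph corresponds faithfully to color-restricted vertex quantification in $G$, and that the edge predicate $\inc$ in the hypergraph matches adjacency between color-1 and color-2 vertices in the incidence graph. Once the translation is laid out, the remainder of the argument is merely the composition of \citep[Theorem~6.4(1)]{CE12} with our \autoref{afhlemma}.
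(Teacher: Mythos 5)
Your proposal is correct and takes essentially the same route as the paper: assume an MSO formula~$\phi$ exists, translate it syntactically into an MSO formula over $\{1,2\}$-colored graphs (restricting quantifiers to the appropriate colors and replacing $\inc$ by $\adj$), invoke Courcelle's theorem to get a tree automaton recognizing the parse trees in~$F$, and conclude a contradiction via the Myhill--Nerode equivalence. The only cosmetic difference is that you close the argument by applying \autoref{afhlemma} directly (automaton exists $\Rightarrow$ finite index of $\sim_{\mathcal F}$ over $\hsmall$), whereas the paper first passes to $\sim_F$ via \autoref{afclemma} and then back; both are valid and amount to the same composition. One small point worth being slightly more explicit about: besides restricting quantifiers by color, the translated formula must also assert (as the paper does with its \emph{all-labeled} and \emph{bipartite} conjuncts) that the graph is actually a \wcg{t}, so that $\{G\in\usmall{2}\mid G\models\phi'\}$ equals the \full{} set~$F$ exactly; you gesture at this (``forced to false'' on non-\wcg{t}s) but the explicit conjuncts are what make it work.
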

\begin{proof}
  \looseness=-1 Let $F\subseteq\usmall{2}$ be \full{} such
  that~$\hyperg{F}=\mcF{}$ and assume, towards a contradiction, that
  there is an \mst{}-formula~$\varphi$ for hypergraphs such
  that~$\mcF{}=\{H\in\hsmall\mid H\text{ satisfies }\varphi\}$. We
  will turn~$\varphi$ into an \mso{}-formula~$\varphi^*$ for colored
  graphs such that a hypergraph~$H\in\hsmall$ satisfies~$\varphi$ if
  and only if all graphs $G\in\usmall{2}$ with~$\hyperg{G}=H$
  satisfy~$\varphi^*$. That is, $F=\{G\in\usmall{2}\mid G\text{
    satisfies }\varphi^*\}$. Courcelle's
  theorem~\citep[Theorem~6.3(2)]{CE12} shows that~$\varphi^*$ can be
  turned into a tree automaton~$\mathcal A_{\varphi^*}$ such that the
  parse tree of a graph~$G\in\usmall{2}$ is accepted by~$\mathcal
  A_{\varphi^*}$ if and only if
  $G$~satisfies~$\varphi^*$. Consequently, $A_{\varphi^*}$
  recognizes~$F$.  This, by \autoref{afhlemma},
  contradicts~$\sim_\mcF{}$ having infinite index.

  It remains to describe the transformation from~$\varphi$
  to~$\varphi^*$. To this end, recall that, by \autoref{liftbip} of
  \wcg{t}s, the color-1 vertices in a graph~$G\in\usmall{2}$
  with~$\hyperg{G}=H$ represent vertices of~$H$ while the color-2
  vertices in~$G$ represent hyperedges of~$H$. Hence, the vertex and
  hyperedge variables in~$\varphi$ both become vertex variables
  in~$\varphi^*$. Moreover, the formula~$\varphi^*$ makes sure that
  the graph~$G\in\usmall{2}$ satisfying~$\varphi^*$ is a \wcg{t}, that
  is, vertices of the same color are nonadjacent in~$G$ and each
  vertex in~$G$ has a color.  Thus, we let
  \begin{align*}
    \varphi^* &:= \text{all-labeled}\wedge \text{bipartite} \wedge \varphi',\\
    \text{all-labeled} &:= \forall v[\lab_1(v)\vee\lab_2(v)],\\
    \text{bipartite} &:= \forall v\forall
    w[(\lab_1(v)\wedge\lab_2(w))\vee(\lab_2(v)\wedge\lab_1(w))\vee\neg\adj(v,w)],
  \end{align*}
  where we obtain~$\varphi'$ by replacing terms in~$\varphi$ referring
  to hypergraphs by equivalent terms referring to incidence graphs.
  The term replacement translates incidence and adjacency of
  hypergraph objects into adjacency of the corresponding incidence
  graph vertices. Specifically, we replace the following hypergraph
  \mst{}-expressions on the left-hand side by the equivalent graph
  \mso{}-expressions on the right-hand side:
  \begin{align*}
    \inc(e,v) &\equiv \lab_2(e)\wedge\lab_1(v)\wedge\adj(e,v),\text{ and}\\\displaybreak[3]
    \adj(v,w) &\equiv \lab_1(v)\wedge\lab_1(w)\wedge\exists e[\lab_2(e)\wedge\adj(e,v)\wedge\adj(e,w)].
  \end{align*}
  Quantification over the vertices, hyperedges of a
  hypergraph~$H=(V,E)$ are realized by the term replacements
  \begin{align*}
    \exists v\in V[\psi]&\equiv \exists v[\lab_1(v)\wedge\psi],&
    \exists S\subseteq V[\psi]&\equiv \exists S[\forall x[x\notin S\vee \lab_1(x)]\wedge\psi],\\
    \exists e\in E[\psi]&\equiv \exists e[\lab_2(e)\wedge\psi],&    \exists S\subseteq E[\psi]&\equiv \exists S[\forall x[x\notin S\vee \lab_2(x)]\wedge\psi],\\
    \forall v\in V[\psi]&\equiv \forall v[\neg\lab_1(v)\vee\psi],&
    \forall S\subseteq V[\psi]&\equiv \forall S[\exists x[x\in S\wedge\neg\lab_1(x)]\vee\psi],\\
    \forall e\in E[\psi]&\equiv \forall e[\neg\lab_2(v)\vee\psi],&    \forall S\subseteq E[\psi]&\equiv \forall S[\exists x[x\in S\wedge\neg\lab_2(x)]\vee\psi].
  \end{align*}
\end{proof}

\section{Hypergraph Cutwidth is fixed-parameter linear}\label{sec5}
In this section, we use the Myhill-Nerode theorem for hypergraphs to
show that \hgcw{} is fixed-parameter linear. We first formally define
the problem.

Let $H=(V,E)$ be a hypergraph.  A \emph{linear layout} of $H$ is an
injective map~$l\colon V\to\mathbb R$ of vertices onto the real line.
The \emph{cut at position~$i\in\mathbb R$} in~$H$ with respect to~$l$,
denoted $\normtheta{l}{H}{i}$, is the set of hyperedges that contain
at least two vertices~$v,w$ such that~$l(v)<i<l(w)$.  We will also say
that $v$~is to the \emph{left} of~$i$ and that $w$~is to the
\emph{right} of~$i$. The \emph{cutwidth} of the layout $l$ is
\[\max_{i\in\mathbb R} {|\normtheta{l}{H}{i}|}.\]  The \emph{cutwidth} of
the hypergraph $H$ is the minimum cutwidth over all the linear layouts
of $H$. The hypergraph shown in \autoref{cwbeispiel} has cutwidth at
most three.  The \hgcw{} problem is defined as follows.

\decprob{\hgcw{}}{A hypergraph~$H=(V,E)$ and a natural
  number~$k$.}{Does $H$ have cutwidth at most~$k$?}{$k$}

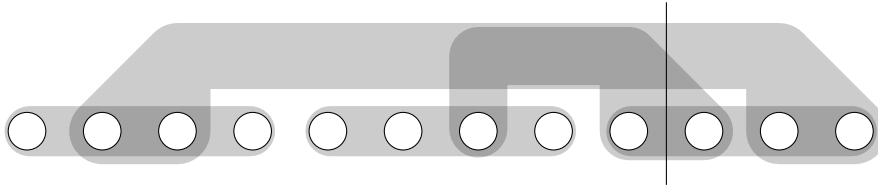
\begin{figure}
  \centering
  \begin{tikzpicture}[shorten >=1pt,
    shorten <=1pt]
    \tikzstyle{edge} = [color=black,opacity=.2,line cap=round, line
    join=round, line width=19pt]

    \foreach \l in {1,...,12} {
      \node [blue] (v\l) at (\l,0) {}; }

    \begin{pgfonlayer}{background}
      \draw [edge] (v1.center) -- (v4.center); \draw [edge]
      (v5.center) -- (v8.center); \draw [edge] (v9.center) --
      (v12.center);
      \draw [edge, line width=25pt] (3,1) -- (v3.center) --
      (v2.center) -- (3,1) -- (11,1) -- (v11.center) -- (v12.center)
      -- (11,1);

      \draw [edge, line width=22pt] (v7.center) -- (7,1) -- (9,1) --
      (v9.center) -- (v10.center) -- (9,1);
    \end{pgfonlayer}
    \draw (9.5,-0.75) -- (9.5,1.75);
  \end{tikzpicture}
  \caption{The shown hypergraph has cutwidth at most three since the
    black line cuts a maximum number of hyperedges in the presented
    linear layout. Actually, it is possible to change the linear
    layout to see that the hypergraph has cutwidth two.}
  \label{cwbeispiel}
\end{figure}
\noindent To solve \hgcw{} using the Myhill-Nerode theorem for
hypergraphs, in the remainder of this section we consider a
constant~$k$ and the class \brc{k} of all hypergraphs with cutwidth at
most $k$. We will solve \brc{k}
in linear time using \autoref{hisoclosed}. This will immediately yield
the main result of this section:

\begin{theorem}\label{easythm}
  \hgcw{} is fixed-parameter linear. Specifically, there is an
  algorithm that, when given a hypergraph~$H$ as hyperedge list and a
  constant~$k$, decides in linear time whether $H$~has cutwidth at
  most~$k$.
\end{theorem}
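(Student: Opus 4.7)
The plan is to apply \autoref{hisoclosed} directly. Since the property of having cutwidth at most~$k$ is closed under hypergraph isomorphism, it suffices to exhibit a constant~$t = g(k)$ bounding the incidence treewidth of every yes-instance, together with a constant bound on the index of~$\sim_{\brc{k}}$ over~$\hlarge$; the linear-time algorithm then falls out of the corollary.

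First, I would verify that every hypergraph of cutwidth at most~$k$ has incidence treewidth bounded by a function of~$k$. Given a linear layout~$l$ of cutwidth at most~$k$, order the vertices of~$H$ according to~$l$ and, for position~$i$, form a bag consisting of the $i$-th vertex together with the at most $k$~hyperedges in~$\normtheta{l}{H}{i}$. A straightforward check confirms the three tree-decomposition axioms for the incidence graph, yielding incidence pathwidth (hence treewidth) at most~$k$. Thus it is enough to work with $t$-boundaried hypergraphs for~$t \coloneqq k+2$, say.

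The main step is to bound the index of~$\sim_{\brc{k}}$ over~$\hlarge$. For each $t$-boundaried hypergraph~$G$, I would define a \emph{signature} that encodes exactly the information needed to decide, for every gluable~$H$, whether $G\oplush H$ has cutwidth at most~$k$. Intuitively, a layout of~$G\oplush H$ can be decomposed into a linear order~$\pi$ of the boundary objects together with an interleaving of the interior of~$G$ and the interior of~$H$ into the ``slots'' defined by~$\pi$; the signature of~$G$ records, for each $\pi$, each slot-assignment function~$c\colon\{0,\dots,t\}\to\{0,1,\dots,k,\infty\}$ stating the maximum cutwidth budget available in each slot, and each choice of which boundary hyperedges to regard as extending into~$H$ at each slot, whether~$G$ admits an interior layout compatible with~$(\pi,c)$ and the boundary-hyperedge profile. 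The number of orderings~$\pi$ is at most~$t!$, the number of budget functions is at most~$(k+2)^{t+1}$, and the number of profiles is at most~$(2^t)^{t+1}$, so the signature takes only finitely many values, all bounded by a function of~$k$ alone.

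The hard part will be verifying that signature-equivalent hypergraphs are equivalent under~$\sim_{\brc{k}}$. For this I plan to argue both directions symmetrically: every optimal layout of $G_i\oplush H$ induces some~$(\pi,c)$ and profile that the interior of~$G_i$ realizes, and, conversely, a realization of the same~$(\pi,c)$ and profile by the interior of~$G_j$ can be spliced into the $H$-layout to produce a layout of $G_j\oplush H$ of the same cutwidth. The subtle point is that gluing merges boundary hyperedges: a merged hyperedge crosses a slot if and only if at least one of its two pieces already crosses it, so the boundary-hyperedge profile is the right bookkeeping to make the splicing cutwidth-preserving. With the finite-index bound in hand, \autoref{hisoclosed} delivers the algorithm and hence \autoref{easythm}.
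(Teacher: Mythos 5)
Your high-level plan coincides with the paper's: apply \autoref{hisoclosed} after (a) bounding the incidence treewidth of every yes-instance by a function of~$k$ (this is \autoref{twbound}) and (b) bounding the index of~$\congr$ over~$\hlarge$ via a constant-size description of a hypergraph's interface with any gluing partner (the paper's method of test sets, \autoref{testdef}--\autoref{shorttests}). Two issues, one cosmetic and one substantive.

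Cosmetic: in the incidence-treewidth construction, the bag ``$v_i$ together with $\normtheta{l}{H}{i}$'' does not cover the incidence-graph edge between $v_i$ and a hyperedge~$e\ni v_i$ whose other vertices all lie strictly to the right of~$v_i$: such an~$e$ is not in~$\normtheta{l}{H}{i}$ (nor in~$\normtheta{l}{H}{i+1}$). The paper repairs this by using two bags per vertex, $L_i=\normtheta{l}{H}{i-1/2}\cup\{v_i\}$ and $R_i=\normtheta{l}{H}{i+1/2}\cup\{v_i\}$. The ``straightforward check'' you defer to would flag this.

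Substantive gap: a single budget value $c(i)\in\{0,\dots,k,\infty\}$ per slot is not enough information. Fix a slot~$i$. If $H$ has several interior vertices in that slot, the unlabeled hyperedges of~$H$ produce a cut that varies from gap to gap \emph{inside} the slot, so the residual budget $k-w$ is a whole sequence, not a scalar. Whether $G$'s interior pieces can be interleaved into those gaps depends on the exact pattern: a slot with residual-budget sequence $(k{-}1,\,0,\,k{-}1)$ and a slot with constant residual budget $k{-}1$ accept different sets of $G$-pieces, yet under your encoding they collapse to the same $c(i)$. Consequently, two hypergraphs with identical signatures in your sense need not satisfy $G_1\oplush H\in\brc{k}\iff G_2\oplush H\in\brc{k}$, and your splicing argument breaks exactly at the step where you try to slide $G_2$'s interior into $H$'s gaps. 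What makes the paper's version finite is not a one-number-per-slot summary but \autoref{shorttests}: the full per-position budget sequence $(w_0,\dots,w_n)$ of a test can be canonicalized, by the reduction rules (R1)--(R3), to a zigzag of length at most $2k+2$ within each load pattern, giving a test of size $2t(t+1)(2k+2)$. The corrected budget description per slot is therefore a bounded-length alternating sequence (length $\Theta(k)$), not a single number; with that change your counting argument still yields finitely many signatures (roughly $(k+2)^{O(t k)}$ per slot instead of $(k+2)$) and the rest of your plan survives. As written, though, the claim that signature-equivalence refines~$\congr$ is false.
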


\noindent In order to use \autoref{hisoclosed} to prove
\autoref{easythm}, we first show that the hypergraphs in \brc{k} have
a constant upper bound on their \iw{}. Then, we show that the
\crc{}~$\congr$ has finite index. By \autoref{hisoclosed}, it then
follows that \brc{k} is solvable in~$f(k)\cdot n$ time, completing the
proof of \autoref{easythm}.

\begin{lemma}\label{twbound}
  Let $H$~be a hypergraph.  If $H$~has cutwidth at most~$k$, then
  \begin{enumerate}[i)]
  \item $H$~has \iw{} at most $\max \{k,1\}$, and
  \item the incidence graph of~$H$ has pathwidth at
  most~$k+1$.
  \end{enumerate}
\end{lemma}
\begin{proof}
  Suppose that $H=(V,E)$~has cutwidth at most~$k$.  Let $H'=(V,E')$
  denote the hypergraph obtained from~$H$ by removing all hyperedges
  of size at most one.  Consider a linear layout~$l$ of cutwidth at
  most~$k$ of the vertices of~$H'$.  Without loss of generality,
  assume that $l$~maps to the natural numbers~$[n]$ and
  let~$V=\{v_1,\dots,v_n\}$ be such that $l(v_i)=i$. We construct a
  path decomposition for the incidence graph~$\mathcal I(H')$ of~$H'$
  with the bags~$L_1, R_1 ,L_2, R_2, \dots , L_{n}, R_{n}$ that are
  connected by a path in this order.  For every~$i\in[n]$, let $L_i :=
  \normtheta{l}{H'}{i-1/2} \cup \{v_i\}$ and $R_i :=
  \normtheta{l}{H'}{i+1/2} \cup \{v_{i}\}$, that is, $L_i$~contains
  $v_i$~and all hyperedges cut at~$i-1/2$, while $R_i$~contains
  $v_i$~and all hyperedges cut at~$i+1/2$. Herein, recall that the
  hyperedges of~$H'$ are vertices in~$\mathcal I(H')$.  We now prove
  that this is a path decomposition for~$\mathcal I(H')$. 

  First, we show that each edge of~$\mathcal I(H')$ is contained in at
  least one bag.  Let $\{v_i,e\}$ be any edge in~$\mathcal I(H')$ for
  some vertex~$v_i\in V$ and a hyperedge~$e\in E'$. We show that~$v_i$
  and~$e$ occur together in at least one bag. Since $v_i\in e$ and
  $|e|\ge 2$, the hyperedge~$e$ contains at least one vertex to the
  left or to the right of~$v_i$. Hence, we have $e\in
  \normtheta{l}{H'}{i-1/2}$ or $e\in \normtheta{l}{H'}{i+1/2}$.
  Therefore, it holds that $e\in R_{i}$ or $e\in L_i$. Since $v_i\in
  R_{i} \cap L_i$, the vertices~$v_i$ and~$e$ occur together in at
  least one bag.
 
  Now, we show that the bags containing a vertex of~$\mathcal I(H')$
  induce a subpath in this path decomposition.  Obviously, each
  vertex~$v_i\in V$ is contained in two bags of the path
  decomposition: in~$L_i$ and~$R_{i}$. These bags are consecutive and
  thus induce a path.  Finally, consider a hyperedge~$e\in E'$. It
  occurs in all bags~$R_i, L_{i+1}, R_{i+1}, \dots, L_{j-1}, R_{j-1},
  L_j$, where $v_i$~is the leftmost vertex in the layout~$l$ occurring
  in~$e$ and $v_j$~is the rightmost vertex in~$l$ occurring in~$e$.
  These bags are all consecutive on the path and, thus, induce a path.
   The width of this path decomposition is $\max_{0\le i\le n}
  |\normtheta{l}{H'}{i+1/2}| \le k$. 

  (i) To obtain a tree decomposition
  for~$\mathcal I(H)$ from the path decomposition of~$\mathcal I(H')$,
  we only need to take care of hyperedges of size at most one.  For
  every hyperedge~$e\in E$ of size one, add a new bag $\{e, v\}$,
  where $v$~is the unique vertex contained in~$e$, and make it
  adjacent to an arbitrary bag containing~$v$.  For every empty
  hyperedge~$e\in E$, add a new bag~$\{e\}$, and make it adjacent to
  an arbitrary bag.  In this way, we obtain a tree decomposition for
  the \br{}~$\mathcal I(H)$ of~$H$ of width at most~$\max \{k,1\}$.
  Thus, $H$ has \iw{} at most $\max \{k,1\}$.

  (ii) To obtain a path decomposition for~$\mathcal I(H)$ from the path
  decomposition of~$\mathcal I(H')$, one can proceed similarly: For every
  hyperedge~$e\in E$ of size at most one, choose some bag~$B$ of size
  at most~$k+1$ with $e\subseteq B$ and add the bag~$B\cup\{e\}$
  as its neighbor to the path decomposition. Such a bag~$B$ exists since the
  width of the path decomposition of~$H'$ is~$k$.  The resulting path
  decomposition will contain bags of size~$k+2$ and, thus, has width
  $k+1$.  
\end{proof}

\noindent To obtain a linear-time algorithm for \brc{k} using
\autoref{hisoclosed} and thus proving \autoref{easythm}, it remains to
prove that the \crc{}~$\congr$ of \brc k has finite index
over~$\hlarge$ for all~$t\leq k+1$.

\bigskip\noindent To show that $\congr$~has finite index
over~$\hlarge$, we show that, given a $t$-boundaried hypergraph~$G$,
only a finite number of bits of information about a $t$-boundaried
hypergraph~$H$ is needed in order to decide whether $G\oplush
H\in\brc{k}$. To this end, we employ the method of test
sets~\citep[Section~12.7]{DF13}: let $\mathcal T$ be a set of objects
called \emph{tests} (we will formally define a test later). A \bndg t
can \emph{pass} a test. For \bndh ts $G_1$ and~$G_2$, let $G_1\simtest
G_2$ if and only if $G_1$ and~$G_2$ pass the same subset of tests
in~$\mathcal T$. Obviously, $\simtest$ is an equivalence relation. Our
aim is to find a set $\mathcal T$~of tests such that $\simtest$
refines~$\congr$ (that is, $G_1\simtest G_2$ implies $G_1\congr
G_2$). Then, if $\simtest$~has finite index, so does $\congr$. To
show that $\simtest$~has finite index, we show that we can find a
\emph{finite} set $\mathcal T$ such that $\simtest$ refines~$\congr$.

Intuitively, we will define, for a hypergraph~$H$, an $H$-test that a
hypergraph~$G$ satisfies if~$G\oplush H\in\brc{k}$. We define the test
so that it contains only the necessary information of~$H$ and so that
we can later shrink all tests to equivalent tests of constant size. We
now formally define a test for~\brc{k}. The definition is illustrated
in \autoref{fig:glue2} and, after the definition, we give an intuitive
description.

\begin{definition}\label{testdef}
  A \emph{size-$n$ test}~$T$ for \brc{k} over~$\hlarge$ is a
  triple~$(\pi,S,k)$, where
  \begin{itemize}
  \item $\pi\colon\{1,\dots,t\}\to\{1,\dots,n\}$~is a map of boundary
    labels to integer positions, and
  \item $S=(S_0,S_1,\dots,S_n)$~is a sequence of
    pairs~$S_i=(w_i,E_i)\in \mathbb \{0,\dots,k\}\times
    2^{\{1,\dots,t\}}$ such that if~$\ell\in E_p$ and~$\ell\in E_q$,
    then $\ell\in E_i$ for all~$i\in\{p,\dots,q\}$.
  \end{itemize}
  Now, let $G$~and~$H$ be \bndh{t}s such that $G\oplush H\in\brc k$
  and $l:V\to\mathbb R$~be a linear layout for~$G\oplush H$ with
  minimum cutwidth, which, without loss of generality, maps vertices
  of the $n$-vertex hypergraph~$H$ to the integer
  positions~$\{1,\dots,n\}$ and the non-boundary vertices of~$G$ to
  non-integer positions.

  We define an \emph{$H$-test}~$T=(\pi,S,k)$ for \brc{k} of size~$n$
  as follows: for a vertex~$v\in\partial(H)$ with label~$\ell$, set
  $\pi(\ell):=l(v)$. Finally, for $i\in\{0,\dots,n\}$, we
  define~$S_i:=(w_i, E_i)$ with
  \begin{itemize}
  \item $w_i$ being the number of unlabeled hyperedges in~$H$ containing
    vertices~$v,w$ of~$H$ with $l(v)\leq i< l(w)$, and
  \item $E_i$ being the set of labels of hyperedges in~$H$ containing
    vertices~$v,w$ of~$H$ with $l(v)\leq i\leq l(w)$.
  \end{itemize}
  
\end{definition}

\begin{figure}[t]\centering
  \begin{tikzpicture}[x=0.85cm,shorten >=1pt, shorten <=1pt, on grid, auto]
    \tikzstyle{edge} = [color=black,opacity=.2,line cap=round, line join=round, line width=19pt]    

    \foreach \l in {1,...,12} {
      \node [blue, inner sep=1pt] (v\l) at (\l,0) {\ifnum\l=7{1}
        \else\ifnum\l=8{2}
        \fi
        \fi};
    }

    \node (l) at (6,1) {3};
    \begin{pgfonlayer}{background}
      \draw [edge] (v1.center) -- (v4.center);
      \draw [edge] (v5.center) -- (v8.center);
      \draw [edge] (v9.center) -- (v12.center);
      \draw [edge, line width=25pt] (3,1) -- (v3.center) -- (v2.center) -- (3,1) -- (11,1) -- (v11.center) -- (v12.center) -- (11,1);

      \draw [edge, line width=22pt] (v7.center) -- (7,1) -- (9,1) -- (v9.center) -- (v10.center) -- (9,1);

      \draw [decorate, decoration={name=brace,amplitude=5}]  (0.5,1.5) -- (8.5,1.5) node [midway, yshift=0.5cm] {Vertices of $G$};
      \draw [decorate, decoration={name=brace,amplitude=5}]  (6.5,1.7) -- (12.5,1.7)  node [midway, yshift=0.3cm] {Vertices of $H$};

      \node (S0) at (0,-1) {$S_{0}$};
      \node (w0) at (0,-1.5) {$0$};
      \node (E0) at (0,-2) {$\emptyset$};
      \draw [-latex,dashed] (0,0) -- (0,-0.75);

      \foreach \i in {1,...,6} {
        \pgfmathparse{\i+6}\let\j=\pgfmathresult;
        \node (S\i) at (\j,-1) {$S_{\i}$};
        \node (w\i) at (\j,-1.5) {\ifnum\i=0{0}
          \else\ifnum\i<3{1}
          \else\ifnum\i<4{2}
          \else\ifnum\i<5{1}
          \else\ifnum\i<6{1}
          \else{0}
          \fi
          \fi
          \fi
          \fi
          \fi};
        \node (E\i) at (\j,-2) {\ifnum\i=5{$\{3\}$}
          \else\ifnum\i=6{$\{3\}$}
          \else$\emptyset$
          \fi
          \fi};
        \draw [-latex,dashed] (\j,0) -- (\j,-0.75);
      }

      \node (wi) at (-1,-1.5) {$w_i$};
      \node (Si) at (-1,-2) {$E_i$};
      
      \draw (-1.5,-1.25) -- (12.5,-1.25);
      \draw (-0.5,-2.25) -- (-0.5,-0.75);
    \end{pgfonlayer}
  \end{tikzpicture}
  \caption{Construction of the $H$-test illustrated using the glued
    hypergraph $G\oplush H$, where $G$~and~$H$ are the hypergraphs
    shown in \autoref{hggluefig}(a) and \autoref{hggluefig}(b),
    respectively. That is, $G$~and~$H$ have only the vertices
    labeled~1 and~2 in common and both have a hyperedge with
    label~$3$. The vertices of~$H$ are to be understood as lying at
    the positions~$\{1,2,\dots,6\}$ and the non-boundary vertices
    of~$G$ lie in the open interval~$(0,1)$.}
  \label{fig:glue2}
  \label{fig:hg-glue}
\end{figure}
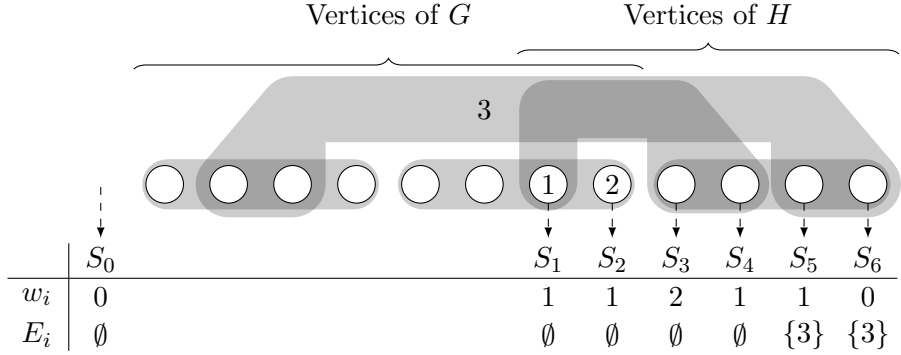

\noindent The goal of \autoref{testdef} is that if a hypergraph~$G$
passes an $H$-test for \brc{k}, then $G\oplush H\in\brc k$. More
precisely, we want that if a hypergraph~$G$ passes an $H$-test, then
$G\oplush H$~has a linear layout~$l$ of cutwidth at most~$k$ that lays
out the vertices of~$H$ in the same way as the layout used to create
the $H$-test. Of course, the $H$-test does not record the precise
structure of~$H$ but only the most important information:

Assume that we want to verify that the cutwidth of the layout~$l$
of~$G\oplush H$ is at most~$k$ without knowing~$H$ but only
knowing~$G$ and the~$H$-test. Then, for any non-integer position~$i$,
the value~$w_{\lfloor i\rfloor}$ counts the unlabeled hyperedges
of~$H$ cut at~$i$. Thus, to the size of any cut for~$G$ at
position~$i\in\mathbb R\setminus\mathbb N$, we have to add the
value~$w_{\lfloor i\rfloor}$. For labeled hyperedges of~$H$, things
are more difficult: they contain vertices of~$G\oplush H$ that
originate from~$G$ as well as from~$H$. Since an $H$-test corresponds
to a fixed layout for~$H$, to count a hyperedge with label~$\ell$
of~$G\oplush H$ that is cut at some position, it is sufficient to know
the vertices of the hyperedge with label~$\ell$ in~$G$ and the
positions of the leftmost and the rightmost vertex of~$H$ contained in
the hyperedge with label~$\ell$ in~$H$. However, in order to easier
shrink all tests to constant size later, we chose a more convenient
way to keep this information in the $H$-test: for any position~$i$
between the leftmost and the rightmost vertex of a hyperedge~$e$
in~$H$ with label~$\ell$, we have $\ell\in E_i$.  We now precisely
define what it means to pass a test.

\begin{definition}\label{passtest}
  Let $G=(V,E)$~be a \bndh t and $T=(\pi,S,k)$~be a test of size~$n$,
  where $S=(S_0,\dots,S_n)$ and $S_i=(w_i,E_i)$.

  A \emph{$T$-compatible layout for~$G$} is an injective
  function~$f\colon V\to\mathbb R$ such that each
  vertex~$v\in \partial(G)$ with label~$\ell$ is mapped to~$\pi(\ell)$
  and such that each vertex~$v\in V\setminus\partial(G)$ is mapped
  into some open interval~$(i,i+1)$ for $0\leq i\leq n$.

  For a hyperedge~$e$ in~$G$, we define the \emph{positions of~$e$} as
  \[
  \Pos(e):=
  \begin{cases}
    \{f(v)\mid v\in e\}&\text{if $e$ is unlabeled},\\
    \{f(v)\mid v\in e\}\cup\{i\mid \ell\in E_i\}&\text{if $e$ has label~$\ell$}.
  \end{cases}
  \]
  The \emph{\joint{} cut at~$i$} in~$G$ with respect to~$f$ is the
  set~$\mytheta{f}{G}{i}$ of hyperedges~$e$ of~$G$ for which there
  are positions~$j,k\in\Pos(e)$ with $j<i<k$.  The \emph{\joint{}
    cutwidth of~$f$} is
\[\max_{i\in \mathbb R\setminus\mathbb N}
  (|\mytheta{f}{G}{i}|+w_{\lfloor i\rfloor}).\]
  Finally, \emph{$G$~passes the test~$T$} if there is a $T$-compatible
  layout~$f$ for~$G$ whose \joint{} cutwidth is at most~$k$.
\end{definition}

\noindent We can now show that, indeed, if two graphs satisfy the same
tests, then they are equivalent under~$\congr$. We will then show
that, actually, there is only a finite set of pairwise nonequivalent
tests, thus showing that~$\congr$ has finite index.

\begin{lemma}\label{firstrefinement}
  For $\mathcal T$ being the set of all tests for~\brc{k}, the equivalence
  relation $\simtest$ refines~$\congr$.
\end{lemma}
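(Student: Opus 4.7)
The plan is to show, for $G_1 \simtest G_2$ and any $H \in \hlarge$ gluable to both, that $G_1 \oplush H \in \brc k$ implies $G_2 \oplush H \in \brc k$; the reverse direction is symmetric, jointly giving $G_1 \congr G_2$.

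First I will take a minimum-cutwidth layout~$l$ for $G_1 \oplush H$ and, without changing its cutwidth, rescale it so that the $n$ vertices of~$H$ occupy the integer positions $\{1,\dots,n\}$ and each non-boundary vertex of~$G_1$ lies in an open interval $(i,i+1)$ with $0 \le i \le n$. This puts $l$ in the exact form required by \autoref{testdef}, so that the associated $H$-test~$T = (\pi, S, k)$ is well defined. I then verify that the restriction $f := l|_{V(G_1)}$ is a $T$-compatible layout for~$G_1$ and satisfies the identity
\[|\normtheta{l}{G_1 \oplush H}{i}| = |\mytheta{f}{G_1}{i}| + w_{\lfloor i\rfloor}\]
at every non-integer~$i$. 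The identity is established by classifying each hyperedge of~$G_1 \oplush H$ into three types: (a)~an unlabeled hyperedge from $G_1$ is cut in~$l$ at~$i$ iff its $f$-positions straddle~$i$; (b)~an unlabeled hyperedge of~$H$ is cut in~$l$ at~$i$ iff it contributes one to~$w_{\lfloor i \rfloor}$; and (c)~a labeled hyperedge $e = e_1 \cup e_2$ with $e_1 \subseteq V(G_1)$ and $e_2 \subseteq V(H)$ is cut in~$l$ at~$i$ iff $\Pos(e_1)$ straddles~$i$, using that $\{j : \ell \in E_j\}$ is precisely the integer interval $[l_{\min}(e_2), l_{\max}(e_2)]$ spanned by~$e_2$ in~$l$. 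A short observation that every cut of~$l$ at an integer position is dominated by a cut at a nearby non-integer position then gives that the joint cutwidth of~$f$ is at most the cutwidth of~$l$, hence at most~$k$; so $G_1$ passes~$T$.

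Because $G_1 \simtest G_2$, the hypergraph~$G_2$ also passes~$T$ via some $T$-compatible layout~$f'$ of joint cutwidth at most~$k$. I will then assemble a layout~$l''$ for $G_2 \oplush H$ by taking $f'$ on~$V(G_2)$ and $l|_{V(H)}$ on~$V(H)$; these agree on the boundary since every boundary vertex with label~$\ell$ is placed at~$\pi(\ell)$ by both. Applying the three-case identity in the opposite direction now gives $|\normtheta{l''}{G_2 \oplush H}{i}| = |\mytheta{f'}{G_2}{i}| + w_{\lfloor i \rfloor} \le k$ at every non-integer~$i$, so $l''$ has cutwidth at most~$k$ and witnesses $G_2 \oplush H \in \brc k$.

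The main obstacle will be case~(c) of the identity: the test stores no individual positions of~$H$'s vertices for a labeled hyperedge, only the integer interval $[l_{\min}(e_2), l_{\max}(e_2)]$ encoded in the sets~$E_j$. The definition of~$\Pos$ in \autoref{passtest} is engineered so that augmenting $\{f(v) : v \in e_1\}$ with this interval captures whether $e_1 \cup e_2$ is cut at~$i$ without knowing the positions of vertices in~$e_2$ individually. Once this bookkeeping is in hand, the symmetric use of the identity in both transfer directions yields $\simtest$ refining~$\congr$.
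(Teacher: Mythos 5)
Your proposal is correct and follows essentially the same route as the paper's own proof: construct the $H$-test from an optimal layout $l$ of $G_1\oplush H$ with $H$'s vertices at integer positions, show that the restriction of $l$ to $V(G_1)$ is a $T$-compatible layout of joint cutwidth at most $k$, invoke $G_1\simtest G_2$ to obtain a $T$-compatible layout $f'$ of $G_2$, and stitch $f'$ together with $l|_{V(H)}$ into a layout of $G_2\oplush H$. The paper phrases the two directions as separate Claims (if $G_1\oplush H\in\brc k$ then $G_1$ passes some $H$-test; if $G_2$ passes an $H$-test then $G_2\oplush H\in\brc k$) and only proves the two one-sided inequalities it needs, while you package the same three-way case analysis of hyperedges (unlabeled of $G_1$, unlabeled of $H$, labeled unions) as a single equality $|\normtheta{l}{G_1\oplush H}{i}|=|\mytheta{f}{G_1}{i}|+w_{\lfloor i\rfloor}$; that equality does hold and subsumes both inequalities, so this is a harmless repackaging rather than a different argument. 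Two small points worth tightening: the ``integer positions are dominated by nearby non-integer positions'' remark is actually needed in the second transfer step (bounding the cutwidth of $l''$ over all of $\mathbb R$ from the bound at non-integers), not in the first, since joint cutwidth is by definition a maximum over non-integers only; and you should note explicitly (as the paper does) that $l''$ is injective because $f'$ places non-boundary vertices of $G_2$ at non-integer positions while $H$'s vertices sit at integers.
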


\noindent To prove \autoref{firstrefinement}, we show that if two \bndh{t}s
$G_1,G_2$ pass the same subset of tests of~$\mathcal T$, then, for all
\bndh{t}s~$H$, $G_1\oplush H\in\brc k$ if and only if $G_2\oplush
H\in\brc k$.  The proof is based on the following two claims.

\begin{cla}\label{cla1app}
  If $G_1\oplush H\in\brc k$, then $G_1$ passes
    some $H$-test.
\end{cla}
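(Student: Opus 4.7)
The plan is to extract the $H$-test directly from a minimum-cutwidth linear layout of~$G_1\oplush H$ and then verify that the restriction of that layout to~$V(G_1)$ is a $T$-compatible layout of \joint{} cutwidth at most~$k$.

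Concretely, let $l$ be a minimum-cutwidth layout of~$G_1\oplush H$; by hypothesis its cutwidth is at most~$k$. After a harmless rescaling we may assume that $l$ places the $n$ vertices of~$H$ at the integer positions $\{1,\dots,n\}$ and all remaining vertices of~$G_1$ at non-integer positions. Define the $H$-test $T=(\pi,S,k)$ verbatim as in~\autoref{testdef}. For each label~$\ell$, the indices $j$ with $\ell\in E_j$ form the closed integer interval $[p_1,p_m]$ spanned by the vertices of~$H$ contained in the hyperedge with label~$\ell$, so the monotonicity condition required of a test is immediate. Let $f$ be the restriction of~$l$ to~$V(G_1)$; since the boundary vertices of~$G_1$ are, by gluing, identical with those of~$H$, they occupy the positions prescribed by~$\pi$, while the remaining vertices of~$G_1$ fall into open unit intervals, so $f$ is $T$-compatible in the sense of~\autoref{passtest}.

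It remains to bound the \joint{} cutwidth of~$f$ by~$k$. For every non-integer~$i$, I plan to establish
\[
|\mytheta{f}{G_1}{i}|+w_{\lfloor i\rfloor}=|\normtheta{l}{G_1\oplush H}{i}|\le k
\]
by partitioning the right-hand cut into three disjoint parts. First, the unlabeled hyperedges of~$H$ contributing to the cut are, by construction, counted precisely by~$w_{\lfloor i\rfloor}$. Second, an unlabeled hyperedge $e$ of~$G_1$ is cut by~$l$ at~$i$ iff it is separated by~$f$ at~$i$, which by~\autoref{passtest} is exactly its contribution to~$\mytheta{f}{G_1}{i}$, because $\Pos(e)=\{f(v)\mid v\in e\}$ in this case. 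Third, for a labeled hyperedge, the union formed under~$\oplush$ is cut at~$i$ in~$l$ iff some position in $\{f(v)\mid v\in e\}\cup[p_1,p_m]$ lies strictly below~$i$ and some lies strictly above, which by the construction of~$E_j$ is exactly the joint-cut criterion in~\autoref{passtest}. Summing the three contributions yields the stated identity.

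The main obstacle is the labeled case: one has to check that recording only the closed integer interval $[p_1,p_m]$ through the sets~$E_j$, rather than the exact multiset of $H$-side vertex positions of the hyperedge, still faithfully captures whether the glued hyperedge is cut at every non-integer~$i$. This reduces to the observation that the cut behavior at~$i$ depends only on whether the $H$-side contributes a position strictly below and one strictly above~$i$, and both events are determined by the endpoints $p_1$ and~$p_m$, which the interval encoding preserves. Once this identification is in place, $f$ witnesses that $G_1$ passes~$T$, completing the claim.
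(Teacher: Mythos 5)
Your proof is correct and follows essentially the same route as the paper: extract the $H$-test from an optimal layout~$l$ of $G_1\oplush H$, observe that the restriction of~$l$ to~$V(G_1)$ is $T$-compatible, and partition each cut into unlabeled hyperedges of~$H$, unlabeled hyperedges of~$G_1$, and glued labeled hyperedges. You prove the sharper equality $|\mytheta{f}{G_1}{i}|+w_{\lfloor i\rfloor}=|\normtheta{l}{G_1\oplush H}{i}|$ where the paper settles for the inequality~$\le$, but the underlying bookkeeping is the same.
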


\begin{cla}
  \label{cla2app} If $G_2$ passes any $H$-test, then $G_2\oplush
    H\in\brc k$.
\end{cla}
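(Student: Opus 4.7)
\emph{Proof plan.} The approach is to combine the $T$-compatible layout $f\colon V(G_2)\to\mathbb R$ of joint cutwidth at most~$k$ (which exists by hypothesis that $G_2$ passes~$T$) with a layout of~$H$ realizing the $H$-test~$T$, and to show that the combined layout of $G_2\oplush H$ has cutwidth at most~$k$. By \autoref{testdef}, $T$~arises from some \bndh t~$G^*$ and some minimum-cutwidth layout~$l^*$ of $G^*\oplush H$ of cutwidth at most~$k$; let $l_H\colon V(H)\to\{1,\dots,n\}$ denote the restriction of~$l^*$ to~$V(H)$, which is a bijection sending each boundary vertex of~$H$ with label~$\ell$ to~$\pi(\ell)$.

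First I would define $l\colon V(G_2\oplush H)\to\mathbb R$ by $l(v):=l_H(v)$ for $v\in V(H)$ and $l(v):=f(v)$ for $v\in V(G_2)\setminus\partial(G_2)$, and then verify that the two definitions agree on the identified boundary $\partial(G_2)=\partial(H)$ by $T$-compatibility of~$f$, and that $l$~is injective, since $H$-vertices occupy integer positions while non-boundary $G_2$-vertices occupy non-integer positions in the open intervals $(p,p+1)$.

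Next, since the cut function $i\mapsto|\normtheta{l}{G_2\oplush H}{i}|$ is piecewise constant with break points only at vertex positions, its maximum is attained at some $i\in\mathbb R\setminus\mathbb N$, so to bound the cutwidth of~$l$ by~$k$ it suffices to establish, for every such~$i$ with $p:=\lfloor i\rfloor$, the identity
\[
|\normtheta{l}{G_2\oplush H}{i}| = |\mytheta{f}{G_2}{i}| + w_p,
\]
whose right-hand side is at most~$k$ because $f$~has joint cutwidth at most~$k$.

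The main obstacle is establishing this identity. I would partition the hyperedges of $G_2\oplush H$ into three classes---unlabeled hyperedges originating in~$H$, unlabeled hyperedges originating in~$G_2$, and labeled hyperedges (each the union of a label-$\ell$ hyperedge of~$G_2$ and one of~$H$, using that gluable \bndh ts agree on which labels are hyperedge labels)---and count each class's contribution. Unlabeled $H$-hyperedges contribute exactly~$w_p$ by \autoref{testdef}, and unlabeled $G_2$-hyperedges are handled immediately since $\Pos(e)=\{f(v):v\in e\}$ for these. The labeled case is the delicate one: I would verify that $\{q:\ell\in E_q\}$ equals the set of integers in the convex hull of $\{l_H(v):v\in e_H\}$, so that the extreme positions of $\Pos(e_{G_2})=\{f(v):v\in e_{G_2}\}\cup\{q:\ell\in E_q\}$ coincide with those of the merged label-$\ell$ hyperedge under~$l$. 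Hence a merged label-$\ell$ hyperedge crosses~$i$ in~$l$ if and only if its $G_2$-side is counted in $\mytheta{f}{G_2}{i}$, with a small case check when the label-$\ell$ hyperedge of~$G_2$ or of~$H$ has no vertices of its own.
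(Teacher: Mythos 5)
Your proposal is correct and follows essentially the same route as the paper: build the combined layout $g$ (your $l$) from $f$ and the $H$-layout, reduce to non-integer positions by a right-perturbation argument, and then compare, for each $i\in\mathbb R\setminus\mathbb N$, the cut of $G_2\oplush H$ at~$i$ against $|\mytheta{f}{G_2}{i}|+w_{\lfloor i\rfloor}$. The one cosmetic difference is that you establish the exact \emph{equality} $|\normtheta{l}{G_2\oplush H}{i}|=|\mytheta{f}{G_2}{i}|+w_{\lfloor i\rfloor}$ by a three-way case split on hyperedge origin, whereas the paper only needs (and only proves) the inequality $\le$, partitioning into unlabeled $H$-hyperedges versus everything else; your stronger claim does hold (the convex-hull argument you sketch for labeled hyperedges, together with $g\restriction_{V(G_2)}=f$, gives both directions), but the extra work buys nothing toward the conclusion.
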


\noindent From these two claims, \autoref{firstrefinement} then easily
follows: let $H$~be a \bndh{t} such that $G_1\oplush H\in\brc k$. By
\autoref{cla1app}, $G_1$~passes some $H$-test~$T$. Since $G_1$
and~$G_2$ pass the same tests, also $G_2$ passes~$T$. By
\autoref{cla2app}, it follows that $G_2\oplush H\in \brc k$. The
reverse direction is proved symmetrically. It only remains to prove
\autoref{cla1app} and~\autoref{cla2app}.

\begin{proof}[Proof of \autoref{cla1app}]
  Let $T$~be the $H$-test obtained from an optimal layout~$l$
  of~$G_1\oplush H$, which, without loss of generality, maps the
  vertices of the $n$-vertex graph~$H$ to the integer
  positions~$\{1,\dots,n\}$ and the vertices
  of~$V(G_1)\setminus \partial(G_1)$ to non-integer positions in the
  interval~$(0,n+1)$. Then, $l$ obviously~is a $T$-compatible layout
  for~$G_1$. We show that the \joint{} cutwidth $\max_{i\in \mathbb
    R\setminus\mathbb N} (|\mytheta{l}{G_1}{i}|+w_{\lfloor i\rfloor})$
  of~$l$ from \autoref{passtest} is at most~$k$. 

  To this end, for an~$i\in\mathbb R\setminus\mathbb N$, consider the
  set~$A:=\normtheta{l}{G_1\oplush H}{i}$ of hyperedges of~$G_1\oplush
  H$ containing two vertices~$v,w$ with $l(v)<i<l(w)$. Since
  $G_1\oplush H\in\brc{k}$, we have $|A|\leq k$. Thus, it is
  sufficient to show that $|\mytheta{l}{G_1}{i}|+w_{\lfloor i\rfloor}
  \leq |A|$. We partition~$A$ into two sets~$B$ and~$C$, where $B$~are
  the unlabeled hyperedges in~$H$.

  Since $i\notin\mathbb N$, by \autoref{testdef}, $w_{\lfloor i
    \rfloor}$  counts exactly the hyperedges in~$B$.
  It remains to show that $|\mytheta{l}{G_1}{i}|\leq |C|$. Recall from
  \autoref{passtest} that $\mytheta{l}{G_1}{i}$~is the set of
  hyperedges~$e$ of~$G_1$ for which $\Pos(e)$~contains two
  positions~$j,k$ with~$j<i<k$. If $e$~is unlabeled, then, by
  \autoref{passtest} of~$\Pos(e)$, the hypergraph~$G_1$ contains
  vertices~$v,w$ with~$l(v)=j$ and~$l(w)=k$. Since~$e,v$, and~$w$ are
  also in~$G_1\oplus H$, we have~$e\in C$.

  If $e$~is labeled, then $G_1\oplush H$ instead of~$e$ contains a
  hyperedge~$e'\supseteq e$.  Now, since $j\in\Pos(e)$, the
  hypergraph~$G_1$ contains a vertex~$v$ with~$l(v)=j$ or $\ell\in
  E_j$, which, by \autoref{testdef}, implies that there is a hyperedge
  with label~$\ell$ containing a vertex~$v$ in~$H$ with~$l(v)\leq
  i$. Likewise, $G_1$ contains a vertex~$w$ with~$l(w)=k$ or $\ell\in
  E_k$, which implies that there is a hyperedge with label~$\ell$
  containing a vertex~$w$ in~$H$ with~$k\leq l(w)$. In all cases, we
  have that~$l(v)<i<l(w)$. Since the hyperedge~$e'$ of~$G_1\oplush H$
  contains~$v$ and~$w$, we get $e'\in C$.  
\end{proof}

\begin{proof}[Proof of \autoref{cla2app}]
  Let $T$~be an $H$-test~$T$ obtained from a linear layout~$l$ of
  cutwidth~$k$ for some $G^*\oplush H$ and assume that~$G_2$
  passes~$T$. Then, there is a $T$-compatible layout~$f$ for~$G_2$
  with \joint{} cutwidth at most~$k$. First note that $l$~and~$f$
  agree on the layout of vertices in~$\partial(G_2)$ and~$\partial(H)$
  and that, apart from these, $f$~lays out vertices at non-integral
  positions, whereas $l$ lays out vertices of~$H$ at integral
  positions. Because of this, in a layout~$g$ for~$G_2\oplush H$ that
  lays out vertices~$v$ of~$H$ at position~$l(v)$ and vertices~$v$
  of~$G_2$ at position~$f(v)$, every two vertices in~$G_2\oplush H$
  are laid out at distinct positions by~$g$. Hence, $g$ is injective
  and, therefore, a layout.

  We show that $g$~is a layout of cutwidth at most~$k$ for~$G_2\oplush
  H$. That is, we show $\max_{i\in\mathbb R} {|\normtheta{g}{G_2\oplush
      H}{i}|}\leq k$. To this end, note that
  \[
  \max_{i\in\mathbb R} {|\normtheta{g}{G_2\oplush H}{i}|}\leq
  \max_{i\in\mathbb R\setminus\mathbb N} {|\normtheta{g}{G_2\oplush H}{i}|},
  \]
  since for every~$i\in\mathbb N$, we have $\normtheta{g}{G_2\oplush
    H}{i}\subseteq \normtheta{g}{G_2\oplush H}{i+\varepsilon}$
  for~$0<\varepsilon<1$ chosen so that no vertex is mapped by~$g$ to
  the interval~$(i,i+\varepsilon]$.  That is, we only have to show
  that, for each~$i\in\mathbb R\setminus\mathbb N$, we have
  $|\normtheta{g}{G_2\oplus H}{i}|\leq|\mytheta{f}{G_2}{i}|+w_{\lfloor
    i\rfloor}$, since $f$~is a layout for~$G_2$ with \joint{}
  cutwidth~$\max_{i\in\mathbb R\setminus\mathbb
    N}(|\mytheta{f}{G_2}{i}|+w_{\lfloor i\rfloor})\leq k$.

  For some position~$i\in\mathbb R\setminus\mathbb N$, consider the
  set~$A:=\normtheta{g}{G_2\oplush H}{i}$ of hyperedges of~$G_2\oplush
  H$ containing vertices~$v,w$ with $g(v)<i<g(w)$ and let it be
  partitioned into two sets~$B$ and~$C$, where $B$~contains the
  unlabeled hyperedges of~$H$.  We show that~$|A|\leq w_{\lfloor
    i\rfloor}+|\mytheta{f}{G_2}{i}|$. By \autoref{testdef}, we clearly
  have $|B|\leq w_{\lfloor i\rfloor}$. Hence, it remains to show
  that~$|C|\leq |\mytheta{f}{G_2}{i}|$.

  To this end, let $e\in C$ be a hyperedge. If $e$ contains only
  vertices of~$G_2$, then for any vertex~$v\in e$, we
  have~$g(v)=f(v)\in\Pos (e)$. Furthermore, if $e$~contains a
  vertex~$v$ of~$H$, then $e$~has a label~$\ell$ and~$H$ has a
  hyperedge with label~$\ell$ containing~$v$. Hence, in this case, we
  have~$\ell\in E_{l(v)}$. It follows that $g(v)=l(v)\in\Pos(e')$ for
  the hyperedge~$e'\subseteq e$ of~$G_2$ with label~$\ell$. Hence, for
  any vertex~$v\in e$, we have~$g(v)\in\Pos(e')$. Since $e$~contains
  vertices~$v,w$ with~$g(v)<i<g(w)$, it follows that~$\Pos(e')$
  contains positions~$j=g(v)$ and~$k=g(w)$ with $j<i<k$ and,
  therefore, $e'\in \mytheta{f}{G_2}{i}$.  
\end{proof}

\noindent Towards our goal of showing that~$\congr$ has finite index,
\autoref{firstrefinement} shows a set of tests~$\mathcal T$ such that
$\simtest$~refines $\congr$, where two hypergraphs are equivalent with
respect to~$\simtest$ if and only if they pass the same subset of
tests of~$\mathcal T$.  However, since the set~$\mathcal T$ is
infinite, we cannot yet conclude that~$\simtest$ and, therefore,
$\congr{}$ has finite index. The following lemma will, for every
test~$T\in\mathcal T$, find a test~$T'\in\mathcal T$ such that a
hypergraph~$G$ passes~$T'$ if and only if it passes~$T$ and such
that~$T'$ has size at most~$(2t+1)(t+1)(2k+2)$. Thus, the equivalence
relation~$\sim_{\mathcal T'}$ for~$\mathcal T'$ being the set of all
tests of size~$(2t+1)(t+1)(2k+2)$ is the same as~$\simtest$ and,
consequently, refines~$\congr$. Since there is only a constant number
of tests of size~$(2t+1)(t+1)(2k+2)$ for constant~$k$ and $t\leq k+1$, the
size of~$\mathcal T'$ is constant. Since $\sim_{\mathcal T'}$ and,
therefore, $\congr$~has at most $2^{|\mathcal T'|}$~equivalence
classes, it follows that $\congr$~has finite index. Thus, the
following lemma finishes our proof of \autoref{easythm}.

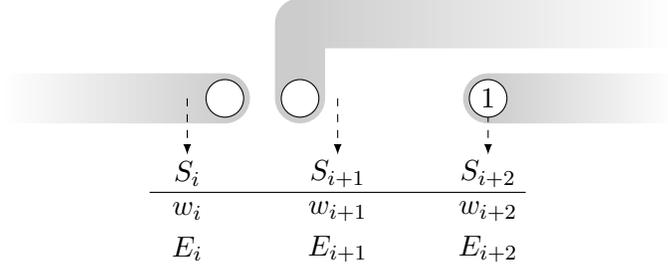
\begin{figure}
  \centering
  \begin{tikzpicture}
    \tikzstyle{edge} = [color=black,opacity=.2,line cap=round, line join=round, line width=19pt]    

    \node (S1) at (-0.5,-1) {$S_{i}$};
    \node (w1) at (-0.5,-1.5) {$w_i$};
    \node (E1) at (-0.5,-2) {$E_i$};
    \draw [-latex,dashed] (-0.5,0) -- (-0.5,-0.75);

    \node (S1) at (1.5,-1) {$S_{i+1}$};
    \node (w1) at (1.5,-1.5) {$w_{i+1}$};
    \node (E1) at (1.5,-2) {$E_{i+1}$};
    \draw [-latex,dashed] (1.5,0) -- (1.5,-0.75);

    \node (S1) at (3.5,-1) {$S_{i+2}$};
    \node (w1) at (3.5,-1.5) {$w_{i+2}$};
    \node (E1) at (3.5,-2) {$E_{i+2}$};
    \draw [-latex,dashed] (3.5,0) -- (3.5,-0.75);

    \node[blue] (v1) at (0,0) {};
    \node[blue] (v2) at (1,0) {};
    \node[blue] (v3) at (3.5,0) {$1$};

    \draw (-1,-1.25) -- (4,-1.25);

    \begin{pgfonlayer}{background}
      \draw[edge,path fading=west] (v1.center)--(-3,0) ++(0,-20pt);
      \draw[edge,path fading=east] (v2.center)--(1,1)--(6,1);
      \draw[edge,path fading=east] (v3.center)--(6,0) ++(0,-20pt);
    \end{pgfonlayer}
  \end{tikzpicture}
  \caption{Shown are two unlabeled vertices and one labeled vertex of
    a graph~$G$ laid out according to a $T$-compatible layout for
    some test~$T=(\pi,S,k)$. That is, the label~1 is mapped to the
    integer position~$i+2$ by~$\pi$, while the others vertices are
    laid out at non-integer positions. Assume that~$S_i=S_{i+1}$ and
    that no label is mapped to position~$i+1$ by~$\pi$. Then, we can
    assume that no vertex of~$G$ lies in~$[i+1,i+2)$: moving it to
    $(i,i+1)$~would yield a $T_1$-compatible layout with equal
    \joint{} cutwidth. The \joint{} cutwidth will also not be altered
    by deleting~$S_{i+1}$ or adding copies of~$S_{i}$
    behind~$S_{i}$.  }
  \label{fig:testred}
\end{figure}

\begin{lemma}\label{shorttests}
  Let $G$ be a \bndh{t}. For every test~$T_1$, there is a test~$T_2$
  of size~$(2t+1)(t+1)(2k+2)$ such that $G$~passes~$T_1$ if and only if
  $G$~passes~$T_2$.
\end{lemma}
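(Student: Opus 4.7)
The plan is to define $T_2$ by identifying and removing redundant positions in $T_1=(\pi,S,k)$, via a two-level decomposition. I will show that $T_1$ has a natural partition into at most $2t(t+1)$ \emph{zones} of constant $E$-set and without interior boundary position, and that each zone can be independently shortened to at most $2k+2$ positions while preserving passability for every $t$-boundaried hypergraph $G$.

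For the partition into zones, I identify two kinds of \emph{landmark} positions: the at most $t$ boundary positions $\pi(1),\dots,\pi(t)$, and the positions where the set $E_i$ changes. The condition in \autoref{testdef} forces, for each label $\ell$, the positions with $\ell\in E_i$ to form a contiguous range; hence each of the $t$ labels contributes at most two change events (one entry, one exit), giving at most $2t$ $E$-changes in total. Taken together, there are at most $3t$ landmarks, and hence at most $3t+1\leq 2t(t+1)$ zones for $t\ge 1$.

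For the shortening within a zone, $E_i$ equals a fixed set $E^*$ and the sequence $w_a,\dots,w_b$ takes values in $\{0,\dots,k\}$ (any position with $w_i>k$ makes $T_1$ unsatisfiable, so we may replace it by a canonical trivial test of constant size). The core collapse is: if two consecutive positions $p,p+1$ in the zone satisfy $w_p=w_{p+1}$, then deleting position $p+1$ and re-indexing produces an equivalent test. Given any $T_1$-compatible layout $f$ for a hypergraph $G$ with joint cutwidth~$\leq k$, I obtain a layout $f'$ for the shortened test by linearly compressing the vertices of $V(G)\setminus\partial(G)$ placed by $f$ in the intervals $(p,p+1)\cup(p+1,p+2)$ into the single merged interval while preserving relative order. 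Because $w_p=w_{p+1}$ and $E^*$ is constant on the zone, at every cut point $i$ of $f'$ the joint cut $|\mytheta{f'}{G}{i}|+w'_{\lfloor i\rfloor}$ is bounded above by the joint cut of $f$ at a corresponding old position, and is therefore still at most $k$; the reverse direction simply inserts an empty interval. Iterating eliminates consecutive $w$-repetitions, and a stronger consolidation step (for any $p<q$ in the same zone with $w_p=w_q$ and $w_i\leq\max(w_p,w_q)$ for every $p\leq i\leq q$, delete positions $p+1,\dots,q$) further reduces each zone to a unimodal walk on $\{0,\dots,k\}$ of length at most $2(k+1)=2k+2$. Multiplying the at most $2t(t+1)$ zones by this $(2k+2)$-bound yields the claimed size of $T_2$.

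The main technical obstacle is proving correctness of the stronger consolidation: after deleting a sub-segment, the vertices of $G$ previously placed in the removed gaps must be redistributed among the surviving adjacent intervals without exceeding cutwidth $k$ at any new cut point, and conversely, any layout on the shortened zone must be extendable to a layout on the original zone. Both directions exploit that $E^*$ is constant and $w$ is bounded throughout the deleted segment, so labeled hyperedges of $G$ with label in $E^*$ retain every integer position they need in $\mathop{\mathrm{Pos}}$, unlabeled $H$-hyperedges contribute no more than in the original test, and the worst cut over the deleted segment is already realized at one of the surviving endpoints.
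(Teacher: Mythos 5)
Your decomposition into zones is sound and actually gives a \emph{tighter} count than the paper: you bound the number of zones by $3t+1$ (at most $t$ boundary positions plus at most $2t$ entry/exit events of label intervals), whereas the paper bounds it by $2t\cdot(t+1)$, multiplying the number of ``straits'' (maximal constant-$E$ runs, at most $2t$) by the number of ``load patterns'' per strait (at most $t+1$). Both fit within the claimed $2t(t+1)(2k+2)$. The observation that a position with $w_i>k$ renders the test unsatisfiable, and the consecutive-equal-value collapse, are also fine and correspond to observations made in the paper.

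The gap is in the within-zone shortening. Your ``stronger consolidation'' --- delete $p+1,\dots,q$ whenever $w_p=w_q$ and $w_i\le w_p$ for all $p\le i\le q$ --- is both insufficient and unsound. It is insufficient because it does not bound zone length by $2k+2$: for $k=2$, the sequence $0,1,0,2,0,1,0$ of length $7>2k+2=6$ has no consecutive repeats and no pair $p<q$ with $w_p=w_q$ and all intermediate values $\le w_p$ (every candidate pair is blocked by the peak value $2$), so it is irreducible under your rules, and it is not unimodal. It is also unsound in the direction ``$G$ passes $T_1$ $\Rightarrow$ $G$ passes $T_2$'': the merged position inherits $w$-value $w_p$, but a $T_1$-layout for $G$ may be placing many vertices inside an interior interval $(j,j+1)$ with $w_j<w_p$ where the joint cut $|\mathop{\mathrm{Jcut}}^f_G(i)|+w_j$ equals $k$ exactly; after compression the structure term is unchanged but the $w$-term rises to $w_p$, exceeding $k$. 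So the ``worst cut over the deleted segment is already realized at one of the surviving endpoints'' claim is false in general. The paper instead applies the three Downey--Fellows reduction rules: (R1) delete the middle of any monotone triple, (R2) collapse any maximal run of values all $\ge\max$ of the two bracketing values to a single position carrying the maximum, and (R3) collapse any maximal run of values all $\le\min$ of the two bracketing values to a single position carrying the minimum. Rule (R2), which handles ``hills,'' has no analogue in your argument, and that is exactly what your counterexample needs. The resulting irreducible load patterns have length at most $2k+2$ by the Downey--Fellows analysis, which your rules do not replicate. To repair your proof you would need to replace your consolidation rule by (R2) and (R3) (with the merged $w$-value set to $\max$ resp.\ $\min$, not $w_p$) together with (R1), and invoke or reprove the Downey--Fellows length bound.
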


\begin{proof}
  \looseness=-1 Let the size of the test~$T_1=(\pi,S,k)$ be~$n$. For
  $E\subseteq\{1,\dots,t\}$, we call a maximal subsequence
  $S_j=(E_j,w_j),\dots,S_k=(E_k,w_k)$ of~$S$ with $E=E_j=\dots=E_k$ a
  \emph{strait}. We first show that there are at most $2t+1$ straits,
  and then show that we can shorten each strait to length at
  most~$(t+1)(2k+2)$ by removing some elements from~$S$ without
  changing the satisfiability of the test.

  For a label~$\ell\subseteq\{1,\dots,t\}$, let $I_\ell:=\{i\leq n
  \mid \ell\in E_i\}$. By \autoref{testdef}, each $I_\ell$ for some
  label~$\ell\in\{1,\dots,t\}$ is an interval of the natural numbers
  with a minimum element and a maximum element, which we both call
  \emph{events}. Hence, the $I_\ell$ for all~$\ell\in\{1,\dots,t\}$ in
  total have at most $2t$~events. Since straits can only start at an
  event or at~$S_0$, and since only one strait can start at a fixed event, it
  follows that $S$~is partitioned into at most $2t+1$ straits.

  It remains to shorten the straits. To this end, we apply data
  reduction rules already used by \citet[Theorem~12.7.5]{DF13} for the
  cutwidth problem on graphs.  Let $S_j=(E,w_j),\dots,S_k=(E,w_k)$~be
  a strait in~$T_1$. We call a maximal subsequence of the~$w_i$ of the
  strait such that~$\pi$ maps no boundary label to~$i$ a \emph{load
    pattern}. Hence, each strait decomposes into at most~$t+1$ load
  patterns, each of which we will shorten to length at most~$2k+2$.

  To this end, first observe that if the test~$T_1$ passed by~$G$
  contains a pair~$S_i=(E,w_i)$, then $G$~also passes the test
  obtained from~$T_1$ by replacing~$S_i$ by~$S_i'=(E,w_i')$
  with~$w_i'\leq w_i$. Moreover, assume that, as illustrated in
  \autoref{fig:testred}, $T_1$~contains two pairs
  $S_i=(E,w_i),S_{i+1}=(E,w_{i+1})$ with $w_i=w_{i+1}$ such that~$\pi$
  maps no boundary label to~$i+1$. Then $G$~passes the test obtained
  from~$T_1$ by removing~$S_{i+1}$. Moreover, $G$~then also passes the
  test obtained from~$T_1$ by adding a copy of~$S_{i}$ behind~$S_{i}$.

  Based on these observations, \citet[Theorem~12.7.5]{DF13} give a proof
  that the following three data reduction rules applied to a load
  pattern~$s$ of the strait $S_j,\dots,S_k$ turn~$T_1$ into a
  test~$T_2$ that~$G$ passes if and only if it passes~$T_1$:
  \begin{enumerate}[(R1)]
  \item If $s=(\dots,w_i,w_{i+1},w_{i+2},\dots)$ such that~$w_i\leq
    w_{i+1}\leq w_{i+2}$ or $w_i\geq w_{i+1}\geq w_{i+2}$, then
    delete~$S_{i+1}$.
  \item If $s=(\dots,a,s_{i},\dots,s_{i'},b,\dots)$ such that each
    of~$s_{i},\dots,s_{i'}$ is at least $\max(a,b)$, then
    replace~$S_{i},\dots,S_{i'}$ by~$S^*:=(E,w)$, where~$w$ is the
    maximum of~$s_{i},\dots,s_{i'}$.
  \item If $s=(\dots,a,s_{i},\dots,s_{i'},b,\dots)$ such that each
    of~$s_{i},\dots,s_{i'}$ is at most $\min(a,b)$, then
    replace~$S_{i},\dots,S_{i'}$ by~$S^*:=(E,w)$, where~$w$ is the
    minimum of~$s_{i},\dots,s_{i'}$.
  \end{enumerate}
  \citet[Theorem~12.7.5]{DF13} show that a load pattern, to which none
  of the rules apply, has length at most~$2k+2$.  
\end{proof}

\paragraph{Historical Remarks.}  The above results about reduced load
patterns in the construction of test sets were first proved by
\citet{FellowsA89,AbrahamsonF91} in the context of proving that, for
simple graphs, the property $\mathcal{P}_{k}$ of having cutwidth
bounded by~$k$ has finite index over $\ulargep$ for all fixed~$k$
and~$t$.  An essentially equivalent notion, termed \textit{typical
  sequences}, was introduced independently by \citet{BodlaenderK96} in
the context of linear-time dynamic programming algorithms for
\textsc{Pathwidth} and \textsc{Treewidth}.  Such sequences are also
implicit in early work of \citet{LagergrenA91}.

\pagebreak
\section{Hypertree width and variants}\label{sec6}

\noindent In in this section, we show a negative application of our
hypergraph Myhill-Nerode analog to \ghtw~\citep{GottlobMS09}.  First,
we precisely define the problem.

Let $H$~be a hypergraph. Generalized hypertree width is defined with
respect to tree decompositions of the primal graph~$\mathcal{G}(H)$,
however, the width of the tree decompositions is measured differently.
Suppose $H$~has no isolated vertices (otherwise, remove them).  A
\emph{cover} of a bag is a set of hyperedges such that each vertex in
the bag is contained in at least one of these hyperedges.  The
\emph{cover width} of a bag is the minimum possible number of
hyperedges covering it.  The \emph{cover width} of a tree
decomposition is the maximum cover width of any bag in the
decomposition.  The \emph{generalized hypertree width} of~$H$ is the
minimum cover width over all tree decompositions of~$\mathcal G(H)$.

\decprob{\textsc{Generalized Hypertree Width}}
{A hypergraph~$G=(V,E)$ and a natural number~$k$.}
{Does $G$~have generalized hypertree width at most~$k$?}

\noindent Since \ghtw{} is NP-hard
for~$k=3$~\citep{GottlobMS09}, %
it is natural to search for non-standard parameters with respect to
which the problem is fixed-parameter tractable~\citep{Nie10,KN12,FJR13}.  While
it is known that the generalized hypertree width of a hypergraph is at
most the \iw{} plus one~\cite{FominGT09}, the \iw{} may be arbitrarily
large even for hypergraphs with %
hypertree width one, since adding a universal hyperedge to any
hypergraph reduces its hypertree width to one.  Therefore, one could
hope for positive results with respect to \iw{}.  However, we will
show that \ghtw{} cannot be solved by finite tree automata on tree
decompositions of incidence graphs:

\begin{theorem}\label{THM:HTW}
  Let \kghtw{k} be the set of hypergraphs with generalized hypertree
  width at most~$k$.  The \crc{} $\sim_{\kghtw{k}}$ does not have
  finite index over~$\hsmall{}$ for~$k=4$ and~$t\geq 41$.
\end{theorem}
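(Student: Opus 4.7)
The plan is to exhibit an infinite family $\{\hyperg{G_n}\}_{n \in \mathbb N} \subseteq \hsmall$ of pairwise $\sim_{\kghtw{4}}$-inequivalent $t$-boundaried hypergraphs: for every pair $i \neq j$, I would construct a distinguishing $t$-boundaried hypergraph $H_{ij} \in \hlarge$ with $\hyperg{G_i} \oplush H_{ij} \in \kghtw{4}$ but $\hyperg{G_j} \oplush H_{ij} \notin \kghtw{4}$. By the definition of $\sim_{\kghtw{4}}$, this immediately yields infinite index over $\hsmall$.

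The intended family encodes an unbounded combinatorial choice into a bounded boundary interface. With $t \geq 41$ labels available, I would reserve a constant-sized ``register'' of boundary vertices and hyperedges that multiplexes a selection from a ground set whose size grows with $n$, while the remainder of $G_n$ is an internal gadget whose structure depends on $n$. Each $G_n$ would be built so that its incidence graph admits a tree decomposition of width at most $t-1$, placing $\hyperg{G_n}$ in $\hsmall$. The test partner $H_{ij}$ is then a query gadget engineered so that, together with $G_n$, every bag of every candidate tree decomposition of $\hyperg{G_n} \oplush H_{ij}$ admits a cover of size four if and only if the encoding carried through $\partial(G_n)$ matches the query carried through $\partial(H_{ij})$.

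The upper-bound direction---producing a width-$4$ tree decomposition of $\hyperg{G_i} \oplush H_{ii}$ in the matching case---is largely bookkeeping: glue a tree decomposition of cover width $\leq 4$ of $G_i$ to one of $H_{ii}$ along their common boundary and verify that the registered hyperedges cover each bag. The technical heart, and the step I expect to be the main obstacle, is the lower bound: showing $\hyperg{G_j} \oplush H_{ij} \notin \kghtw{4}$ whenever $j \neq i$. Lower bounds on generalized hypertree width are notoriously subtle because covers may overlap freely and the hosting tree is unconstrained, so one cannot simply mimic a treewidth argument. The plan is to exhibit in every candidate tree decomposition an unavoidable ``obstruction bag''---a vertex set that every decomposition must accommodate---whose every four-hyperedge cover would force a concrete pairing between the selection encoded by $G_j$ and the query encoded by $H_{ij}$, yielding a contradiction when the two encodings disagree. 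The specific constant $t = 41$ presumably falls out of counting the boundary objects needed to simultaneously host the register, the query interface, and an obstruction that is robust against every four-hyperedge cover.
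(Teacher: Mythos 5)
Your proposal is a high-level template rather than a proof: it describes \emph{what} a successful construction should accomplish but does not exhibit one. The paper constructs an explicit family $H_n$ (using gadgets $A,B,C,D,S,T$ and a path $x_1,\dots,x_{6n}$) and proves that $H_n \oplush H_m$ has generalized hypertree width~$4$ if and only if $n=m$, by showing that every width-$4$ tree decomposition has a forced ``backbone'' of bags whose covers alternate between $\{A,B\}$ and $\{C,D\}$ exactly $n$ times. You explicitly defer precisely these two pieces---the construction itself and the lower-bound argument (``the main obstacle,'' ``unavoidable obstruction bag'')---so the substance of the proof is absent.

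There is also a concrete error in your framing of the index argument. You claim that exhibiting distinguishing hypergraphs $H_{ij}\in\hlarge$ ``immediately yields infinite index over $\hsmall$.'' It does not. As the paper emphasizes in the paragraph preceding \autoref{hisoclosed2}, if $H_{ij}\notin\hsmall$, then the witnessed non-equivalence holds for the relation quantified over $\hlarge$ but need not hold for the relation restricted to $\hsmall$, which is what \autoref{afhlemma} actually connects to tree automata. The paper gets around this via \autoref{hisoclosed2}, which requires each $H_n$ to have a tree decomposition of its incidence graph of width $t-1$ with a bag containing $\partial(H_n)$---that structural condition is what lets one transfer the distinguishing hypergraphs into the small universe, and it is also the source of the constant $t\ge 41$ (28~boundary objects plus the 12~bag vertices plus an extra). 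Your plan needs to either place each $H_{ij}$ inside $\hsmall$ directly or invoke and verify the hypotheses of \autoref{hisoclosed2}; as written it asserts a transfer that is not automatic.

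Finally, note a simplification the paper achieves that your plan does not: because the cutoff is an equality test ($H_n\oplush H_m\in\kghtw4\iff n=m$), one can take $H_{ij}:=H_i$, so there is no separate ``query gadget'' to design. Your ``register multiplexing a selection through the boundary'' picture also does not match the paper: the encoding of~$n$ there is purely internal (path length), and the boundary interface is constant and identical for all~$n$; what carries information across the gluing is the \emph{shape} of any admissible tree decomposition, not data routed through boundary objects.
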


\noindent\looseness=-1 By \autoref{nomso}, it follows that \kghtw{k} is not
expressible in monadic second-order logic. Moreover, the construction
we use in the proof leads us to conjecture that, actually, the problem
might turn out to be W[1]-hard, as did \textsc{Bandwidth} and
\textsc{Triangulating Colored Graphs} after it was shown that they do
not have finite index~\citep{BFH94}.

We will discuss this after proving \autoref{THM:HTW}. Moreover, after
proving \autoref{THM:HTW}, we will discuss that the theorem also holds
for the problem variants \textsc{Hypertree Width} and
\textsc{Fractional Hypertree Width}.

\newcommand{\cB}{\mathcal{B}}

\bigskip\noindent To prove \autoref{THM:HTW}, we apply
\autoref{hisoclosed2}: for every $n\geq 1$, we give a construction of a
$t$-boundaried hypergraph~$H_n$ whose incidence graph allows for a
tree decomposition of width~$t-1$, of which one bag
contains~$\partial{H_n}$. Then we show that $H_n \oplush H_m$ has
generalized hypertree width~$4$ if and only if~$n=m$.  This implies
that the \crc{}~$\sim_{\khtw4}$ has infinite index over~$\hlarge$ and,
by \autoref{hisoclosed2}, it follows that it has infinite index also
over~$\hsmall{}$.

\begin{construction}
  For every $n\ge 1$, we construct a $t$-boundaried hypergraph~$H_n$
  with $t=28$, generalized hypertree width~$4$, and \iw{} at
  most~$12$.  The vertex set of~$H_n$ is $V :=A\cup B\cup C\cup D\cup
  S\cup T\cup X$, where $A :=\{a,y\}, B:=\{b,z\}, C:=\{c,y\},
  D:=\{d,z\}, S:=\{s_1, \dots, s_8\}, T :=\{t_1, \dots, t_8\}$ and $X
  :=\{x_1, \dots, x_{6n}\}$.  The hyperedge set of~$H_n$ is $E :=
  \{A,B,C,D\}\cup B_S\cup \{S_c,S_d,\allowbreak S_y,S_z\} \cup B_T
  \cup \{T_a,T_b,\allowbreak T_y,T_z\} \cup \{E_{3i}, E_{3i+1}: 1\le
  i< 2n\} \cup \{E_{i,i+1}: 1\le i < 6n\}$, where
  {\allowdisplaybreaks\begin{align*}
      B_S &\mathrlap{\text{ is the set of all possible binary hyperedges on $S$},}\\
      S_c &:= \{c,s_1,s_2\}& S_d &:= \{d,s_3,s_4\},\\
      S_y &:= \{y,s_5,s_6\}& S_z &:= \{z,s_7,s_8\}\\
      B_T &\mathrlap{\text{ is the set of all possible binary hyperedges on $T$},}\\
      T_a &:= \{a,t_1,t_2\}& T_b &:= \{b,t_3,t_4\},\\
      T_y &:= \{y,t_5,t_6\}& T_z &:= \{z,t_7,t_8\}\\
      E_1 &:= \{s_8,x_1\},\\
      E_{3i} &:= \{a,c,y,x_{3i}\}&&\mathrlap{\text{for } 1\le i<2n,}\\
      E_{3i+1} &:= \{b,d,z,x_{3i+1}\}&&\mathrlap{\text{for } 1\le i<2n,}\\
      E_{6n} &:= \{x_{6n},t_1\},\\
      E_{6i+1,6i+2} &:= \{a,b,x_{6i+1},x_{6i+2}\} &&\mathrlap{\text{for } 0\le i<n,}\\
      E_{6i+4,6i+5} &:= \{c,d,x_{6i+4},x_{6i+5}\} &&\mathrlap{\text{for } 0\le i<n, \text{ and}}\\
      E_{3i,3i+1} &:= \{x_{3i},x_{3i+1}\} &&\mathrlap{\text{for } 1\le i< 2n.}
  \end{align*}}%
  The set of boundary hyperedges is $\{A,B,C,D,\allowbreak S_c,S_d,
  \allowbreak S_y,S_z,\allowbreak T_a,T_b,\allowbreak T_y,T_z\}$.  The
  set of boundary vertices is $S\cup T$.  They are labeled from $1$ to
  $28$ in this order and by increasing indices. See \autoref{fig:ghtw}
  for an illustration of~$H_2$ induced on~$V \setminus (S\cup T)$.
\end{construction}

\tikzset{var/.style={inner sep=.15em,circle,fill=black,draw},
         hyperedge/.style={minimum size=1mm,rectangle,fill=white,draw},
         label distance=-2pt}

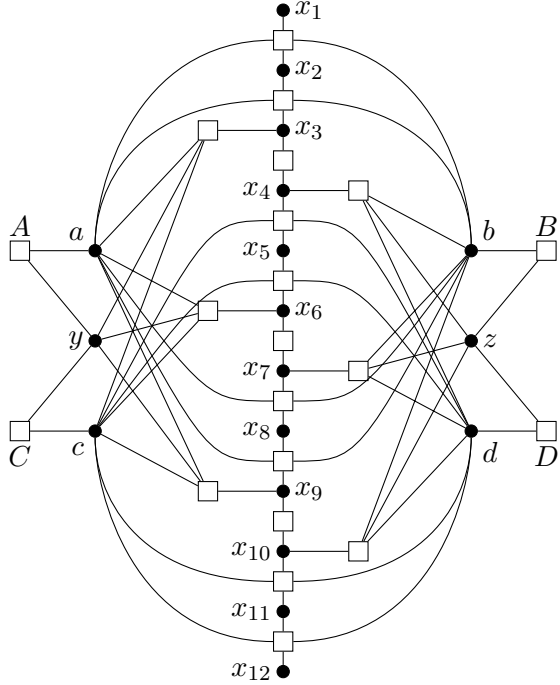
\begin{figure}[tb]
  \centering
  \begin{tikzpicture}[xscale=1,yscale=0.8]
    \node (x1) at (0,7) [var,label=right: $x_1$] {};
    \node (x2) at (0,6) [var,label=right: $x_2$] {};
    \node (x3) at (0,5) [var,label=right: $x_3$] {};
    \node (x4) at (0,4) [var,label=left: $x_4$] {};
    \node (x5) at (0,3) [var,label=left: $x_5$] {};
    \node (x6) at (0,2) [var,label=right: $x_6$] {};
    \node (x7) at (0,1) [var,label=left: $x_7$] {};
    \node (x8) at (0,0) [var,label=left: $x_8$] {};
    \node (x9) at (0,-1) [var,label=right: $x_9$] {};
    \node (x10) at (0,-2) [var,label=left: $x_{10}$] {};
    \node (x11) at (0,-3) [var,label=left: $x_{11}$] {};
    \node (x12) at (0,-4) [var,label=left: $x_{12}$] {};

    \node (e12) at (0,6.5) [hyperedge] {};
    \node (e23) at (0,5.5) [hyperedge] {};
    \node (e34) at (0,4.5) [hyperedge] {};
    \node (e45) at (0,3.5) [hyperedge] {};
    \node (e56) at (0,2.5) [hyperedge] {};
    \node (e67) at (0,1.5) [hyperedge] {};
    \node (e78) at (0,0.5) [hyperedge] {};
    \node (e89) at (0,-0.5) [hyperedge] {};
    \node (e910) at (0,-1.5) [hyperedge] {};
    \node (e1011) at (0,-2.5) [hyperedge] {};
    \node (e1112) at (0,-3.5) [hyperedge] {};
    
    \node (e3) at  (-1,5) [hyperedge] {};
    \node (e4) at  ( 1,4) [hyperedge] {};
    \node (e6) at  (-1,2) [hyperedge] {};
    \node (e7) at  ( 1,1) [hyperedge] {};
    \node (e9) at  (-1,-1) [hyperedge] {};
    \node (e10) at  ( 1,-2) [hyperedge] {};
    \node (a) at (-2.5,3) [var,label=above left: $a$] {};
    \node (b) at ( 2.5,3) [var,label=above right: $b$] {};
    \node (c) at (-2.5,0) [var,label=below left: $c$] {};
    \node (d) at ( 2.5,0) [var,label=below right: $d$] {};
    \node (y) at (-2.5,1.5) [var,label=left: $y$] {};
    \node (z) at ( 2.5,1.5) [var,label=right: $z$] {};
    \node (A) at (-3.5,3) [hyperedge,label=above:$A$] {};
    \node (B) at ( 3.5,3) [hyperedge,label=above:$B$] {};
    \node (C) at (-3.5,0) [hyperedge,label=below:$C$] {};
    \node (D) at ( 3.5,0) [hyperedge,label=below:$D$] {};
    
    \draw (a)--(A)--(y) (b)--(B)--(z) (c)--(C)--(y) (d)--(D)--(z);
    \draw (x1)--(e12)--(x2)--(e23)--(x3)--(e34)--(x4)--(e45)--(x5)--(e56)--(x6)--(e67)--(x7)--(e78)--(x8)--(e89)--(x9)--(e910)--(x10)--(e1011)--(x11)--(e1112)--(x12);
    \draw (x3)--(e3)--(a) (y)--(e3)--(c) (x4)--(e4)--(b) (z)--(e4)--(d) (x6)--(e6)--(a) (y)--(e6)--(c) (x7)--(e7)--(b) (z)--(e7)--(d) (x9)--(e9)--(a) (y)--(e9)--(c) (x10)--(e10)--(b) (z)--(e10)--(d);
    \draw (e12) .. controls +(-2,0) and +(0,1) .. (a);
    \draw (e23) .. controls +(-2,0) and +(0,1) .. (a);
    \draw (e12) .. controls +(2,0) and +(0,1) .. (b);
    \draw (e23) .. controls +(2,0) and +(0,1) .. (b);
    \draw (e45) .. controls +(-1,0)  .. (c);
    \draw (e45) .. controls +(1,0)  .. (d);
    \draw (e56) .. controls +(-1,0)  .. (c);
    \draw (e56) .. controls +(1,0)  .. (d);
    \draw (e78) .. controls +(-1,0)  .. (a);
    \draw (e78) .. controls +(1,0)  .. (b);
    \draw (e89) .. controls +(-1,0)  .. (a);
    \draw (e89) .. controls +(1,0)  .. (b);
    \draw (e1011) .. controls +(-2,0) and +(0,-1) .. (c);
    \draw (e1011) .. controls +(2,0) and +(0,-1) .. (d);
    \draw (e1112) .. controls +(-2,0) and +(0,-1) .. (c);
    \draw (e1112) .. controls +(2,0) and +(0,-1) .. (d);

  \end{tikzpicture}
  \caption{The incidence graph of~$H_2$ induced on~$V \setminus (S\cup T)$. Boxes represent hyperedges.}
  \label{fig:ghtw}
\end{figure}

\noindent We first give an outline of the remaining proof. Consider a
tree decomposition for~$H_n \oplush H_m$ with generalized hypertree
width~$4$. The aim is to prove~$n=m$. The vertex sets~$S$ and~$T$ and
the hyperedges containing them make sure that some bag~$\cB_S$ of the
decomposition contains all of~$S$ and that some bag~$\cB_T$ contains
all of~$T$.  Now, both in $\mathcal{G}(H_n)$ and in
$\mathcal{G}(H_m)$, there is a path from a vertex in $S$ to a vertex
in $T$ passing through all vertices $x_i$ by increasing indices. The
edges of this path are covered by intermediate bags lying on the path
from~$\cB_S$ to~$\cB_T$ on the tree decomposition.  Observe that no
vertex~$x_i$ is contained in a boundary hyperedge.  Therefore, when we
restrict the tree decomposition to the vertices in $H_n$, we recover a
tree decomposition for $H_n$ where all intermediate bags are covered
by at most $3$ hyperedges.  Moreover, our construction makes sure that
when a bag is covered by $3$ hyperedges, at least $2$ of them are
boundary hyperedges.  In every such tree decomposition for $H_n$, when
considering the intermediate bags starting from $\cB_S$ that contain
either $A,B$ or $C,D$ in their cover, we first encounter bags covered
by~$C,D$, then bags covered by~$A,B$, then bags covered by~$C,D$, and
so on, and there are exactly $n$~alternations from~$C,D$ to~$A,B$ in
this sequence.  Therefore, in order to be able to merge such
decompositions for~$H_n$ and~$H_m$, we must have~$n=m$.

We now give a more detailed proof of \autoref{THM:HTW}.  In the
construction of~$H_n$, the vertices in $S$ and $T$ and the hyperedges
containing them are only used to make sure that every tree
decomposition of $H_n$ with hypertree width~$4$ contains a bag
$\cB_{-1}$ with the vertices $S\cup \{c,d,y\}$ and a bag $\cB_{6n+1}$
with the vertices $T \cup \{b,y,z\}$.  Since the sets~$S\cup
\{c,d,y,z\}$ and $T \cup \{a,b,y,z\}$ can also be covered by
$4$~hyperedges, all of which are boundary hyperedges, let
$\mathcal{D}=(\{V_i : i\in I\},T)$ be a tree decomposition for~$H_n$
with the bags~$\cB_{-1} = S\cup \{c,d,y,z\}$ and $\cB_{6n+1} = T \cup
\{a,b,y,z\}$.  We observe that all other vertices of $H_n$ occur in
bags that are in the same connected component of the forest obtained
from $\mathcal{D}$ by removing these two bags.

\begin{cla}\label{cl:ghtw-bags}
  The tree decomposition $\mathcal{D}$ contains a bag $\cB_i$, $0\le
  i\le 6n$, with $\{s_8,x_1,\allowbreak c,d,\allowbreak y,z\}\subseteq
  B_0$, $\{t_1,x_{6n},a,b,y,z\} \subseteq B_{6n}$, and
  $\{a,b,c,d,y,z,\allowbreak x_i,x_{i+1}\} \subseteq \cB_i$, for every $i$, $1\le
  i < 6n$.
\end{cla}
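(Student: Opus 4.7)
My plan is to identify the bags $\cB_0,\cB_1,\dots,\cB_{6n}$ inductively along the unique path $P$ in the tree of $\mathcal{D}$ that connects $\cB_{-1}$ to $\cB_{6n+1}$. The whole argument exploits the subtree property of tree decompositions, together with the observation that the only way to cover the ``long-range'' hyperedges $E_{3j}$, $E_{3j+1}$, $E_{6j+1,6j+2}$, $E_{6j+4,6j+5}$ within the cover-width bound of $4$ is by pushing the six ``spine'' vertices $a,b,c,d,y,z$ through the entire interior of $P$.

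First, I would argue that each of the vertices $c,d,y,z$ (present in $\cB_{-1}$) and each of $a,b$ (present in $\cB_{6n+1}$) occupies a contiguous interval on $P$. The hyperedges $E_{3j} = \{a,c,y,x_{3j}\}$ and $E_{3j+1}=\{b,d,z,x_{3j+1}\}$ for $1\le j<2n$ each force the respective vertices into a common bag with an $x$-vertex that cannot appear in $\cB_{-1}$ or $\cB_{6n+1}$; by connectedness of the corresponding subtrees in the tree of $\mathcal{D}$, this forces $c,d,y,z$ to persist along $P$ from $\cB_{-1}$ to the farthest $x_j$-bag that needs them, and symmetrically $a,b$ to extend back from $\cB_{6n+1}$. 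Using these $3j$, $3j+1$ hyperedges in alternation with the $E_{6j+1,6j+2}$ and $E_{6j+4,6j+5}$ hyperedges, one checks that the two extensions must cross, so that every bag of $P$ containing any $x_j$ with $1\le j\le 6n$ already contains all of $\{a,b,c,d,y,z\}$.

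Second, for each $i$ with $1\le i<6n$ for which the hyperedge $E_{i,i+1}$ is defined in $H_n$ (that is, for $i\not\equiv 2,5\pmod 6$), the primal edge $\{x_i,x_{i+1}\}$ is present, so the subtrees of $x_i$ and $x_{i+1}$ intersect. Combined with the step above, the subtrees of $\{a,b,c,d,y,z,x_i,x_{i+1}\}$ are pairwise intersecting in the tree of $\mathcal{D}$, and Helly's property for subtrees of a tree yields a common bag $\cB_i$ on $P$ containing all eight vertices. The bags $\cB_0$ and $\cB_{6n}$ are located analogously using the primal edges from $E_1=\{s_8,x_1\}$ and $E_{6n}=\{x_{6n},t_1\}$, with the extension argument supplying the remaining vertices $\{c,d,y,z\}$ and $\{a,b,y,z\}$ respectively.

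The main obstacle will be the residual indices $i\equiv 2,5\pmod 6$, where $x_i$ and $x_{i+1}$ do not share a hyperedge and hence are not primal-adjacent, so Helly's property does not apply directly. Here I would argue instead by a local exchange on $\mathcal{D}$: because $x_i$ and $x_{i+1}$ share primal neighbors in $\{a,b,c,d\}$ and because the cover-width budget of $4$ leaves essentially no freedom in which hyperedges may cover the intervening bags, one can rearrange $\mathcal{D}$ without increasing its cover width so that the subtrees of $x_i$ and $x_{i+1}$ overlap on $P$; the common bag produced by this rearrangement then serves as $\cB_i$, to which the spine-extension argument again attaches $\{a,b,c,d,y,z\}$.
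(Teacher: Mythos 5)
Your ``main obstacle'' is an artifact of a typo in the construction, not a genuine gap in the argument. The hyperedge set is declared as including $\{E_{i,i+1}: 1\le i<6n\}$ in its entirety, and the accompanying figure and, decisively, the later \autoref{cl:ghtw-cover} both reference $E_{i,i+1}$ for $i\equiv 2\pmod 6$ and $i\equiv 5\pmod 6$ (covered together with $C,D$ and $A,B$ respectively). The intended definitions are $E_{i,i+1}=\{a,b,x_i,x_{i+1}\}$ for $i\equiv 1,2\pmod 6$ and $E_{i,i+1}=\{c,d,x_i,x_{i+1}\}$ for $i\equiv 4,5\pmod 6$. Once this is repaired, $\{x_i,x_{i+1}\}$ is a primal edge for every $1\le i<6n$, so $(s_8,x_1,\dots,x_{6n},t_1)$ is a path in $\mathcal G(H_n)$ and your Helly argument applies uniformly, exactly as in the paper's proof. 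This also means the ``local exchange'' fallback should be dropped entirely: even apart from its vagueness, the claim must hold for the given tree decomposition $\mathcal D$ (which the later claims \autoref{cl:ghtw-path}, \autoref{cl:ghtw-cover}, \autoref{cl:ghtw-good} continue to reason about), so you are not free to rearrange $\mathcal D$ without re-verifying everything downstream.

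Your first step, pushing $a,b,c,d,y,z$ through $P$ via the long-range hyperedges $E_{3j}$, $E_{3j+1}$, $E_{6j+1,6j+2}$, $E_{6j+4,6j+5}$, is correct in spirit but more complicated than needed, and the phrase ``one checks that the two extensions must cross'' is doing real work that you do not spell out. The paper's route is shorter: once the bags $\cB_i$ on $P$ with $\{x_i,x_{i+1}\}\subseteq\cB_i$ are located, observe that $y,z$ lie in both endpoints $\cB_{-1}$ and $\cB_{6n+1}$ of $P$ and hence, by the subtree property, in every bag of $P$; that $c,d$ lie in $\cB_{-1}$ and also in $\cB_{6n-1}$ (because $E_{6n-1,6n}=\{c,d,x_{6n-1},x_{6n}\}$ is a primal clique, forcing a bag with all four, which can be taken on $P$), so $c,d$ lie in all $\cB_i$ with $i<6n$; and symmetrically that $a,b$ lie in $\cB_1$ and $\cB_{6n+1}$ and hence in all $\cB_i$ with $i\ge 1$. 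This yields the stated containments directly without appealing to a ``crossing'' of two extensions.
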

\begin{proof}
 The primal graph $\mathcal{G}(H_n)$ contains the cliques
 $\{s_8,x_1\}$, $\{a,b,x_1,x_2\}$, $\{a,b,$ $x_2,x_3\}$, $\{a,c,y,x_3\}$, $\{x_3,x_4\}$, $\{x_4,b,d,z\}$, $\{c,d,x_4,x_5\}$, $\{c,d,x_5,x_6\}$, $\{a,c,y,x_6\}$, $\{x_6,\allowbreak x_7\}$, $\{b,d,z,x_7\}$, $\{a,b,x_7,x_8\},$ $\dots,$ $\{x_{6n},t_1\}$,
 and every two consecutive cliques in this list intersect in at least one vertex.
 In particular, we observe the path $(s_8,x_1,x_2, \dots, \allowbreak x_{6n}, t_1)$ in $\mathcal{G}(H_n)$.
 Thus, $\mathcal{D}$ contains bags~$\cB_0 \supseteq \{s_8,x_1\}$, $\cB_{6n} \supseteq \{t_1,x_{6n}\}$, and $\cB_i \supseteq \{x_i,x_{i+1}\}$, $1\le i<6n$.
 Moreover, each $\cB_i$, $0\le i<6n$, contains $c,d,y,z$ since $\cB_{-1}$ contains $c,d,y,z$, $\cB_{6n-1}$ contains $c,d$, $\cB_{6n+1}$ contains $y,z$, and
 without loss of generality, we can assume the $\cB_i$, $0\le i\le 6n$, were chosen such that they are on the path from~$\cB_{-1}$ to $\cB_{6n+1}$ in $T$.
 Similarly, each $\cB_i$, $1\le i\le 6n$, contains $a,b$.
\end{proof}

\noindent
A tree decomposition for $H_n$ is a \emph{good} tree decomposition if
it contains the bags~$\cB_{-1}=S\cup \{c,d,y,z\}$
and~$\cB_{6n+1}=T\cup \{a,b,y,z\}$ and every bag except $\cB_{-1}$ and
$\cB_{6n+1}$ can be covered with at most $3$ hyperedges, and in case
such a bag is covered with exactly $3$ hyperedges, two of these
hyperedges are in the boundary. A \emph{good} cover for a good tree
decomposition is a cover for each bag according to the specifications
of a good tree decomposition.

\begin{cla}\label{cl:ghtw-path}
  If $\mathcal{D}$ is a good tree decomposition for $H_n$, then, for
  every $i$, $-1\le i\le 6n$, there is a path from the bag $\cB_i$ to
  the bag $\cB_{i+1}$ that avoids all the bags $\cB_j$, $j\in
  \{-1,\dots,6n+1\} \setminus \{i,i+1\}$.
\end{cla}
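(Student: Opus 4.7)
The plan is to exploit the subtree property of tree decompositions together with the cover-width constraint of a good decomposition. For each index $i$ with $-1 \le i \le 6n$, I pick a ``bridge vertex'' $v_i$ that lies in both $\cB_i$ and $\cB_{i+1}$: namely $v_{-1} := s_8$, $v_i := x_{i+1}$ for $0 \le i \le 6n-1$, and $v_{6n} := t_1$. That each such $v_i$ lies in the two claimed bags is immediate from the inclusions recorded in \autoref{cl:ghtw-bags}. By the third property of a tree decomposition, the set of nodes of $T$ whose bag contains $v_i$ induces a connected subtree; consequently, the unique path from $\cB_i$ to $\cB_{i+1}$ in $T$ stays within that subtree, so it is enough to show that no other $\cB_j$ contains $v_i$.

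To achieve this, the key intermediate step is to upper-bound the content of each interior bag $\cB_j$. For $1 \le j < 6n$ the bag already contains the six vertices $\{a,b,c,d,y,z\}$ together with $x_j$ and $x_{j+1}$, and no single hyperedge of $H_n$ covers all of $\{a,b,c,d,y,z\}$, so the cover width is at least two. I argue that in any valid cover of size two or three respecting the boundary-hyperedge condition, the bag is forced to be a subset of $\{a,b,c,d,y,z,x_j,x_{j+1}\}$: a cover of size two must be $\{E_{3m},E_{3m+1}\}$ with $j=3m$, since $E_{3m}\cup E_{3m+1}$ is the only two-hyperedge union covering $\{a,b,c,d,y,z\}$; and a cover of size three with two boundary hyperedges forces the non-boundary hyperedge to contain both $x_j$ and $x_{j+1}$ (no non-boundary hyperedge contains more than two $x$-vertices) and the two boundary hyperedges to be drawn from either $\{A,B\}$ or $\{C,D\}$ in order to cover the remaining four of $\{a,b,c,d,y,z\}$. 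A similar but easier enumeration shows $\cB_0 \subseteq \{c,d,y,z,s_8,x_1\}$ (unique good cover $\{C,D,E_1\}$) and $\cB_{6n} \subseteq \{a,b,y,z,t_1,x_{6n}\}$ (unique good cover $\{A,B,E_{6n}\}$), while $\cB_{-1}$ and $\cB_{6n+1}$ are given explicitly.

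With these content bounds the non-containments follow by inspection. Since $\cB_{-1}$ and $\cB_{6n+1}$ contain no vertex from $X$, and since each interior $\cB_j$ contains only $x_j$ and $x_{j+1}$ from $X$, the vertex $v_i = x_{i+1}$ lies in $\cB_j$ only if $j \in \{i,i+1\}$, as required. Symmetrically, the bound forces $s_8$ to lie only in $\cB_{-1}$ and $\cB_0$, and $t_1$ only in $\cB_{6n}$ and $\cB_{6n+1}$. The main obstacle is the cover enumeration described in the previous paragraph, which depends on the full hyperedge structure of the construction; once that enumeration is carried out, the claim is an immediate consequence of the subtree property applied to $v_i$.
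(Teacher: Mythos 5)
Your proof is correct and follows essentially the same route as the paper's: both exploit the subtree/running-intersection property to force the bridge vertex ($x_{i+1}$, or $s_8$/$t_1$ at the ends) into any $\cB_j$ lying on the path from $\cB_i$ to $\cB_{i+1}$, and then derive a contradiction from the good-cover constraints. The only stylistic difference is that you first pin down $\cB_j\subseteq\{a,b,c,d,y,z,x_j,x_{j+1}\}$ via a cover enumeration (which anticipates the later Claim~\ref{cl:ghtw-cover}) and conclude non-containment of the bridge vertex, whereas the paper argues directly that $\cB_j\cup\{x_{i+1}\}$ admits no good cover; your explicit choice of $v_{-1}=s_8$ and $v_{6n}=t_1$ also handles the endpoint cases a bit more carefully than the paper's uniform ``$x_{i+1}$'' phrasing.
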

\begin{proof}
  Suppose the path from $\cB_i$ to $\cB_{i+1}$ passes through $\cB_j$
  with $j\in \{-1,\dots,$ $6n+1\} \setminus \{i,i+1\}$.  Since every
  bag on the path from $\cB_i$ to $\cB_{i+1}$ contains $\cB_i \cap
  \cB_{i+1}$, we have that $x_{i+1} \in \cB_j$.  But then
  $\{s_8,x_1,c,d,y,z,x_{i+1}\}\subseteq B_j$ (if $j=0$) or
  $\{t_1,x_{6n},a,b,y,z,x_{i+1}\} \subseteq B_{j}$ (if $j=6n$) or
  $\{a,b,c,d,y,z,x_j,x_{j+1},x_{i+1}\} \subseteq \cB_j$ (otherwise),
  implying that $B_j$ cannot be covered by two hyperedges and it
  cannot be covered by three hyperedges of which two are in the
  boundary.  \end{proof}

\begin{cla}\label{cl:ghtw-cover}
  In every good cover, $\cB_0$ is covered by $\{E_1,C,D\}$, $\cB_{6n}$
  is covered by $\{E_{6n},\allowbreak A,\allowbreak B\}$, and for every $i$, $1\le i< 6n$,
\begin{align*}
 \cB_i \text{ is covered by } 
 \begin{cases}
 \{E_{i,i+1},C,D\} &\text { if } i \equiv 1 \pmod 6,\\
 \{E_{i,i+1},C,D\} &\text { if } i \equiv 2 \pmod 6,\\
 \{E_{i},E_{i+1}\} &\text { if } i \equiv 3 \pmod 6,\\
 \{E_{i,i+1},A,B\} &\text { if } i \equiv 4 \pmod 6,\\
 \{E_{i,i+1},A,B\} &\text { if } i \equiv 5 \pmod 6, \text{ and}\\
 \{E_{i},E_{i+1}\} &\text { if } i \equiv 0 \pmod 6.
 \end{cases}
\end{align*}
\end{cla}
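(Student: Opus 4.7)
The plan is a case analysis on $i$, using Claim~\ref{cl:ghtw-bags} to pin down the vertices that must lie in $\cB_i$ and then exploiting that each $x_i$ belongs to very few hyperedges of $H_n$ and that a good cover of size three must contain exactly two boundary hyperedges. A key observation, used throughout, is that the only boundary hyperedges intersecting $\{a,b,c,d\}$ are $A,B,C,D$ and that each of $a,b,c,d$ lies in a unique one of these; hence the admissible pairs of boundary edges covering any four-element subset of $\{a,b,c,d,y,z\}$ are extremely restricted.

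I would first dispatch the endpoint bags. By Claim~\ref{cl:ghtw-bags}, $\cB_0$ contains $\{s_8,x_1,c,d,y,z\}$. The only hyperedges containing $x_1$ are $E_1=\{s_8,x_1\}$ and $E_{1,2}$, and neither is a boundary hyperedge, so every good cover uses one of them. If $E_{1,2}$ is used, then $\{s_8,c,d,y,z\}$ must be covered by at most two boundary hyperedges; since the only boundary hyperedges containing $s_8$ are $S_z$ and the binary edges of $B_S$ (none of which touch $\{c,d,y,z\}$), a direct inspection of the boundary list rules this out. Hence $E_1$ is forced; then $\{c,d,y,z\}$ must be covered by two boundary hyperedges, and by the observation above only $\{C,D\}$ works. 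The argument for $\cB_{6n}$ is symmetric, yielding the cover $\{E_{6n},A,B\}$.

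For an internal bag $\cB_i\supseteq\{a,b,c,d,y,z,x_i,x_{i+1}\}$ with $1\le i<6n$, I would case-split on $i\bmod 6$. When $i\equiv 0$ or $3\pmod 6$, both $E_i=\{a,c,y,x_i\}$ and $E_{i+1}=\{b,d,z,x_{i+1}\}$ are defined and together cover $\cB_i$ in two hyperedges. To show this is the only good cover, I would rule out three-edge alternatives: such an alternative must contain two boundary hyperedges, hence the remaining non-boundary hyperedge has to cover both $x_i$ and $x_{i+1}$ and is therefore the binary edge $E_{i,i+1}=\{x_i,x_{i+1}\}$; but then the two boundary hyperedges must jointly cover $\{a,b,c,d,y,z\}$, which is impossible by the key observation. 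When $i$ is in one of the remaining residue classes, $E_{i,i+1}$ is the unique non-boundary hyperedge containing both $x_i$ and $x_{i+1}$ and has the form $\{a,b,x_i,x_{i+1}\}$ or $\{c,d,x_i,x_{i+1}\}$; any other choice would require two separate non-boundary edges for $x_i$ and $x_{i+1}$ alone, leaving room for at most one boundary hyperedge and hence at least four uncovered vertices among $\{a,b,c,d,y,z\}$. Once $E_{i,i+1}$ is fixed, two boundary hyperedges must cover the remaining four vertices of $\{a,b,c,d,y,z\}$, and the key observation forces these to be exactly $\{C,D\}$ or exactly $\{A,B\}$, depending on the form of $E_{i,i+1}$.

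The main obstacle is the bookkeeping: for each residue class, one must enumerate the (at most three) hyperedges of $H_n$ that contain $x_i$, and then systematically rule out every alternative cover by combining the size-three budget with the two-boundary constraint. Once the key observation about $A,B,C,D$ is in hand, however, the case analysis is essentially mechanical.
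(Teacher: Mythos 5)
Your plan follows the only reasonable route here — the paper's own proof is a one-liner (``the claim easily follows from Claim~\ref{cl:ghtw-bags}''), and you correctly supply the missing case analysis: pin down $\cB_i$ via Claim~\ref{cl:ghtw-bags}, observe that a three-hyperedge good cover has exactly two boundary hyperedges and hence exactly one non-boundary hyperedge, note that the non-boundary one must account for all of $\cB_i \cap X$, and then check that the two boundary hyperedges can only be $\{C,D\}$ or $\{A,B\}$. The structure and conclusions are right.

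However, your stated ``key observation'' is factually wrong in two places, and since you flag it as ``used throughout'' this is worth correcting. First, $A,B,C,D$ are \emph{not} the only boundary hyperedges meeting $\{a,b,c,d\}$: the boundary hyperedges $S_c=\{c,s_1,s_2\}$, $S_d=\{d,s_3,s_4\}$, $T_a=\{a,t_1,t_2\}$, $T_b=\{b,t_3,t_4\}$ each contain one of $a,b,c,d$. Second, the binary hyperedges in $B_S$ are not boundary hyperedges at all — the boundary hyperedges are exactly the twelve listed in the construction — so the aside in the $\cB_0$ argument about ``the boundary list'' includes edges that do not belong there. Neither error breaks the argument, but the reason it survives is different from what you state: what actually carries the day is that every boundary hyperedge covers at most two vertices of $\{a,b,c,d,y,z\}$, only $A=\{a,y\}$, $B=\{b,z\}$, $C=\{c,y\}$, $D=\{d,z\}$ cover exactly two, and the only two-element antichains among these four covering a four-element subset are $\{A,B\},\{A,D\},\{B,C\},\{C,D\}$. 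In particular, $S_c,S_d,T_a,T_b$ contribute only a single relevant vertex each (and never $y$ or $z$), which is why they never close the gap left by the single non-boundary hyperedge $E_{i,i+1}$; and in the $\cB_0$ case, the unique boundary hyperedge containing $s_8$ is $S_z$, which with a second boundary hyperedge still cannot cover $\{s_8,c,d,y,z\}$ since that second hyperedge would need to cover the three vertices $c,d,y$. With the observation stated in that corrected form, your case analysis goes through exactly as you describe.
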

\begin{proof}
  The claim easily follows from Claim \ref{cl:ghtw-bags}.  
\end{proof}

\noindent
Suppose $\mathcal{D}$ is a good tree decomposition for $H_n$.  The
\emph{backbone} of $\mathcal{D}$ is the path~$P$ in~$T$ starting at
the bag~$\cB_{-1}$ and ending at the bag $\cB_{6n+1}$.  By
\autoref{cl:ghtw-path}, $P$ visits $\cB_{0}, \cB_{1}, \dots, \cB_{6n}$
in this order.  Let $P_{i,j}$ denote the subpath of $P$ starting at
$\cB_i$ and ending at $\cB_j$. %

\begin{cla}\label{cl:ghtw-nocoverAB}
  For every $i\in \{0,6,12,\cdots,6n-6\}$, no bag on $P_{i,i+3}$ is
  covered by a set of hyperedges $\mathcal{Q}$ with $A,B\in
  \mathcal{Q}$ in a good cover.
\end{cla}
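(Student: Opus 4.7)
My plan is to argue by contradiction: suppose that some bag $\mathcal{B}$ on $P_{i,i+3}$ (with $i\equiv 0\pmod 6$) is covered in a good cover by a set $\mathcal{Q}$ with $\{A,B\}\subseteq\mathcal{Q}$. First I would use \autoref{cl:ghtw-cover} to rule out $\mathcal{B}$ being one of the four distinguished bags $\cB_i,\cB_{i+1},\cB_{i+2},\cB_{i+3}$: for $i\equiv 0\pmod 6$ their prescribed good covers are $\{E_i,E_{i+1}\}$, $\{E_{i+1,i+2},C,D\}$, $\{E_{i+2,i+3},C,D\}$, and $\{E_{i+3},E_{i+4}\}$, none of which contains both $A$ and $B$. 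Hence $\mathcal{B}$ must lie strictly between two consecutive distinguished bags $\cB_\ell$ and $\cB_{\ell+1}$ with $\ell\in\{i,i+1,i+2\}$.

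Next I would apply the subtree property of tree decompositions. By \autoref{cl:ghtw-bags}, both $\cB_i$ and $\cB_{i+3}$ contain $c$ and $d$, so connectivity of the bags containing $c$ (and likewise of those containing $d$) forces $\{c,d\}\subseteq\mathcal{B}$. Since $A\cup B=\{a,y,b,z\}$, the cover $\mathcal{Q}$ must contain a third hyperedge $X$ with $\{c,d\}\subseteq X$, and $|\mathcal{Q}|\leq 3$ by the good-cover condition. A direct inspection of the hyperedge list of $H_n$ shows that the only hyperedges containing both $c$ and $d$ are those of the form $E_{6j+4,6j+5}=\{c,d,x_{6j+4},x_{6j+5}\}$ with $0\leq j<n$ (the boundary hyperedges $A,B,C,D,S_{*},T_{*}$ each miss at least one of $c,d$, the $E_{3i}$ and $E_{3i+1}$ also miss one, and the $E_{3i,3i+1}$ contain none of $c,d$). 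Thus $\mathcal{Q}=\{A,B,E_{6j+4,6j+5}\}$ for some $j$, and consequently $\mathcal{B}\subseteq\{a,b,c,d,y,z,x_{6j+4},x_{6j+5}\}$.

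Finally I would close by a subscript-parity argument. Since $\mathcal{B}$ lies strictly between $\cB_\ell$ and $\cB_{\ell+1}$, and both of these bags contain $x_{\ell+1}$ (by \autoref{cl:ghtw-bags}), the subtree property yields $x_{\ell+1}\in\mathcal{B}$; combined with the inclusion above, this forces $\ell+1\in\{6j+4,6j+5\}$, i.e., $\ell+1\equiv 4$ or $5\pmod 6$. But $\ell\in\{i,i+1,i+2\}$ together with $i\equiv 0\pmod 6$ gives $\ell+1\equiv 1,2,$ or $3\pmod 6$, a contradiction. The only slightly delicate point is being careful with the case distinction (the four distinguished bags plus three types of intermediate bag), and making sure the connectivity argument applies to each of $c,d,x_{\ell+1}$ in the right way; the mod-$6$ arithmetic itself is routine once $\mathcal{B}\subseteq\{a,b,c,d,y,z,x_{6j+4},x_{6j+5}\}$ has been established.
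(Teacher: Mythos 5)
Your proof is correct and takes essentially the same approach as the paper: derive that $\cB$ must contain $c,d$ and some $x_j$ with $j\in\{i+1,i+2,i+3\}$, then observe that no hyperedge contains both $c,d$ together with such an $x_j$ (the only hyperedges containing both $c$ and $d$ have $x$-subscripts $\equiv 4,5\pmod 6$), forcing $|\mathcal{Q}|\ge 4$. The initial case split excluding the distinguished bags $\cB_i,\ldots,\cB_{i+3}$ via \autoref{cl:ghtw-cover} is unnecessary (the same argument covers them since $\cB_\ell$ itself contains $x_{\ell+1}$), but it does no harm.
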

\begin{proof}
  Consider a bag $\cB$ on $P_{i,i+3}$ and let $\mathcal{Q} \supseteq
  \{A,B\}$ be a cover for $\cB$.  The bag $\cB$ contains the
  intersection of two bags that are consecutive in the list
  $\cB_i,\cB_{i+1},\cB_{i+1},\cB_{i+3}$.  Therefore, at least one of
  $x_{i+1},x_{i+2},x_{i+3}$ is in $\cB$.  We also have that $c,d\in
  \cB$ since $c,d\in \cB_i \cap \cB_{i+3}$.  However, no hyperedge
  contains $x_{i+1},c,d$ or $x_{i+2},c,d$ or $x_{i+3},c,d$. Thus,
  $|\mathcal{Q}|\ge 4$, and therefore $\mathcal{Q}$ is not part of a
  good cover.  
\end{proof}

\begin{cla}\label{cl:ghtw-nocoverCD}
  For every $i\in \{3,9,15,\cdots,6n-3\}$, no bag on $P_{i,i+3}$ is
  covered by a set of hyperedges $\mathcal{Q}$ with $C,D\in
  \mathcal{Q}$ in a good cover.
\end{cla}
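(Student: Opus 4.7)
The plan is to mirror the proof of Claim~\ref{cl:ghtw-nocoverAB}, exchanging the roles of the boundary hyperedges $\{A,B\}$ and the vertex pair $\{c,d\}$ with those of $\{C,D\}$ and $\{a,b\}$. The construction of $H_n$ is deliberately symmetric in this way: for $i\equiv 0\pmod 6$ the blocks $P_{i,i+3}$ are joined through $c,d$, whereas for $i\equiv 3\pmod 6$ they are joined through $a,b$. This exchange is exactly what drives the two claims.

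First, I would fix $i\in\{3,9,\dots,6n-3\}$ and a bag $\cB$ on $P_{i,i+3}$. Applying Claim~\ref{cl:ghtw-path} exactly as in the previous proof, $\cB$ contains the intersection of two bags that are consecutive in the list $\cB_i,\cB_{i+1},\cB_{i+2},\cB_{i+3}$, so Claim~\ref{cl:ghtw-bags} forces at least one of $x_{i+1},x_{i+2},x_{i+3}$ into $\cB$. Since $i\equiv 3\pmod 6$ and Claim~\ref{cl:ghtw-bags} puts both $a$ and $b$ into $\cB_i\cap\cB_{i+3}$, also $\{a,b\}\subseteq\cB$. Hence $\{a,b,x_j\}\subseteq\cB$ for some $j\in\{i+1,i+2,i+3\}$ satisfying $j\equiv 4,5,0\pmod 6$.

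Next, I would assume for contradiction that $\mathcal{Q}\supseteq\{C,D\}$ covers $\cB$ with $|\mathcal{Q}|\le 3$, so $\mathcal{Q}=\{C,D,H^*\}$ for some hyperedge $H^*$. Since $C\cup D=\{c,y,d,z\}$ contains none of $a,b,x_j$, the hyperedge $H^*$ would have to contain all three of $a$, $b$, and $x_j$. The core task is to verify that the hyperedge list of $H_n$ admits no such $H^*$: the only hyperedges containing both $a$ and $b$ are the $E_{6k+1,6k+2}=\{a,b,x_{6k+1},x_{6k+2}\}$, whose $x$-indices are $\equiv 1,2\pmod 6$ and therefore incompatible with $j\equiv 4,5,0\pmod 6$. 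This yields $|\mathcal{Q}|\ge 4$, contradicting $\mathcal{Q}$ being part of a good cover.

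The main (and only) obstacle is routine bookkeeping: one must enumerate which hyperedges contain $a$ and which contain $b$ from the explicit construction and check that their intersection consists solely of the $E_{6k+1,6k+2}$. This is immediate once one notes that $a$ appears only in $A$, $T_a$, the $E_{3k}$, and the $E_{6k+1,6k+2}$, while $b$ appears only in $B$, $T_b$, the $E_{3k+1}$, and the $E_{6k+1,6k+2}$.
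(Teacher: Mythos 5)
Your proof is correct and carries out exactly the symmetry the paper invokes: the paper's own proof of this claim is the single sentence ``The proof is symmetric to the proof of Claim~\ref{cl:ghtw-nocoverAB},'' and your argument spells out that symmetric argument faithfully, with a correct enumeration confirming that $E_{6k+1,6k+2}$ are the only hyperedges of $H_n$ containing both $a$ and $b$.
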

\begin{proof}
  The proof is symmetric to the proof of \autoref{cl:ghtw-nocoverAB}.
\end{proof}

\noindent
Consider a good cover of $\mathcal{D}$.  A \emph{switch} is an
inclusion-wise minimal subpath $(Y_i, \dots, Y_j)$ of the backbone of
$\mathcal{D}$ where $Y_i$ is covered by~$\mathcal{Q}_i$ with $C,D\in
\mathcal{Q}_i$ and $Y_j$ is covered by~$\mathcal{Q}_j$ with $A,B \in
\mathcal{Q}_j$.  The \emph{signature} of a good cover of~$\mathcal{D}$
is its number of switches.

\begin{cla}\label{cla:ghtw-sign}
  Each good cover of each good tree decomposition of $H_n$ has
  signature~$n$.
\end{cla}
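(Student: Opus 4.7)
My plan is to use \autoref{cl:ghtw-nocoverAB} and \autoref{cl:ghtw-nocoverCD} to partition the backbone into alternating ``no-$AB$'' and ``no-$CD$'' pieces, and then to observe that each transition from a no-$AB$ piece to the immediately following no-$CD$ piece contributes exactly one switch.

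Fix a good cover of $\mathcal{D}$ and its backbone $P$. Call a bag \emph{$CD$-covered} if $\{C,D\}$ is contained in its cover, and \emph{$AB$-covered} analogously. By \autoref{cl:ghtw-nocoverAB}, the subpaths $P_{6j,6j+3}$ for $j=0,\dots,n-1$ contain no $AB$-covered bag; by \autoref{cl:ghtw-nocoverCD}, the subpaths $P_{6j+3,6j+6}$ for $j=0,\dots,n-1$ contain no $CD$-covered bag. Together, these $2n$ subpaths tile the portion of $P$ from $\cB_0$ to $\cB_{6n}$ into alternating no-$AB$ and no-$CD$ pieces. \autoref{cl:ghtw-cover} further pins down a $CD$-covered bag $\cB_{6j+1}$ inside each no-$AB$ piece and an $AB$-covered bag $\cB_{6j+4}$ inside each no-$CD$ piece. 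For the two end segments $P_{-1,0}$ and $P_{6n,6n+1}$, I would additionally observe that every bag on $P_{-1,0}$ contains $\cB_{-1}\cap\cB_0\supseteq\{s_8,c,d,y,z\}$ and every bag on $P_{6n,6n+1}$ contains $\cB_{6n}\cap\cB_{6n+1}\supseteq\{t_1,a,b,y,z\}$; since no hyperedge covers $\{s_8,c,d\}$ and no hyperedge covers $\{t_1,a,b\}$, bags on $P_{-1,0}$ cannot be $AB$-covered and bags on $P_{6n,6n+1}$ cannot be $CD$-covered in a good cover.

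With this structure, for each $j\in\{0,\dots,n-1\}$ the last $CD$-covered bag $Y_j$ in $P_{6j,6j+3}$ and the first $AB$-covered bag $Y_j'$ in $P_{6j+3,6j+6}$ form a switch, since no bag strictly between them on $P$ is $CD$- or $AB$-covered. Conversely, any switch $(Z,\dots,Z')$ has $Z$ in some no-$AB$ piece $P_{6j,6j+3}$ and $Z'$ in some no-$CD$ piece $P_{6j'+3,6j'+6}$ (using the end-segment observation to exclude $P_{6n,6n+1}$ as a location for $Z$ and $P_{-1,0}$ as a location for $Z'$). The case $j'<j$ reverses the backbone orientation, while the case $j'>j$ forces the $AB$-covered bag $\cB_{6j+4}$ into the interior of the switch, contradicting minimality; hence $j'=j$, and two switches sharing the same $j$ would overlap or nest and again violate minimality. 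The main subtlety I expect is this end-segment analysis, since \autoref{cl:ghtw-nocoverAB} and \autoref{cl:ghtw-nocoverCD} do not directly constrain $P_{-1,0}$ and $P_{6n,6n+1}$; once those segments are handled as above, the bijection between switches and no-$AB$/no-$CD$ junctions yields signature exactly $n$.
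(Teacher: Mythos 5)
Your proposal is correct and fills in precisely the detail the paper compresses into a single line (``follows from Claims~\ref{cl:ghtw-cover}, \ref{cl:ghtw-nocoverAB}, and~\ref{cl:ghtw-nocoverCD}''): you tile the backbone into alternating no-$AB$ and no-$CD$ segments, pin a $CD$-covered and an $AB$-covered anchor in each via Claim~\ref{cl:ghtw-cover}, and use minimality of switches to establish a bijection. You correctly identify that the end segments $P_{-1,0}$ and $P_{6n,6n+1}$ are not covered by Claims~\ref{cl:ghtw-nocoverAB} and~\ref{cl:ghtw-nocoverCD} and supply the missing cover-arithmetic argument; the one slight imprecision is the assertion that $Z$ lies in some $P_{6j,6j+3}$, which requires not only your end-segment observation (that $P_{-1,0}$ has no $AB$-covered bag) but also the same minimality argument you already use in the $j'>j$ case (with $\cB_0$ or $\cB_1$ as the intermediate $CD$-covered witness) — a cosmetic gap, not a substantive one.
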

\begin{proof}
  The claim follows from Claims \ref{cl:ghtw-cover},
  \ref{cl:ghtw-nocoverAB}, and \ref{cl:ghtw-nocoverCD}.
\end{proof}

\noindent
Due to \autoref{cla:ghtw-sign}, we can speak of the signature of
$H_n$ and the signature of a good tree decomposition of $H_n$ as the
signature of some good cover of such a tree decomposition.

\bigskip

\noindent Let $H=H_n$ and $H'=H_m$.  Consider a tree decomposition
$\mathcal{D}=(\{V_i : i\in I\},T)$ of~$H \oplush H'$ with generalized
hypertree width~$4$.  Without loss of generality, suppose the bags
$\cB_{-1}=S\cup \{c,d,y,z\}$ and $\cB_{6n+1}=T\cup \{a,b,y,z\}$ are
leafs of this decomposition and their neighboring bags contain both
copies of $x_1$ and $x_{6n}$, respectively.  Let $\mathcal{D}_{|_H}$
denote the restriction of $\mathcal{D}$ to $H$, i.e., it has the same
tree, but each bag is restricted to the vertices of $H$.

\begin{cla}\label{cl:ghtw-good}
 $\mathcal{D}_{|_H}$ is a good tree decomposition for $H$.
\end{cla}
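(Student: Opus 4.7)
The plan is to verify the three defining conditions of a good tree decomposition for $\mathcal{D}_{|_H}$: that it is a tree decomposition of $H$, that it contains the bags $\cB_{-1} = S \cup \{c,d,y,z\}$ and $\cB_{6n+1} = T \cup \{a,b,y,z\}$, and that every other bag has cover width at most three in $H$, with two boundary hyperedges in the cover whenever exactly three hyperedges are used.

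The first two conditions are essentially immediate. Every vertex $v\in V(H)$ is also a vertex of $H\oplush H'$, so the bags of $\mathcal{D}_{|_H}$ containing $v$ coincide with the bags of $\mathcal{D}$ containing $v$ and thus induce a subtree. Every hyperedge $e$ of $H$ is contained in a hyperedge of $H\oplush H'$, namely $e$ itself if $e$ is non-boundary, or the merged hyperedge of the same label otherwise; some bag of $\mathcal{D}$ therefore contains all vertices of $e$, and those vertices lie in $V(H)$ and survive the restriction. To identify $\cB_{-1}$, I would replay the argument of Claim~\ref{cl:ghtw-bags} inside $H\oplush H'$: the binary hyperedges in $B_S$ together with $S_c, S_d, S_y, S_z$ force $S \cup \{c_H, d_H, y_H, z_H\}$ into a common bag of $\mathcal{D}$, and the standing no-loss-of-generality assumption made just before the claim, that $\cB_{-1}$ and $\cB_{6n+1}$ are leafs of $\mathcal{D}$ of the expected form, makes the $V(H)$-restriction of this bag equal to $\cB_{-1}$. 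A symmetric argument produces $\cB_{6n+1}$.

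The main obstacle is the cover condition. Let $B$ be a bag of $\mathcal{D}$ other than $\cB_{-1}$ or $\cB_{6n+1}$, and let $\mathcal{Q}$ with $|\mathcal{Q}| \le 4$ cover $B$ in $H \oplush H'$. I would split $\mathcal{Q}$ into boundary hyperedges, non-boundary hyperedges of $H$, and non-boundary hyperedges of $H'$. Restricting the boundary hyperedges to their $H$-components, keeping the non-boundary $H$-hyperedges unchanged, and replacing any non-boundary $H'$-hyperedge whose $V(H)$-trace is nonempty by a corresponding hyperedge of $H$ drawn from $B_S \cup B_T$ yields a cover $\mathcal{Q}'$ of $B \cap V(H)$ in $H$. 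By the outline observation that no vertex $x_i$ lies in a boundary hyperedge, every bag of $\mathcal{D}$ that contains an $x$-vertex must spend at least one slot of $\mathcal{Q}$ on a non-boundary hyperedge covering that $x$-vertex; running the same analysis symmetrically against $V(H')$ shows that at least one member of $\mathcal{Q}$ can be discarded when restricting to $V(H)$, giving $|\mathcal{Q}'| \le 3$. The boundary-hyperedge clause when $|\mathcal{Q}'| = 3$ then follows from inspecting the catalogue of possible covers implicit in Claim~\ref{cl:ghtw-cover}: any non-boundary hyperedge of $H$ either lies entirely inside $S \cup T$ or contains at most one of $\{a, b, c, d\}$, so covering a bag that mingles an $x$-vertex with at least two of $\{a, b, c, d, y, z\}$ using only three hyperedges forces two of them to come from the boundary set $\{A, B, C, D\}$.
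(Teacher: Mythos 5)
Your overall plan mirrors the paper's: restrict $\mathcal{D}$ to $V(H)$, argue that one slot of the 4-hyperedge cover is consumed by a non-boundary hyperedge of $H'$ (to cover an $x$-vertex copy from $H'$) and is useless for $\cB\cap V(H)$, and then handle the boundary-hyperedge clause. The first two conditions and the $\le 3$ bound are essentially the paper's argument, modulo a minor tension in your text between \emph{replacing} the non-boundary $H'$-hyperedge by a $B_S\cup B_T$ element (which does not reduce the count) and then claiming one member ``can be discarded''; the paper avoids this by treating the non-boundary $H'$-hyperedge as simply not contributing to $\cB_{|_H}$.

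The genuine gap is in the last step. You assert that ``any non-boundary hyperedge of $H$ either lies entirely inside $S\cup T$ or contains at most one of $\{a,b,c,d\}$.'' This is false: $E_{3i}=\{a,c,y,x_{3i}\}$ contains $a$ and $c$; $E_{3i+1}=\{b,d,z,x_{3i+1}\}$ contains $b$ and $d$; $E_{6i+1,6i+2}=\{a,b,x_{6i+1},x_{6i+2}\}$ contains $a$ and $b$; and $E_{6i+4,6i+5}$ contains $c$ and $d$. So an $H$-side-only counting argument cannot force two of the three cover hyperedges to be boundary (indeed $\{E_{1,2},E_3,E_4\}$ is a size-3 cover of $\{a,b,c,d,y,z,x_1,x_2\}$ with no boundary hyperedge). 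The paper instead argues from the $H'$ side: if $\cB_{|_H}$ needs exactly $3$ of the $\le 4$ hyperedges, then at most one hyperedge of the cover is a non-boundary hyperedge of $H'$; that hyperedge covers at most two of the $H'$-copies of $\{a,b,c,d\}$ lying in $\cB$, and non-boundary $H$-hyperedges cover none of them, so the remaining $\ge 2$ must be hit by boundary hyperedges, each of which contains only one of $\{a,b,c,d\}$. You need this cross-over to the $H'$-side; the $H$-only ``catalogue'' argument as written does not go through.
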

\begin{proof}
  Consider a bag $\cB$ of $\mathcal{D}$ besides $\cB_{-1}$ and
  $\cB_{6n+1}$.  The bag $\cB$ contains a copy of some $x_i$ from
  $H'$.  This vertex is covered by some hyperedge from $H'$ that does
  not belong to the boundary.  Therefore, $\cB_{|_H}$ is covered by at
  most $3$ hyperedges.  Suppose $\cB_{|_H}$ is covered by exactly $3$
  hyperedges.  Then, the cover of $\cB_{|_{H'}}$ contains at most one
  hyperedge that does not belong to the boundary.  But, since each
  such hyperedge covers at most $2$ vertices among $\{a,b,c,d\}$, the
  cover of $\cB$ contains at least $2$ boundary hyperedges.  This
  proves the claim.  
\end{proof}

\noindent Symmetrically, $\mathcal{D}_{|_{H'}}$ is a good tree
decomposition for $H'$.  Since $\mathcal{D}_{|_H}$ and
$\mathcal{D}_{|_{H'}}$ have the same signature, we conclude that $n=m$
due to \autoref{cla:ghtw-sign}. This proves that the \crc{}
$\sim_{\kghtw4}$ does not have finite index over~$\hlarge$. To prove
\autoref{THM:HTW}, it remains to prove that it also has infinite index
over~$\hsmall$.

\begin{proof}[Proof of \autoref{THM:HTW}]
  We aim to apply \autoref{hisoclosed2}. First, we show that the
  constructed graphs~$H_n$ have \iw{} at most~12.  The graph~$H_n
  \setminus (S\cup T \cup \{a,b,c,d,y,z\})$ is a disjoint union of
  trees and therefore, has tree decomposition of width $1$. From this
  tree decomposition, we obtain a tree decomposition of width~$7$
  for~$H_n \setminus (S\cup T)$ by adding $\{a,b,c,d,y,z\}$ to each
  bag of the decomposition.  Finally, we obtain a tree decomposition
  for $H_n$ of width at most 12 by adding the two bags~$S\cup
  \{a,b,c,d,y,z\}$ and $T\cup \{a,b,c,d,y,z\}$ and making them
  adjacent to arbitrary bags of the tree decomposition for $H_n
  \setminus (S\cup T)$.  To obtain a tree decomposition where one bag
  contains $\partial(H_n)$, we modify the tree decomposition of
  width~12 for the incidence graph~$H_n$ by adding the 28~boundary
  objects to each bag.  The result is a tree decomposition of width~40
  where $\partial(H_n)$~is contained in one bag. Then, we obtain a
  hypergraph~$H'_n$ from~$H_n$ by adding to~$H_n$ 13~additional,
  isolated boundary vertices. Clearly, $H_n$ and~$H'_n$ have the same
  generalized hypertree width. We use the tree decomposition of the
  incidence graph of~$H_n$ also for~$H'_n$, but we attach a bag
  consisting of~$\partial(H_n)$ and the 13~additional boundary
  vertices of~$H'_n$. This bag has width~$28+12=40$. Hence, the
  family~$H'_n$ we constructed from~$H_n$ has tree decompositions of
  width~40 of their incidence graphs such that one bag contains all
  41~boundary objects. Thus, \autoref{hisoclosed2} applies to our
  family.  
\end{proof}

\paragraph{Other width measures for hypergraphs.} \autoref{THM:HTW}
easily applies also to the problems \textsc{Hypertree Width} and
\textsc{Fractional Hypertree Width}, which asks whether a hypergraph
has (fractional) hypertree width at most~$k$. \textsc{Hypertree Width}
is W[2]-hard~\citep{GottlobGMSS05} with respect to~$k$ and
\textsc{Fractional Hypertree Width} is expected to be NP-hard for
constant~$k$~\citep{Marx10}.  Before discussing how \autoref{THM:HTW}
applies to these problems, we formally define these width measures.

\bigskip\noindent The \emph{hypertree width} of~$H$ is defined in a
similar way as the generalized hypertree width, except that,
additionally, the tree of the decomposition is rooted and a
hyperedge~$e$ can only be used in the cover of a bag~$V_i$ if
$V_i$~contains all vertices of~$e$ that occur in bags of the subtree
rooted at the node~$i$.

\looseness=-1 The \emph{fractional hypertree width} of~$H$ is also defined
similarly, except that it uses fractional covers: in a
\emph{fractional cover} of a bag, each hyperedge is assigned a
non-negative weight, and for each vertex in the bag, the sum of the
weights of the hyperedges incident to it is at least~$1$. The
\emph{fractional cover width} of the bag is the minimum total sum of
all hyperedges of a fractional cover.

\bigskip\noindent Let $\khtw k$ be the family of hypergraphs of
hypertree width at most~$k$ and $\kfhtw k$ be the family of
hypergraphs of fractional hypertree width at most~$k$.  To see that
the proof of \autoref{THM:HTW} applies to $\sim_{\khtw4}$, observe
that, in our construction, every hyperedge covering a bag is a subset
of that bag.  To see that it extends to $\sim_{\kfhtw4}$, observe that
for every bag $\cB_i$, $0\le i\le 6n$, an optimal fractional cover is
integral, and \autoref{cl:ghtw-nocoverAB} can be extended to $A,B\in
\mathcal{Q}$ with weight $1$---similarly for
\autoref{cl:ghtw-nocoverCD}.

\begin{corollary}
  \textsc{Fractional Hypertree Width} and \textsc{Hypertree Width} do
  not have finite index.
\end{corollary}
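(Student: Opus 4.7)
The plan is to reuse the family $\{H_n\}_{n\geq 1}$ constructed for \autoref{THM:HTW} and to verify that, on this family, hypertree width and fractional hypertree width both coincide with generalized hypertree width at the threshold~$4$. Once this coincidence is established, the separation ``$H_n\oplush H_m$ has width~$4$ iff $n=m$'' proven for GHTW carries over verbatim; the pairs $(H_n,H_m)$ then witness infinite index of $\sim_{\khtw4}$ and $\sim_{\kfhtw4}$ over $\hlarge$, and \autoref{hisoclosed2} lifts this to $\hsmall$ exactly as in the proof of \autoref{THM:HTW}.

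For \textsc{Hypertree Width}, the extra constraint compared to a generalized hypertree decomposition is that every vertex of a hyperedge used in the cover of a bag $V_i$ of the rooted decomposition must, if it appears in the subtree below $V_i$, already lie in~$V_i$. Inspecting the good covers listed in \autoref{cl:ghtw-cover}, every hyperedge that appears---the boundary hyperedges $A,B,C,D,S_c,\ldots,T_z$, or the small hyperedges $E_i$ and $E_{i,i+1}$---is a subset of the bag it covers. Therefore every good generalized hypertree decomposition of $H_n\oplush H_m$ is automatically a hypertree decomposition of the same width, and the signature argument of \autoref{cla:ghtw-sign} transfers unchanged.

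For \textsc{Fractional Hypertree Width}, the direction $n=m$ is immediate from the existence of an integer cover of weight~$4$. For $n\neq m$ I plan to strengthen \autoref{cl:ghtw-nocoverAB} and \autoref{cl:ghtw-nocoverCD} to fractional covers: no bag $\cB$ on $P_{i,i+3}$ for $i\in\{0,6,\ldots,6n-6\}$ admits a fractional cover of total weight $\leq 4$ in which $w_A=w_B=1$, and symmetrically for $w_C=w_D=1$. The reason is that the residual weight budget of~$2$ must fractionally cover $c,d$ together with at least one of $x_{i+1},x_{i+2},x_{i+3}$; since no single hyperedge simultaneously contains any such $x_j$ and both of $c,d$, and the hyperedges incident to $c$ or $d$ and to some $x_j$ in the bag are scarce, the residual LP has optimum strictly greater than~$2$. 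With these strengthened claims, the switch and signature count of \autoref{cla:ghtw-sign} transfers, giving $H_n\oplush H_m\in\kfhtw4\iff n=m$.

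The main obstacle is precisely the fractional step: ruling out clever fractional assignments that evade the integer dichotomy ``$\{A,B\}$ or $\{C,D\}$ integrally in the cover''. I would handle this by enumerating the few hyperedges incident to $\{c,d\}\cup\{x_{i+1},x_{i+2},x_{i+3}\}$ that may receive positive weight---namely $C,D$ and the relevant $E_{3j}$, $E_{3j+1}$, $E_{6j+4,6j+5}$---and exhibiting an explicit dual (Farkas-style) certificate witnessing that their total contribution to covering $c,d$ and the required $x_j$ is at least~$2$ strictly. A symmetric argument handles the $C,D$ case, and together with the HTW observation above this completes the corollary.
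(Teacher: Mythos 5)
Your proposal takes essentially the same approach as the paper: the \textsc{Hypertree Width} case rests on the identical observation that each hyperedge used in the good covers is a subset of its bag, and the \textsc{Fractional Hypertree Width} case rests on the same two facts---integrality of optimal fractional covers of the $\cB_i$ and the extension of Claims~\ref{cl:ghtw-nocoverAB} and~\ref{cl:ghtw-nocoverCD} to the case $A,B$ (resp.\ $C,D$) at weight~$1$. The paper states these observations more tersely, leaving the dual-certificate verification you sketch implicit.
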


\paragraph{Indication for intractability.}\looseness=-1
Formally, \autoref{THM:HTW} only shows that a tree automaton cannot
decide the property of having constant \textsc{Hypertree Width} and,
by \autoref{nomso}, that this property is not expressible in monadic
second-order logic for hypergraphs.  However, following the
argumentation of \citet*{BFW92} for the \textsc{Triangulating Colored
  Graphs} problem introduced in \autoref{sec:intro} leads us to the
following conjecture.

\begin{conjecture}\label{conjec}
  \ghtw{} is W[1]-hard with respect to the parameter \iw.
\end{conjecture}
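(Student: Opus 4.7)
The plan is to prove W[1]-hardness by a parameterized reduction from \textsc{Multicolored Clique}, which is W[1]-hard parameterized by the clique size~$k$. Given an instance $(G,k)$ with color classes $V_1,\dots,V_k$, the reduction should output in polynomial time a hypergraph~$H$ whose incidence treewidth is bounded by some function $f(k)$, together with a constant~$c$, such that $G$ contains a multicolored clique if and only if $H$ has generalized hypertree width at most~$c$.

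The construction should generalize the gadget~$H_n$ from the proof of \autoref{THM:HTW}. Recall that every good tree decomposition of $H_n$ has exactly $n$ switches between covers of type $\{C,D,\cdot\}$ and $\{A,B,\cdot\}$, so the signature is a combinatorial invariant encoded in the decomposition itself. First, I would build $k$ selection gadgets~$H^{(1)},\dots,H^{(k)}$, where $H^{(i)}$ is a refinement of $H_{|V_i|}$ whose structure is enriched so that the chosen vertex of~$V_i$ is recorded by \emph{which} of the $|V_i|$ switches is "marked" (for instance, by inserting an irregular pattern at a single position in the $x_1,\dots,x_{6|V_i|}$ chain). Second, for each pair $(i,j)$ of color classes I would add an edge consistency gadget that admits a good cover only when the two selected vertices are adjacent in~$G$; this gadget must share only a constant-size boundary with~$H^{(i)}$ and~$H^{(j)}$. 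Finally, all gadgets would be glued through a small global boundary $\{a,b,c,d,y,z\}$-like set, so that their covers interact in a controlled way through bags of constant size.

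The critical technical challenge is keeping the \iw{} bounded by $f(k)$ while still allowing the $k$ selection gadgets and the $\binom{k}{2}$ consistency gadgets to communicate enough to force correctness. The key leverage is the same phenomenon exploited in \autoref{THM:HTW}: large hyperedges incident to a few global vertices cost essentially nothing in incidence treewidth, because hyperedges become individual degree-bounded vertices in the incidence graph. Thus the consistency gadget for a pair $(i,j)$ can use $|V_i|\cdot|V_j|$ hyperedges encoding the non-edges of~$G[V_i,V_j]$, all attached to a constant-size boundary, without increasing the incidence treewidth beyond a function of~$k$.

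The hardest part will be designing the consistency gadget so that a bag \emph{can} be covered by $c$ hyperedges precisely at switch positions that correspond to actual edges of~$G$, and proving both directions of the reduction. The forward direction (clique $\Rightarrow$ small width) requires exhibiting an explicit tree decomposition whose backbones through the selection gadgets "synchronize" at the coordinates of the clique vertices; the backward direction requires a signature-style argument, generalizing \autoref{cla:ghtw-sign}, showing that the only good covers force each selection gadget's marked switch to project to the same vertex used in every adjacent consistency gadget. Because generalized hypertree width is monotone rather than strict, forcing equality between independently chosen signatures is delicate and will likely require auxiliary hyperedges that rule out unwanted covers, analogous to the boundary hyperedges $\{A,B,C,D,S_*,T_*\}$ in \autoref{THM:HTW}. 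Resolving this design problem, and then verifying that the construction still admits a tree decomposition of the incidence graph of width $f(k)$, is where the bulk of the work lies.
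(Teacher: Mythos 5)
The statement you are addressing is explicitly labeled a \emph{conjecture} in the paper; the authors do not prove it. What the paper offers is heuristic evidence: the proof of \autoref{THM:HTW} exhibits, for each~$n$, a family of $t$-boundaried hypergraphs of constant incidence treewidth whose canonical right congruence has $\Omega(n)$ equivalence classes, and the authors then observe informally that bundling $k$ copies of such gadgets into $O(k)$-boundaried hypergraphs would force $\Omega((n'/k)^k)$ equivalence classes, which is the kind of lower bound one expects for a W[1]-hard problem. No reduction is given, and the authors are careful to present this only as an ``indication for intractability.''

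Your proposal is consistent in spirit with that heuristic: you aim for a parameterized reduction from \textsc{Multicolored Clique}, using $k$ selection gadgets derived from the $H_n$ construction (the ``marked switch'' plays the role of the per-copy information that drives the $\Omega((n'/k)^k)$ count) plus $\binom{k}{2}$ consistency gadgets. This is a reasonable and natural research plan. However, you have not produced a proof, and you say so yourself: the core technical pieces — the consistency gadget that admits a good cover exactly when the two selected vertices are adjacent, the analog of Claim~\ref{cla:ghtw-sign} showing that good covers of the glued instance force synchronized selections across all gadgets, and the verification that the resulting hypergraph keeps incidence treewidth bounded by $f(k)$ while $\binom{k}{2}$ pairwise gadgets all communicate through the backbone — are all left open. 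In particular, the signature argument in the paper forces \emph{equality} of a single scalar ($n=m$) across a gluing of two copies; forcing a consistent $k$-tuple of choices across $\Theta(k^2)$ overlapping constraints, while keeping every bag coverable by a constant number of hyperedges, is a substantially harder combinatorial problem and is precisely why this remains a conjecture. So your submission is an honest outline of an open problem, not a proof, and it should not be presented as one; its main value is that it isolates the correct difficulties, which match the ones the authors implicitly point to in their discussion.
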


\noindent The reason for this conjecture lies in the number of
equivalence classes observed in the proof of \autoref{THM:HTW}, which
entails a lower bound on the amount of information that needs to be
maintained by an algorithm when it decides whether a given hypergraph
has (generalized, fractional) hypertree width~$k$ using a tree
decomposition of the incidence graph. Typical such algorithms
associate with each bag of the tree decomposition a table that is
computed from the tables associated with the tables of the child
bags. Observe that such an algorithm is essentially a tree automaton;
its states are the tables. Since the number of equivalence classes of
the \crc{} gives a lower bound on the number of states a tree
automaton needs to have in order to decide \ghtw{}, it also gives a
lower bound on the number different tables that have to be handled by
such a (simple) dynamic programming algorithm in order not to make
wrong decisions.

However, restricting the construction in the proof of
\autoref{THM:HTW} to graphs of at most $n$~vertices, the proof of
\autoref{THM:HTW} exhibits a class~$\mathcal{C}$ of $t$-boundaried
hypergraphs on at most~$n$ vertices with constant \iw{}~$t-1$, for
which the \crc{} has $\Omega(n)$ equivalence classes.  Now, consider a
class~$\mathcal{C}'$ of $O(k)$-boundaried hypergraphs where each
hypergraph contains $k$ copies of hypergraphs from~$\mathcal{C}$ and
has at most $n'$~vertices.  Then, the number of equivalence classes of
the \crc{} is~$\Omega((n'/k)^k)$ for $\mathcal{C}'$.  Hence, we
conjecture that an algorithm with running time~$f(k)\cdot n^c$ for a
constant~$c$ and a computable function~$f$ does not exist.

\section{Summary, Discussion and Open Problems}\label{sec:conc}
\noindent We extended the Myhill-Nerode theorem to hypergraphs, making
the methodology more widely applicable.  We did this in the general
framework of the hypergraph analogs of the \emph{large} and \emph{small}
universes of $t$-boundaried graphs.  We used the Myhill-Nerode approach
to obtain fixed-parameter linear-time algorithms for \textsc{Hypergraph
  Cutwidth} by
\begin{enumerate}[1)]
\item using the \emph{method of test sets} to prove ``finiteness''
  results in the \emph{large universe}, which is not only more
  convenient but also of independent interest in the context of
  communication complexity, and
\item then translating this result into an algorithm by means of a
  general machinery that links large universe results to the small
  universe of bounded treewidth representations.
\end{enumerate}
This approach has the advantage of being relatively simple and powerful
in the sense of~(1) yet relatively general in the sense of~(2).  One of
our principle objectives in this article was to establish powerful and
general methodologies to solve hypergraph problems.  If the principal
objective were to have the most efficient algorithm for
\textsc{Hypergraph Cutwidth}, then this is not the way to go.  That
would probably be to hunker down as tightly as possible into the small
universe, and do some serious dynamic
programming~\citep{BodlaenderK96,LagergrenA91,ThilikosSB05}.  Our
machinery only shows when such seriousness is worthwhile: for example,
it is for \textsc{Hypergraph Cutwidth}, but probably not for
\textsc{Hypertree Width} and its variants. 

There are many interesting open questions in this relatively
under-explored area.  The general theme is (referring to the very
abstract program about Myhill-Nerode equivalence classes): how do these
finiteness results relate to computational complexity?  How do these
finiteness issues translate between different settings?
\citet{CourcelleL96} proved some very interesting results about
translating finiteness results between the small and large universes
when the property ${\cal P}$ is restricted to simple graphs of bounded
treewidth.  They showed that
\begin{enumerate}[1)]
\item if $\mathcal{P}$ is restricted to simple graphs of treewidth at
  most~$t$, then finite index of~$\sim_{\mathcal P}$
  over~$\usmallp$ implies finite index over to
  $\ulargep$, and also that
\item (Rephrasing their Fact 9.4:) there is a property~$\mathcal{P}$ (of
  unbounded treewidth), such that for all~$t$, the canonical
  Myhill-Nerode equivalence relation~$\sim_{\mathcal P}$ has finite index over
  $\usmallp$, but infinite index on $\ulargep$.
\end{enumerate}
Do those results extend to hypergraphs in the representation framework
we have explored here?

Another question one might ask is whether there might be general
methods and meta-theorems for obtaining XP-complexity results, based
on communication complexity results with respect to~$\ulargep$.

\pagebreak
{\footnotesize\bigskip\noindent\textbf{Acknowledgments.} The authors are thankful to Mahdi Parsa
for fruitful discussions.  René van Bevern acknowledges support by the
Deutsche Forschungsgesellschaft (DFG), project DAPA (NI~369/12).  Rod
Downey acknowleges support by a grant from the New Zealand Marsden
Fund.  The remaining three authors acknowledge support by the
Australian Research Council, grants DP~1097129 (Michael R.\ Fellows),
DE~120101761 (Serge Gaspers), and DP~110101792 (Michael R.\ Fellows
and Frances A.\ Rosamond). NICTA is funded by the Australian
Government through the Department of Communications and the Australian
Research Council through the ICT Centre of Excellence Program.\par}

\appendix
\section{Appendix: Correction to the Myhill-Nerode theorem for graphs}\label{apxfix}}

Both parameterized complexity books of \citet{DF99,DF13} contain
incorrect proofs of the Myhill-Nerode theorem for graphs, which is
stated there as follows (the necessary definitions are briefly given in
the following; for more details, we refer to \citet[Section~12]{DF13}):

\begin{dftheorem}[{\citet[Theorem~12.7.2]{DF13}}]
  Let $F$~be a family of graphs.  Then, the following are equivalent:
  \begin{enumerate}[(i)]
  \item $F$ is $t$-finite state.
  \item $F_t$ is the union of equivalence classes of a right congruence
    of finite index over~$\mathcal U^{\text{small}}_t$.
  \item $\sim_{F_t}$ has finite index over $\mathcal
    U^{\text{small}}_t$.
  \end{enumerate}
\end{dftheorem}

\noindent Herein, $F_t=F\cap\mathcal U^{\text{small}}_t$. Moreover,
``$F$ is $t$-finite state'' means that the parse trees corresponding to
graphs in~$F_t$ are recognizable by a finite tree automaton.  The flaw
is in the proof of (iii)${}\rightarrow{}$(i).  We describe the flaw and
the necessary corrections using the terminology used in the book:
\begin{enumerate}[i)]
\item For a parse tree~$T$ of a \bndg{t}, $G(T)$~is the graph generated
  by~$T$.

\item The set~$L$ is the collection of parse trees generating the graphs
  in~$F_t$.
\item For a rooted labeled tree~$T$ with exactly one leaf labeled~$x$,
  $T'\cdot_x T$~is the rooted labeled tree obtained from identifying the
  root of~$T'$ with the leaf of~$T$ labeled~$x$, where the label~$x$ is
  replaced by the label of the root of~$T'$.
\end{enumerate}
Now, the approach of showing (iii)${}\rightarrow{}$(i) of \citet{DF13}
is as follows: if $F$~is not $t$-finite state, then there is an infinite
family of parse trees~$T_1,T_2,\dots$ such that, for each
pair~$T_i,T_j$, there is a parse tree~$T_{ij}$ with exactly one leaf
labeled~$x$ such that $T_i\cdot_x T_{ij}\in L\iff T_j\cdot_xT_{ij}\notin
L$.  The goal of \citet{DF13} is finding a parse tree~$Q_{ij}\in\mathcal
U^{\text{small}}$ such that~$G(T_k\cdot_xT_{ij})\cong G(T_k\oplusc
Q_{ij})$ for $k\in\{i,j\}$.  Since this means $G(T_i)\oplusc
G(Q_{ij})\in F_t\iff G(T_j)\oplusc G(Q_{ij})\notin F_t$, this shows that
$G(T_i)$ and~$G(T_j)$ are nonequivalent under~$\sim_{F_t}$ and, thus,
that $\sim_{F_t}$~has infinite index over $\mathcal U^{\text{small}}_t$.

\paragraph{Flaw.}\looseness=-1 \citet{DF13} obtain~$Q_{ij}$ using a so-called Parsing
Replacement Property, which allows replacing parsing operators of the
parse tree by the operator~$\oplusc$ while maintaining isomorphism with
the graph to be parsed.  However, since applying the Parsing Replacement
Property may alter the boundary, its application might violate
isomorphism if applied to the \emph{inner} nodes of the parse tree, like
it is done in the proof.%

\paragraph{Correction.} Despite the flaw, $Q_{ij}$ is easy to
construct for the parsing operators used by \citet{DF13} (precisely
those in \autoref{ourops} for~$\cmax=1$) as follows.  The parse
tree~$T_{ij}$ corresponds to a tree decomposition for the
graph~$G(\emptyset_1\cdot_x T_{ij})$: simply associate each node of
the parse tree with a bag that contains exactly the vertices that are
labeled at that parse tree node.  By merely choosing as the root of
this tree decomposition the bag corresponding to the node labeled~$x$
in~$T_{ij}$, applying the procedure by \citet[Theorem~12.7.1]{DF13},
and permutating the boundary labels accordingly, one obtains from this
tree decomposition a parse tree~$Q_{ij}\in\mathcal U^{\text{small}}$
such that $G(T_k\cdot_xT_{ij})\cong G(T_k\oplusc Q_{ij})$ for
$k\in\{i,j\}$ (when ignoring the boundary).

{\small\setlength{\bibsep}{0pt}
\bibliographystyle{abbrvnat}
\bibliography{trellis}}

\newcommand{\noopsort}[1]{}
\begin{thebibliography}{48}
\providecommand{\natexlab}[1]{#1}
\providecommand{\url}[1]{\texttt{#1}}
\expandafter\ifx\csname urlstyle\endcsname\relax
  \providecommand{\doi}[1]{doi: #1}\else
  \providecommand{\doi}{doi: \begingroup \urlstyle{rm}\Url}\fi

\bibitem[Abrahamson and Fellows(1989)]{FellowsA89}
K.~R. Abrahamson and M.~R. Fellows.
\newblock Cutset regularity beats well-quasi-ordering for bounded treewidth.
\newblock Technical report, Dept.\ Computer Science, University Victoria,
  Canada, 1989.

\bibitem[Abrahamson and Fellows(1991)]{AbrahamsonF91}
K.~R. Abrahamson and M.~R. Fellows.
\newblock Finite automata, bounded treewidth, and well-quasiordering.
\newblock In \emph{Graph Structure Theory}, volume 147 of \emph{Contemporary
  Mathematics}, pages 539--564. American Mathematical Society, 1991.

\bibitem[Bern et~al.(1985)Bern, Lawler, and Wong]{BernLW85}
M.~W. Bern, E.~L. Lawler, and A.~L. Wong.
\newblock Why certain subgraph computations require only linear time.
\newblock In \emph{Proc.\ 26th FOCS}, pages 117--125. {IEEE} Computer Society,
  1985.

\bibitem[Bodlaender(1996)]{Bod96}
H.~L. Bodlaender.
\newblock A linear-time algorithm for finding tree-decompositions of small
  treewidth.
\newblock \emph{SIAM J.\ Comput.}, 25\penalty0 (6):\penalty0 1305--1317, 1996.

\bibitem[Bodlaender and Kloks(1996)]{BodlaenderK96}
H.~L. Bodlaender and T.~Kloks.
\newblock Efficient and constructive algorithms for the pathwidth and treewidth
  of graphs.
\newblock \emph{J.\ Algorithms}, 21\penalty0 (2):\penalty0 358--402, 1996.

\bibitem[Bodlaender et~al.(1992)Bodlaender, Fellows, and Warnow]{BFW92}
H.~L. Bodlaender, M.~R. Fellows, and T.~J. Warnow.
\newblock Two strikes against perfect phylogeny.
\newblock In \emph{Proc.\ 19th ICALP}, volume 623 of \emph{LNCS}, pages
  273--283. Springer, 1992.

\bibitem[Bodlaender et~al.(1994)Bodlaender, Fellows, and Hallett]{BFH94}
H.~L. Bodlaender, M.~R. Fellows, and M.~T. Hallett.
\newblock Beyond {NP}-completeness for problems of bounded width (extended
  abstract): Hardness for the {W} hierarchy.
\newblock In \emph{Proc.\ 26th STOC}, pages 449--458. ACM, 1994.

\bibitem[Bodlaender et~al.(2000)Bodlaender, Fellows, Hallett, Wareham, and
  Warnow]{BFHW00}
H.~L. Bodlaender, M.~R. Fellows, M.~T. Hallett, H.~T. Wareham, and T.~J.
  Warnow.
\newblock The hardness of perfect phylogeny, feasible register assignment and
  other problems on thin colored graphs.
\newblock \emph{Theor.\ Comput.\ Sci.}, 244\penalty0 (1–2):\penalty0
  167--188, 2000.

\bibitem[Bodlaender et~al.(2009)Bodlaender, Fellows, and
  Thilikos]{BodlaenderFT09}
H.~L. Bodlaender, M.~R. Fellows, and D.~M. Thilikos.
\newblock Derivation of algorithms for cutwidth and related graph layout
  parameters.
\newblock \emph{J.\ Comput.\ Syst.\ Sci.}, 75\penalty0 (4):\penalty0 231--244,
  2009.

\bibitem[Borie et~al.(2008)Borie, Parker, and Tovey]{BoriePT08}
R.~B. Borie, R.~G. Parker, and C.~A. Tovey.
\newblock Solving problems on recursively constructed graphs.
\newblock \emph{{ACM} Comput. Surv.}, 41\penalty0 (1), 2008.

\bibitem[Cahoon and Sahni(1983)]{CahoonS83}
J.~Cahoon and S.~Sahni.
\newblock Exact algorithms for special cases of the board permutation problem.
\newblock In \emph{Proceedings of the 21st Annual Allerton Conference on
  Communication, Control, and Computing}, pages 246--255, 1983.

\bibitem[Courcelle and Engelfriet(2012)]{CE12}
B.~Courcelle and J.~Engelfriet.
\newblock \emph{Graph Structure and Monadic Second-Order Logic---A
  Language-Theoretic Approach}, volume 138 of \emph{Encyclopedia of mathematics
  and its applications}.
\newblock Cambridge University Press, 2012.
\newblock ISBN 978-0-521-89833-1.

\bibitem[Courcelle and Lagergren(1996)]{CourcelleL96}
B.~Courcelle and J.~Lagergren.
\newblock Equivalent definitions of recognizability for sets of graphs of
  bounded tree-width.
\newblock \emph{Mathematical Structures in Computer Science}, 6\penalty0
  (2):\penalty0 141--165, 1996.

\bibitem[Downey and Fellows(1999)]{DF99}
R.~G. Downey and M.~R. Fellows.
\newblock \emph{Parameterized Complexity}.
\newblock Springer, 1999.

\bibitem[Downey and Fellows(2013)]{DF13}
R.~G. Downey and M.~R. Fellows.
\newblock \emph{Fundamentals of Parameterized Complexity}.
\newblock Springer, 2013.

\bibitem[Fellows and Langston(1989)]{FLfocs89}
M.~Fellows and M.~Langston.
\newblock An analogue of the {M}yhill-{N}erode theorem and its use in computing
  finite-basis characterizations.
\newblock In \emph{Proc.\ 30th FOCS}, pages 520--525. IEEE Computer Society,
  1989.

\bibitem[Fellows and Langston(1992)]{FellowsL92}
M.~R. Fellows and M.~A. Langston.
\newblock On well-partial-order theory and its application to combinatorial
  problems of {VLSI} design.
\newblock \emph{SIAM J.\ Discrete Math.}, 5\penalty0 (1):\penalty0 117--126,
  1992.

\bibitem[Fellows and Langston(1994)]{FellowsL94}
M.~R. Fellows and M.~A. Langston.
\newblock On search, decision, and the efficiency of polynomial-time
  algorithms.
\newblock \emph{J.\ Comput.\ Syst.\ Sci.}, 49\penalty0 (3):\penalty0 769--779,
  1994.

\bibitem[Fellows et~al.(2013)Fellows, Jansen, and Rosamond]{FJR13}
M.~R. Fellows, B.~M.~P. Jansen, and F.~Rosamond.
\newblock Towards fully multivariate algorithmics: {Parameter} ecology and the
  deconstruction of computational complexity.
\newblock \emph{Eur.\ J.\ Combin.}, 34\penalty0 (3):\penalty0 541--566, 2013.

\bibitem[Flum and Grohe(2006)]{FG06}
J.~Flum and M.~Grohe.
\newblock \emph{Parameterized Complexity Theory}.
\newblock Springer, 2006.

\bibitem[Fomin et~al.(2009)Fomin, Golovach, and Thilikos]{FominGT09}
F.~V. Fomin, P.~A. Golovach, and D.~M. Thilikos.
\newblock Approximating acyclicity parameters of sparse hypergraphs.
\newblock In \emph{Proc.\ 26th STACS}, volume~3 of \emph{LIPIcs}, pages
  445--456. Schloss Dagstuhl--Leibniz-Zentrum für Informatik, 2009.

\bibitem[Ganian and Hliněný(2010)]{GH10}
R.~Ganian and P.~Hliněný.
\newblock On parse trees and {Myhill–Nerode}-type tools for handling graphs
  of bounded rank-width.
\newblock \emph{Discrete Appl.\ Math.}, 158\penalty0 (7):\penalty0 851--867,
  2010.

\bibitem[Garey et~al.(1978)Garey, Graham, Johnson, and Knuth]{GGJK78}
M.~R. Garey, R.~L. Graham, D.~S. Johnson, and D.~E. Knuth.
\newblock Complexity results for bandwidth minimization.
\newblock \emph{SIAM J.\ Appl.\ Math.}, 34\penalty0 (3):\penalty0 477--495,
  1978.

\bibitem[Gaspers et~al.(2013)Gaspers, Naroditskiy, Narodytska, and
  Walsh]{GNNW13}
S.~Gaspers, V.~Naroditskiy, N.~Narodytska, and T.~Walsh.
\newblock Possible and necessary winner problem in social polls.
\newblock In \emph{Proc.\ AAMAS'13}, pages 1131--1132. IFAAMAS, 2013.

\bibitem[Gavril(1977)]{Gavril77}
F.~Gavril.
\newblock Some {NP}-complete problems on graphs.
\newblock In \emph{Proc.\ 1977 Conf.\ on Inf.\ Sc.\ and Syst.}, pages 91--95.
  Johns Hopkins University, 1977.

\bibitem[Gottlob et~al.(2005)Gottlob, Grohe, Musliu, Samer, and
  Scarcello]{GottlobGMSS05}
G.~Gottlob, M.~Grohe, N.~Musliu, M.~Samer, and F.~Scarcello.
\newblock Hypertree decompositions: Structure, algorithms, and applications.
\newblock In \emph{Proc.\ 31st WG}, volume 3787 of \emph{LNCS}, pages 1--15.
  Springer, 2005.

\bibitem[Gottlob et~al.(2009)Gottlob, Mikl{\'o}s, and Schwentick]{GottlobMS09}
G.~Gottlob, Z.~Mikl{\'o}s, and T.~Schwentick.
\newblock Generalized hypertree decompositions: {NP}-hardness and tractable
  variants.
\newblock \emph{J.\ ACM}, 56\penalty0 (6), 2009.

\bibitem[Hliněný(2006)]{Hli05}
P.~Hliněný.
\newblock Branch-width, parse trees, and monadic second-order logic for
  matroids.
\newblock \emph{J.\ Comb.\ Theory B}, 96\penalty0 (3):\penalty0 325--351, 2006.

\bibitem[Kolaitis and Vardi(2000)]{KolaitisV00}
P.~G. Kolaitis and M.~Y. Vardi.
\newblock Conjunctive-query containment and constraint satisfaction.
\newblock \emph{J.\ Comput.\ Sci.}, 61\penalty0 (2):\penalty0 302--332, 2000.

\bibitem[Komusiewicz and Niedermeier(2012)]{KN12}
C.~Komusiewicz and R.~Niedermeier.
\newblock New races in parameterized algorithmics.
\newblock In \emph{Proc.\ 37th MFCS}, volume 7464 of \emph{LNCS}, pages 19--30.
  Springer, 2012.

\bibitem[Lagergren and Arnborg(1991)]{LagergrenA91}
J.~Lagergren and S.~Arnborg.
\newblock Finding minimal forbidden minors using a finite congruence.
\newblock In \emph{Proc.\ 18th ICALP}, volume 510 of \emph{LCNS}, pages
  532--543. Springer, 1991.

\bibitem[Lakshmipathy and Winklmann(1986)]{LW86}
N.~Lakshmipathy and K.~Winklmann.
\newblock {``Global''} graph problems tend to be intractable.
\newblock \emph{J.\ Comput.\ Syst.\ Sci.}, 32\penalty0 (3):\penalty0 407--428,
  1986.

\bibitem[Mahajan and Peters(1994)]{MahajanP94}
S.~Mahajan and J.~G. Peters.
\newblock Regularity and locality in $k$-terminal graphs.
\newblock \emph{Discrete Applied Mathematics}, 54\penalty0 (2 - 3):\penalty0
  229 -- 250, 1994.

\bibitem[Marx(2010)]{Marx10}
D.~Marx.
\newblock Approximating fractional hypertree width.
\newblock \emph{ACM T.\ Alg.}, 6\penalty0 (2):\penalty0 29, 2010.

\bibitem[Miller and Sudborough(1991)]{MillerS91}
Z.~Miller and I.~H. Sudborough.
\newblock A polynomial algorithm for recognizing bounded cutwidth in
  hypergraphs.
\newblock \emph{Math.\ Syst.\ Theory}, 24\penalty0 (1):\penalty0 11--40, 1991.

\bibitem[Myhill(1957)]{Myh57}
J.~Myhill.
\newblock Finite automata and representation of events.
\newblock Technical Report WADD TR-57-624, Wright-Patterson Air Force Base,
  Ohio, USA, 1957.

\bibitem[Nagamochi(2012)]{Nagamochi12}
H.~Nagamochi.
\newblock Linear layouts in submodular systems.
\newblock In \emph{Proc.\ 23rd ISAAC}, volume 7676 of \emph{LNCS}, pages
  475--484. Springer, 2012.

\bibitem[Nerode(1958)]{Ner58}
A.~Nerode.
\newblock Linear automaton transformations.
\newblock \emph{P.\ Am.\ Math.\ Soc.}, 9\penalty0 (4):\penalty0 541--544, 1958.

\bibitem[Niedermeier(2006)]{Nie06}
R.~Niedermeier.
\newblock \emph{Invitation to Fixed-Parameter Algorithms}.
\newblock Oxford University Press, 2006.

\bibitem[Niedermeier(2010)]{Nie10}
R.~Niedermeier.
\newblock Reflections on multivariate algorithmics and problem
  parameterization.
\newblock In \emph{Proc.\ 27th STACS}, volume~5 of \emph{LIPIcs}, pages 17--32.
  Schloss Dagstuhl--Leibniz-Zentrum für Informatik, 2010.

\bibitem[Prasad et~al.(1999)Prasad, Chong, and Keutzer]{PCK99}
M.~R. Prasad, P.~Chong, and K.~Keutzer.
\newblock Why is {ATPG} easy?
\newblock In \emph{Proc.\ 36th DAC}, pages 22--28. ACM, 1999.

\bibitem[Samer and Szeider(2010)]{SamerS10}
M.~Samer and S.~Szeider.
\newblock Constraint satisfaction with bounded treewidth revisited.
\newblock \emph{J.\ Comput.\ Sci.}, 76\penalty0 (2):\penalty0 103--114, 2010.

\bibitem[Thilikos et~al.(2005)Thilikos, Serna, and Bodlaender]{ThilikosSB05}
D.~M. Thilikos, M.~J. Serna, and H.~L. Bodlaender.
\newblock Cutwidth~{I}: A linear time fixed parameter algorithm.
\newblock \emph{J.\ Algorithms}, 56\penalty0 (1):\penalty0 1--24, 2005.

\bibitem[van Bevern et~al.(2013)van Bevern, Fellows, Gaspers, and
  Rosamond]{BevernFGR13}
R.~van Bevern, M.~R. Fellows, S.~Gaspers, and F.~A. Rosamond.
\newblock {Myhill-Nerode} methods for hypergraphs.
\newblock In \emph{Proc.\ 24th ISAAC}, volume 8283 of \emph{LNCS}, pages
  372--382. Springer, 2013.

\bibitem[Wang et~al.(2001)Wang, Clarke, Zhu, and Kukula]{WCYK01}
D.~Wang, E.~Clarke, Y.~Zhu, and J.~Kukula.
\newblock Using cutwidth to improve symbolic simulation and {Boolean}
  satisfiability.
\newblock In \emph{Proc.\ 6th HLDVT}, pages 165--170. IEEE, 2001.

\bibitem[Wimer(1987)]{Wimer87}
T.~V. Wimer.
\newblock \emph{Linear Algorithms on $k$-Terminal Graphs}.
\newblock PhD thesis, Clemson University, 1987.

\bibitem[Wimer et~al.(1985)Wimer, Hedetniemi, and Laskar]{WimerHL85}
T.~V. Wimer, S.~T. Hedetniemi, and R.~Laskar.
\newblock A methodology for constructing linear graph algorithms.
\newblock \emph{Congressus Numerantium}, 50:\penalty0 43--60, 1985.

\bibitem[Yao(1979)]{Yao79}
A.~C. Yao.
\newblock Some complexity questions related to distributed computing.
\newblock In \emph{Proc.\ 11th STOC}, pages 209--213. ACM, 1979.

\end{thebibliography}

\end{document}